\documentclass[a4paper,USenglish, cleveref, autoref, thm-restate, nolineno]{socg-lipics-v2021}

\usepackage{tcolorbox} 
\usepackage{algorithm}
\usepackage{algorithmicx}
\usepackage{algpseudocode}
\usepackage{multirow}
\usepackage{tikz}
\usetikzlibrary{calc}

\usepackage{hyperref}
\hypersetup{
    colorlinks=true,
    linkcolor=red,
    filecolor=magenta,
    citecolor = blue,
    urlcolor=cyan,
    linktocpage,
    }

\algtext*{EndWhile}% Remove "end while" text
\algtext*{EndIf}% Remove "end if" text
\algtext*{EndFor}% Remove "end for" text

\title{Approximating Convex Hulls via Range Queries}

\author{Thomas Schibler}{University of California, Santa Barbara, USA}{tschibler@ucsb.edu}{0009-0008-2966-9468}{}
\author{Jie Xue}{New York University Shanghai, China}{jiexue@nyu.edu}{0000-0001-7015-1988}{}
\author{Jiumu Zhu}{New York University Shanghai, China}{jz5348@nyu.edu}{0000-0001-9554-5908}{}

\authorrunning{T. Schibler, J. Xue, J. Zhu}
\ccsdesc[100]{Theory of computation~Computational geometry} 
\ccsdesc[100]{Theory of computation~Design and analysis of algorithms}
\keywords{convex hull, range searching}

\usepackage{xifthen}

\hideLIPIcs

\begin{document}

%\pagenumbering{gobble}

\maketitle

\begin{abstract}
Recently, motivated by the rapid increase of the data size in various applications, Monemizadeh [APPROX'23] and Driemel, Monemizadeh, Oh, Staals, and Woodruff [SoCG'25] studied geometric problems in the setting where the only access to the input point set is via querying a range-search oracle.
Algorithms in this setting are evaluated on two criteria: (i) the number of queries to the oracle and (ii) the error of the output.
In this paper, we continue this line of research and investigate one of the most fundamental geometric problems in the oracle setting, i.e., the convex hull problem.

Let $P$ be an unknown set of points in $[0,1]^d$ equipped with a range-emptiness oracle.
Via querying the oracle, the algorithm is supposed to output a convex polygon $C \subseteq [0,1]^d$ as an estimation of the convex hull $\mathcal{CH}(P)$ of $P$.
The error of the output is defined as the volume of the symmetric difference $C \oplus \mathcal{CH}(P) = (C \backslash \mathcal{CH}(P)) \cup (\mathcal{CH}(P) \backslash C)$.
We prove tight and near-tight tradeoffs between the number of queries and the error of the output for different variants of the problem, depending on the type of the range-emptiness queries and whether the queries are non-adaptive or adaptive.
\smallskip
\begin{itemize}
    \item \textbf{Orthogonal emptiness queries in $d$-dimensional space:} \\
    We show that the minimum error a deterministic algorithm can achieve with $q$ queries is $\Theta(q^{-1/d})$ if the queries are non-adaptive, and $\Theta(q^{-1/(d-1)})$ if the queries are adaptive.
    In particular, in 2D, the bounds are $\Theta(1/\sqrt{q})$ and $\Theta(1/q)$ for non-adaptive and adaptive queries, respectively.
    \smallskip
    
    \item \textbf{Halfplane emptiness queries in 2D:} \\
    We show that the minimum error a deterministic algorithm can achieve with $q$ queries is $\Theta(1/\sqrt{q})$ if the queries are non-adaptive, and $\widetilde{\Theta}(1/q^2)$ if the queries are adaptive.
    Here $\widetilde{\Theta}(\cdot)$ hides logarithmic factors.
\end{itemize}
\end{abstract}

%\pagenumbering{arabic}

\section{Introduction}

Classic algorithms are designed to compute solutions to a problem instance by processing the \textit{entire} input dataset.
These algorithms can suffer two potential drawbacks.
The first drawback concerns \textit{efficiency}.
As all of the input data has to be received and examined, the time complexity of such algorithms (even the most efficient ones) is at least linear in the input size.
However, due to the rapidly growing size of the datasets involved in real-world applications nowadays, linear running time is already not satisfactory in many scenarios.
The second drawback regards \textit{data privacy}.
For security reasons, sometimes the users would like to keep their data private and therefore cannot directly provide the exact dataset to the algorithm.
Instead, they can only provide partial and implicit information of the dataset and ask the algorithm to give useful results based on the information provided.
In this situation, the classic algorithms no longer work due to lack of full information of the dataset.

Motivated by removing these drawbacks, researchers have studied algorithms in the \textit{oracle model}.
In this model, the algorithm does not have direct access to the input dataset. Instead, a provided oracle built on the dataset can answer certain queries about the dataset.
The algorithm is required to compute a good solution by performing queries to the oracle.
Algorithms in this setting are usually evaluated on two criteria: (i) the number of queries to the oracle and (ii) the error of the output.

For geometric problems, various types of oracles have been considered in the literature~\cite{chazelle2003sublinear,czumaj2001property,czumaj2005approximating,driemel2025range,monemizadeh2023facility}.
When the input dataset is a set $P$ of points in a Euclidean space $\mathbb{R}^d$, a natural type is the \textit{range-search} oracle.
A query to a range-search oracle is a \textit{range} $Q$ in the space $\mathbb{R}^d$ of a specific shape, and the oracle will return certain information about the points in $P \cap Q$.
For example, a \textit{range-counting} oracle returns $|P \cap Q|$~\cite{afshani2007approximate,chan2011orthogonal}, a \textit{range-emptiness} oracle returns whether $P \cap Q = \emptyset$ or not~\cite{erickson1997space,sharir2011semialgebraic}, a \textit{range-reporting} oracle returns the set $P \cap Q$ itself~\cite{agarwal1995dynamic,ramos1999range}, etc.
Range-search oracles have the following advantages.
First, range search can usually be implemented very efficiently.
As a fundamental topic in Computational Geometry, range search has been extensively studied over decades and many efficient data structures have been proposed for various range queries~\cite{agarwal1999geometric,agarwal1994range,bentley1980efficient,chan2011orthogonal,chazelle1990lower,matouvsek1994geometric}.
Furthermore, most types of range search (except range reporting) do not reveal the exact data points inside the query range $Q$, and therefore data privacy is well guaranteed.
%are designed to process the entire input to get exact or approximate
%solutions, the increasing demand for efficiency in massive datasets necessitates approaches that can provide useful results without examining all of the input data.
%Motivated by this, a wide range of sublinear-time algorithms has been extensively studied over the past
Recently, Monemizadeh~\cite{monemizadeh2023facility} and the authors of \cite{driemel2025range} studied multiple geometric problems with range-search oracles, following earlier work of Czumaj and Sohler~\cite{czumaj2001property} and the authors of \cite{czumaj2005approximating}.
Problems considered include facility location~\cite{monemizadeh2023facility}, clustering~\cite{czumaj2001property}, Euclidean minimum spanning tree~\cite{czumaj2005approximating,driemel2025range}, earth mover distance~\cite{driemel2025range}, etc.
They show that for all these problems, one can obtain nontrivial approximation solutions via a small number of range-search queries.

In this paper, we continue this line of research and investigate one of the most fundamental geometric problems, the \textit{convex hull} problem~\cite{avis1995good,barber1996quickhull,brown1979voronoi,chan1996optimal,seidel1981convex,yao1981lower}, in the range-search oracle model.
In the convex hull problem, the goal is to compute the \textit{convex hull} of a set $P$ of points in $\mathbb{R}^d$, denoted by $\mathcal{CH}(P)$, which by definition is the smallest convex body in $\mathbb{R}^d$ containing $P$.
We study the problem with the simplest type of range-search oracles, i.e., \textit{range-emptiness} oracles.
We are interested in finding a convex body $C^*$ as an approximation of $\mathcal{CH}(P)$ via a small number of range-emptiness queries.
To this end, we need a measure for the error of the approximation.
The most natural measure one can use is the relative error $\lVert C^* \oplus \mathcal{CH}(P) \rVert\ /\ \lVert \mathcal{CH}(P) \rVert$, where $\lVert \cdot \rVert$ denotes the volume and $C^* \oplus \mathcal{CH}(P) = (C^* \backslash \mathcal{CH}(P)) \cup (\mathcal{CH}(P) \backslash C^*)$ is the symmetric difference between $C^*$ and $\mathcal{CH}(P)$.
Unfortunately, one can easily see that it is impossible to approximate the convex hull with any bounded relative error no matter how many range-emptiness queries the algorithm performs\footnote{Even in $\mathbb{R}^1$, there is no way to check whether $P$ only contains one point, in which case $\lVert \mathcal{CH}(P) \rVert = 0$, or $P$ contains at least two points, in which case $\lVert \mathcal{CH}(P) \rVert > 0$, via range-emptiness queries.}.
Therefore, we shall instead consider the additive error $\lVert C^* \oplus \mathcal{CH}(P) \rVert$.
Clearly, the additive error depends on the extent measure of $P$.
So we need an extra normalization assumption: we require all points in $P$ to lie in the unit hypercube $[0,1]^d$.
Below we formulate the problem to be studied.

\begin{tcolorbox}[title = {\textsc{Approximate Convex Hull} in $\mathbb{R}^d$ via range queries}]
\textbf{Input:} A (black-box) range-emptiness oracle $\mathcal{O}$ on a set $P$ of points in $[0,1]^d$ \\
\textbf{Output:} A convex body $C^*$ that approximates $\mathcal{CH}(P)$
\end{tcolorbox}

Regarding the above problem, a natural question concerns the tradeoff between the number of queries and the (additive) error of the output: if the algorithm is allowed to perform $q$ queries to the oracle $\mathcal{O}$, what is the minimum error it can achieve (in the worst case)?
The answer to this question depends on the following two features of the queries.
\smallskip
\begin{itemize}
    \item \textbf{Shape of the query ranges.}
    %The shape of query ranges matters a lot in range-search problems.
    Range queries with different shapes might behave very differently.
    The most commonly used queries are orthogonal queries~\cite{chan2011orthogonal}, where the query ranges are axis-parallel rectangles/boxes.
    Besides these, well-studied range queries include halfplane/halfspace queries~\cite{afshani2009optimal,edelsbrunner1986halfplanar}, simplex queries~\cite{haussler1986epsilon}, semi-algebraic queries~\cite{agarwal1994range,sharir2011semialgebraic}, etc.
    \smallskip

    \item \textbf{Adaptivity of the queries.}
    In the non-adaptive query model, the algorithm must make all its queries at once, and then make its estimation based on the batched answer.
    In the adaptive query model, the algorithm is allowed to make queries at any time and the oracle will provide the answer immediately.
    In particular, the next query can be made after seeing the answers of the previous queries.
    %a query, see the result, and then choose its next query accordingly, thus giving the algorithm the power to tailor its queries to each particular input.
\end{itemize}

As the main contribution of this paper, we prove tight and near-tight tradeoffs between the number of queries and the error of the output for \textsc{Approximate Convex Hull} with orthogonal emptiness queries in $\mathbb{R}^d$ for any fixed $d$ and halfplane emptiness queries in $\mathbb{R}^2$, in both non-adaptive and adaptive query models.
\smallskip
\begin{itemize}
    \item \textbf{Orthogonal queries.}
    We show the minimum error a deterministic algorithm can achieve for \textsc{Approximate Convex Hull} in $\mathbb{R}^d$ with $q$ orthogonal emptiness queries is $\Theta(q^{-1/d})$ if the queries are non-adaptive and is $\Theta(q^{-1/(d-1)})$ if the queries are adaptive.
    In particular, in 2D, the bounds are $\Theta(1/\sqrt{q})$ and $\Theta(1/q)$ for non-adaptive and adaptive queries, respectively.
    \smallskip
    
    \item \textbf{Halfplane queries.}
    We show the minimum error a deterministic algorithm can achieve for \textsc{Approximate Convex Hull} in $\mathbb{R}^2$ with $q$ halfplane emptiness queries is $\Theta(1/\sqrt{q})$ if the queries are non-adaptive and is $\widetilde{\Theta}(1/q^2)$ if the queries are adaptive\footnote{The notation $\widetilde{\Theta}(\cdot)$ hides factors logarithmic in $q$.}.
    \smallskip
\end{itemize}

\noindent
Table~\ref{tab-result} summarizes the tradeoffs we prove in this paper. All of our algorithms are deterministic and have offline time complexity (outside the oracle) polynomial in $q$. Our lower bounds similarly hold for deterministic algorithms.

\begin{table}[h]
    \centering
    \renewcommand\arraystretch{1.2}
    \begin{tabular}{c|c|c|c|c|c}
        \hline
        Query shape & Space & Query type & Upper bound & Lower bound & Source\\
        \hline
        \multirow{2}{*}{Orthogonal} &  \multirow{2}{*}{$\mathbb{R}^d$} & Non-adaptive & $O(q^{-1/d})$ & $\Omega(q^{-1/d})$ & Theorems~\ref{thm-nonadapt_orth_ub} and~\ref{thm-nonadapt_orth_lb} \\
        \cline{3-6}
        & & Adaptive & $O(q^{-1/(d-1)})$ & $\Omega(q^{-1/(d-1)})$ & Theorems~\ref{thm-adapt_orth_ub} and~\ref{thm-adapt_orth_lb} \\
        \hline
        \multirow{2}{*}{Halfplane} &  \multirow{2}{*}{$\mathbb{R}^2$} & Non-adaptive & $O(1/\sqrt{q})$ & $\Omega(1/\sqrt{q})$ & Theorems~\ref{thm-nonadapt_hplane_ub} and~\ref{thm-nonadapt_hplane_lb} \\
        \cline{3-6}
        & & Adaptive & $\widetilde{O}(1/q^2)$ & $\Omega(1/q^2)$ & Theorems~\ref{thm-adapt_hplane_ub} and~\ref{thm-adapt_hplane_lb} \\        
        \hline
    \end{tabular}
    \caption{Minimum error one can achieve for \textsc{Approximate Convex Hull} with $q$ queries.}
    \label{tab-result}
\end{table}

\vspace{-1cm}

\subparagraph{Related work.}
Both convex hulls and range search have been extensively studied in the literature.
See~\cite{agarwal2017range,seidel2017convex} for surveys of these topics.
%Providing an exhaustive summary of these topics is of course out of the scope of this paper.
Problems related to convex hulls have also been considered in oracle models prior to this paper.
For example, the celebrated work of Chazelle, Liu, and Magen~\cite{chazelle2003sublinear} considered the problem of approximating the volume of the convex hull in 2D and 3D via a sampling oracle, which can uniformly sample an input point.
Czumaj and Sohler~\cite{czumaj2001property} studied the problem of testing convex position via range queries.
Here the goal is to check whether one can remove an $\varepsilon$-fraction of the points from the input point set to make it in convex position.
The oracle used in~\cite{czumaj2001property} is slightly stronger than range-emptiness oracles --- it can report one point in the query range.

\section{Preliminaries}

\subparagraph{Basic notations.}
For a number $n \in \mathbb{N}$, we write $[n] = \{1,\dots,n\}$.
For a convex body $C$ in $\mathbb{R}^d$, we denote by $\lVert C \rVert$ its volume.
The notation $| \cdot |$ is used with different meanings depending on the context.
For a segment $\sigma$, we use $|\sigma|$ to denote the length of $\sigma$.
For an angle $\alpha$, $|\alpha| \in [0,2\pi)$ is the magnitude of $\alpha$.
Furthermore, for a vector $\vec{v}$, $|\vec{v}|$ denotes its magnitude.

\subparagraph{Vectors and halfspaces.}
%For two (different) unit vectors $\vec{u},\vec{v} \in \mathbb{S}^1$, let $\mathsf{ang}(\vec{u},\vec{v})$ denote the magnitude of the angle between $\vec{u}$ and $\vec{v}$ with counterclockwise boundary $\vec{u}$ and clockwise boundary $\vec{v}$.
For a halfspace $H \subset \mathbb{R}^d$, denote by $\partial H$ its bounding hyperplane.
(More generally, $\partial C$ denotes the boundary of any convex body $C$.)
The \textit{normal vector} of a halfspace $H$ is a unit vector $\vec{v} \in \mathbb{S}^{d-1}$ perpendicular to $\partial H$ so that for any $q \in \partial H$, $H = \{p \in \mathbb{R}^{d}: \langle p,\vec{v} \rangle \leq \langle q,\vec{v} \rangle \}$.

\section{Approximating convex hulls via orthogonal queries}

In this section, we present our results for \textsc{Approximate Convex Hull} in $\mathbb{R}^d$ via orthogonal emptiness queries.
For non-adaptive queries (Section~\ref{sec-nonadaptorth}), the algorithm is very simple, while the lower bound proof is nontrivial and interesting.
For adaptive queries (Section~\ref{sec-adaptorth}), the algorithm is more technical and the lower bound proof is simpler.

\subsection{Non-adaptive orthogonal queries} \label{sec-nonadaptorth}

We first present our algorithmic result with non-adaptive orthogonal emptiness queries.
Let $P \subseteq [0,1]^d$ be an unknown set of points and $\mathcal{O}$ be the orthogonal emptiness oracle on $P$.
For an axis-parallel box $\Box$ in $\mathbb{R}^d$, let $\textsc{Query}(\Box)$ denote the output of $\mathcal{O}$ when queried with $\Box$, which is $\mathsf{yes}$ if $P \cap \Box = \emptyset$ and is $\mathsf{no}$ if $P \cap \Box \neq \emptyset$.

\begin{algorithm}
    \caption{\textsc{NonAdaptiveOrthogonal}$(q)$}
    \begin{algorithmic}[1]
        \State $r \leftarrow \lfloor q^{1/d} \rfloor$
        \State $\varGamma \leftarrow \textsc{Partition}([0,1]^d,r)$
        \State $A_\Box \leftarrow \textsc{Query}(\Box)$ for all $\Box \in \varGamma$
        \State $\varGamma_1 = \{\Box \in \varGamma: A_\Box = \mathsf{no}\}$
        \State $C^* \leftarrow \mathcal{CH}(\bigcup_{\Box \in \varGamma_1} \Box)$
        \State \textbf{return} $C^*$
    \end{algorithmic}
    \label{alg-nonadabox}
\end{algorithm}

Our algorithm for approximating $\mathcal{CH}(P)$ is very simple (presented in Algorithm~\ref{alg-nonadabox}) and is similar to the algorithm of Czumaj and Sohler~\cite{czumaj2001property} for testing convex position.
Let $r = \lfloor q^{1/d} \rfloor$.
In line~2, the sub-routine $\textsc{Partition}([0,1]^d,r)$ partitions $[0,1]^d$ evenly into $r^d$ \textit{cells} each of which is a hypercube of side-length $\frac{1}{r}$; let $\varGamma$ be the set of the $r^d$ cells.
Then we query $\mathcal{O}$ with the boxes in $\varGamma$, and let $A_\Box = \textsc{Query}(\Box)$ for $\Box \in \varGamma$.
Define $\varGamma_1$ as the set of cells $\Box \in \varGamma$ with $A_\Box = \mathsf{no}$ (which are just the cells containing at least one point in $P$).
Finally, the algorithm simply returns $C^* = \mathcal{CH}(\bigcup_{\Box \in \varGamma_1} \Box)$.
Clearly, the number of queries to $\mathcal{O}$ is at most $q$, and they are non-adaptive as the algorithm performs them at the same time.

To bound the error of our algorithm, we need the following lemma about the volume of the Minkowski sum of a convex body and a ball.
For two convex bodies $X$ and $Y$ in $\mathbb{R}^d$, denote by $X+Y = \{x+y: x \in X \text{ and } y \in Y\}$ their Minkowski sum.

\begin{lemma} \label{lem-minkowski}
    For constant $d$, let $C \subseteq [0,1]^d$ be a convex body and $B_\delta$ be the ball with radius $\delta \in [0,1]$ centered at the origin of $\mathbb{R}^d$.
    Then $\lVert (C+ B_\delta) \backslash C \rVert = O(\delta)$ for all $\delta \in [0,1]$.
\end{lemma}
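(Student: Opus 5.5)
The plan is to use the Steiner formula, or an elementary stand-in for it, together with the monotonicity of surface area under inclusion of convex bodies. By the Steiner formula, for a convex body $C \subseteq \mathbb{R}^d$ we have
\[
\lVert C + B_\delta \rVert = \sum_{i=0}^{d} \kappa_{d-i} W_i(C)\, \delta^{d-i},
\]
where the intrinsic volumes are monotone under inclusion and $\kappa_j$ is the volume of the unit $j$-ball. Since $C \subseteq [0,1]^d$, each intrinsic volume $V_i(C)$ is at most $V_i([0,1]^d) = \binom{d}{i}$, a constant for fixed $d$. Subtracting the $i=d$ term $\lVert C\rVert$ leaves a polynomial in $\delta$ with no constant term and bounded coefficients. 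For $\delta \in [0,1]$ each term is at most a constant times $\delta$, so the total is $O(\delta)$.

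To keep the argument self-contained rather than cite the Steiner formula, I would argue directly, in three steps.
\begin{enumerate}
  \item \emph{Reduce to a boundary shell.} Every point of $(C+B_\delta)\setminus C$ lies within distance $\delta$ of $\partial C$ and outside $C$. If $C$ is lower dimensional (zero volume), then $C+B_\delta$ lies in a $\delta$-neighborhood of a $(d-1)$-dimensional convex set inside the unit cube. Its volume is then $O(\delta)$ by a direct bound: the set lies in a slab of width $2\delta$ intersected with a box of constant diameter. So assume $C$ is full dimensional.
  \item \emph{Integrate over the distance.} Write $C_t = C + B_t$. The derivative of $\lVert C_t\rVert$ in $t$ is the surface area $S(C_t)$, which holds for convex bodies through the standard coarea / nearest-point-map argument. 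Hence
  \[
  \lVert (C+B_\delta)\setminus C\rVert = \int_0^\delta S(C_t)\,dt.
  \]
  \item \emph{Bound the surface area.} Each $C_t$ with $t \le 1$ is a convex body contained in $[-1,2]^d$. Surface area is monotone under inclusion of convex bodies, which follows from the Cauchy projection formula: surface area is a constant times the average projected $(d-1)$-volume, and projections are monotone. Hence $S(C_t) \le S([-1,2]^d) = 2d\cdot 3^{d-1}$. Integrating gives the bound $2d\,3^{d-1}\delta = O(\delta)$.
\end{enumerate}

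The main obstacle is justifying the derivative identity and the surface-area monotonicity rigorously. Simply invoking the Steiner formula with monotonicity of quermassintegrals sidesteps both, and that is what I would do in the write-up, citing them as standard facts from convex geometry.

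A cruder fallback avoids surface area entirely. Cover $\partial C$ by grid cells of side $\delta$: a convex boundary inside the unit cube meets only $O(\delta^{-(d-1)})$ such cells, since each axis-parallel line meets $\partial C$ at most twice. Each cell's $\delta$-neighborhood has volume $O(\delta^d)$, which again yields $O(\delta)$.
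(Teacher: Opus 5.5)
Your main argument is correct and is essentially the paper's proof: expand $\lVert C+B_\delta\rVert$ by the Steiner formula, bound each coefficient by its value for $[0,1]^d$ using monotonicity of quermassintegrals (intrinsic volumes), and use $\delta^i\le\delta$ for $\delta\in[0,1]$ after removing the $\lVert C\rVert$ term. Two minor points that do not affect the Steiner-based write-up: with weights $\kappa_{d-i}\delta^{d-i}$ the coefficients should be the intrinsic volumes $V_i(C)$ rather than $W_i(C)$, and your grid-cell fallback needs more than ``each axis-parallel line meets $\partial C$ at most twice'' to bound the number of boundary cells per column (for instance, charging each boundary point to a coordinate direction in which its outer normal is large).
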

\begin{proof}
By Steiner's formula, $\lVert (C+ B_\delta) \rVert = \sum_{i=1}^d (W_i(C) \cdot \delta^i)$, where $W_i(C)$ is called the $i$-th \textit{quermassintegral} $C$.
The quermassintegral satisfies the following two conditions:
\smallskip
\begin{itemize}
    \item $W_0(X) = \lVert X \rVert$ for any convex body $X \subseteq \mathbb{R}^d$,
    \smallskip
    
    \item $W_i(X) \leq W_i(X')$ for any $i \in [d]$ and convex bodies $X,X' \subseteq \mathbb{R}^d$ satisfying $X \subseteq X'$.
    \smallskip
    
\end{itemize}
Therefore, $\lVert (C+ B_\delta) \rVert \leq \lVert C \rVert+ \rho \sum_{i=1}^d \delta^i$ where $\rho = \max_{i=1}^d W_i([0,1]^d)$ is a constant.
As $\delta \in [0,1]$, we have $\lVert (C+ B_\delta) \rVert = \lVert C \rVert+O(\delta)$.
\end{proof}

By construction, we have $\mathcal{CH}(P) \subseteq C^*$.
On the other hand, we observe that $C^* \subseteq \mathcal{CH}(P)+B$, where $B$ is the ball in $\mathbb{R}^d$ centered at the origin with radius $\frac{d}{r}$.
As $C^*$ is a polytope, it suffices to show that every vertex of $C^*$ lies in $\mathcal{CH}(P)+B$.
A vertex $z$ of $C^*$ is a corner of some $\Box \in \varGamma_1$.
Since $A_\Box = \mathsf{no}$, there exists a point $p \in P \cap \Box$.
The distance between $z$ and $p$ is at most $\frac{d}{r}$, as the side-length of $\Box$ is $\frac{1}{r}$.
Thus, $z \in p+B$ and $z \in \mathcal{CH}(P)+B$.
It follows that $\lVert C^* \backslash \mathcal{CH}(P) \rVert \leq \lVert (\mathcal{CH}(P)+B) \backslash \mathcal{CH}(P) \rVert$.
By Lemma~\ref{lem-minkowski}, $\lVert (\mathcal{CH}(P)+B) \backslash \mathcal{CH}(P) \rVert = O(\frac{1}{r})$.
So we have $\lVert C^* \backslash \mathcal{CH}(P) \rVert = O(\frac{1}{r}) = O(q^{-1/d})$.
\begin{theorem} \label{thm-nonadapt_orth_ub}
    There exists an algorithm for \textnormal{\sc Approximate Convex Hull} in $\mathbb{R}^d$ that performs $q$ non-adaptive orthogonal emptiness queries and has error $O(q^{-1/d})$.
\end{theorem}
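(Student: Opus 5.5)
The plan is to analyze Algorithm~\ref{alg-nonadabox} directly. First, the query count. With $r=\lfloor q^{1/d}\rfloor$, the partition $\varGamma$ has $r^d\le q$ cells and every query is issued in a single batch, so the algorithm uses at most $q$ non-adaptive queries. We may assume $q\ge 2^d$, so $r\ge 2$ and $1/r\le 2q^{-1/d}$. For smaller $q$ the error is trivially at most $1=O(q^{-1/d})$, since any output inside $[0,1]^d$ works.

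Next, the inclusion $\mathcal{CH}(P)\subseteq C^*$. Every $p\in P$ lies in some cell $\Box\in\varGamma$ (the cells cover $[0,1]^d$). That cell answers $\mathsf{no}$, so $\Box\in\varGamma_1$ and $p\in C^*$. Since $C^*$ is convex, it contains $\mathcal{CH}(P)$. Hence the error is exactly $\lVert C^*\backslash\mathcal{CH}(P)\rVert$.

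For the reverse direction, let $B$ be the ball of radius $\sqrt d/r$, which can be loosened to $d/r$ to match the text. I will show $C^*\subseteq\mathcal{CH}(P)+B$.
\begin{itemize}
\item $\mathcal{CH}(P)+B$ is convex, being a Minkowski sum of convex sets.
\item $C^*$ is the convex hull of the corners of the cells in $\varGamma_1$, so it suffices to place each such corner $z$ in $\mathcal{CH}(P)+B$.
\item The cell $\Box$ with corner $z$ contains some $p\in P$, and $|z-p|\le\operatorname{diam}(\Box)=\sqrt d/r$. Thus $z\in p+B\subseteq\mathcal{CH}(P)+B$.
\end{itemize}
It follows that $\lVert C^*\backslash\mathcal{CH}(P)\rVert\le\lVert(\mathcal{CH}(P)+B)\backslash\mathcal{CH}(P)\rVert$.

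Finally, apply Lemma~\ref{lem-minkowski} with $C=\mathcal{CH}(P)\subseteq[0,1]^d$ and $\delta=\sqrt d/r$. This gives $O(1/r)=O(q^{-1/d})$.

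The one technical point is that the lemma requires $\delta\le 1$, which can fail for small $r$. The constant in the lemma's proof only uses $\delta\le 1$ through $\sum_i\delta^i=O(\delta)$. For $\delta\le\sqrt d$ we still have $\sum_i\delta^i=O(\delta)$ with a constant depending on $d$, so the bound survives. Alternatively, one can assume $q\ge d^{d/2}2^d$ and absorb smaller $q$ into the constant. A second technicality is that $P$ might be empty or degenerate, e.g. $\mathcal{CH}(P)$ a point, in which case it is not a full-dimensional convex body. Steiner's bound still holds for compact convex sets by monotonicity of the quermassintegrals, and if $P=\emptyset$ then $C^*=\emptyset$ and the error is $0$. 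I expect no step to be a real obstacle. The only care needed is this bookkeeping of constants and the requirement $\delta\le1$ in the lemma.
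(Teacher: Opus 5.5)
Your proof is correct and is the same argument the paper gives: $\mathcal{CH}(P)\subseteq C^*$, then $C^*\subseteq\mathcal{CH}(P)+B$ because every corner of a nonempty cell lies within one cell diameter of a point of $P$, and finally Lemma~\ref{lem-minkowski} gives the $O(1/r)$ bound. Your additional care (using the exact diameter $\sqrt d/r$ rather than $d/r$, and handling the $\delta\le 1$ hypothesis, small $q$, and degenerate or empty $P$) fills in bookkeeping the paper leaves implicit.
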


Interestingly, the above algorithm, while being very simple, is already the best one can hope.
Specifically, we show that any deterministic algorithm for \textsc{Approximate Convex hull} with $q$ non-adaptive orthogonal emptiness queries has error $\Omega(q^{-1/d})$.

Consider such an algorithm $\mathbf{A}$.
%Let $\Box_1,\dots,\Box_q$ be the $q$ queries applied by $\mathbf{A}$.
Since the $q$ queries $\mathbf{A}$ performs are non-adaptive, these queries are independent of the point set $P$ (as well as the range-emptiness oracle).
Let $\Box_1,\dots,\Box_q$ be these queries, each of which is a box in $\mathbb{R}^d$.
Without loss of generality, we may assume $\Box_1,\dots,\Box_q \subseteq [0,1]^d$.
%Pick a sufficiently large number $c > 0$ that only depends on $d$.
Set $n = \lceil q^{1/d} \rceil +1$ and $\delta = q^{-1/d}/c$ where $c$ is a sufficiently large constant (which only depends on $d$).
Define a sequence $H_0,H_1,\dots,H_n$ of parallel hyperplanes in $\mathbb{R}^d$ where the equation of $H_i$ is $x_1 + \cdots + x_d = 1+\delta i$.
For $i \in [n]$, we say a box $\Box$ is \textit{$i$-good} if $H_i \cap \Box \neq \emptyset$ and $H_{i-1} \cap \Box = \emptyset$. See figure \ref{fig:nonadaptorth_lb}.

\begin{figure}[h!]
\centering
\begin{tikzpicture}[scale=7,>=stealth]

% PARAMETERS -----------------------------------------------------------
\def\t{0.1}
\def\a{0.6}
\def\b{0.42}
\def\w{0.35}
\def\h{0.18}

% UNIT SQUARE ----------------------------------------------------------
\draw[thick] (0,0) rectangle (1,1);

% Corner labels
\node[left]  at (0,1) {$(0,1)$};
\node[below] at (1,0) {$(1,0)$};

% PARALLEL DIAGONALS ---------------------------------------------------
\foreach \i in {1,2,5}{
   \pgfmathsetmacro{\s}{\i*\t}
   \draw[gray!70] (1,\s) -- (\s,1);
}
\draw[gray!70, dashed] (1,\t*3) -- (\t*3,1);
\draw[gray!70, dashed] (1,\t*4) -- (\t*4,1);

\draw[thick, black] (1, \t) -- (\t, 1);
\draw[ultra thick, blue] (\a, 1+ \t - \a) -- (1 + \t - \b, \b) node[blue, midway, right] {$H_1 \cap \Box$};

\draw[<->] (0.75, 0.25) -- (0.8, 0.3) node[midway, xshift=-3pt, yshift=3pt] {$\delta$};

% MAIN DIAGONAL --------------------------------------------------------
% Point set P in red
\draw[ultra thick, dotted, red] (1,0) -- (0,1);
\fill[red] (0.4, 0.7) circle (0.01);
\node[left, red, xshift=-2pt, yshift=-2pt] at (0.4, 0.7) {$P$};

% DIAGONAL LABELS ------------------------------------------------------
\foreach \i in {0,1,2}{
   \pgfmathsetmacro{\s}{\i*\t}
   \draw[gray!70] (1,\s) -- (\s,1)
        node[midway, above, xshift=-2pt] {$H_{\i}$};
}
\draw[gray!70] (1,\t*5) -- (\t*5,1)
        node[midway, above, xshift=-2pt] {$H_{n}$};

% RECTANGLE ------------------------------------------------------------
\draw[thick,black] (\a,\b) rectangle ++(\w,\h);

\end{tikzpicture}

\caption{The lower bound construction in $d = 2$ dimensions for non-adaptive orthogonal queries with underlying points $P$ (red) along the main diagonal $H_0$. The depicted query $\Box$ is $1$-good since its lower left corner lies between $H_0$ and $H_1$. The length of $H_1 \cap \Box$ (blue) is $O(\delta^{d-1})$, so $\Omega(1/\delta^{d-1})$ $1$-good queries are needed to cover $H_1$ in order to determine whether $P$ contains a point on $H_1$. In total, $\Omega(n\delta^{d-1})$ $i$-good queries are needed to cover all $H_1, \cdots, H_n$.}
\label{fig:nonadaptorth_lb}
\end{figure}
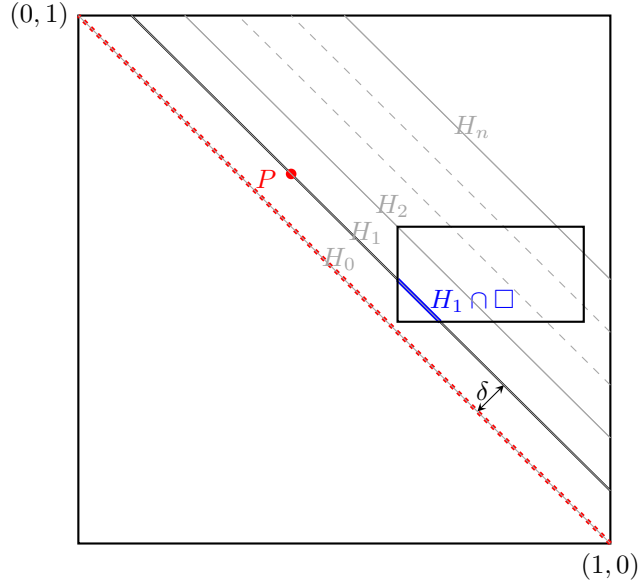

\begin{lemma}
    There exists $i \in [n]$ such that at most $\frac{q}{n}$ boxes in $\{\Box_1,\dots,\Box_q\}$ are $i$-good.
\end{lemma}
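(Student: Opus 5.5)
This is a pigeonhole argument. The plan is to show that every box is $i$-good for at most one index $i \in [n]$. Then the sets $G_i = \{j \in [q] : \Box_j \text{ is } i\text{-good}\}$ for $i \in [n]$ are pairwise disjoint subsets of $[q]$, so $\sum_{i=1}^n |G_i| \le q$. Some $i$ must then satisfy $|G_i| \le \frac{q}{n}$, since otherwise the sum would exceed $n \cdot \frac{q}{n} = q$.

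To prove the uniqueness claim, fix a box $\Box$ and consider the linear functional $f(x) = x_1 + \cdots + x_d$ on $\Box$. Because $\Box$ is compact and convex, hence connected, $f(\Box)$ is a closed interval $[a,b]$: $a$ is attained at the lower corner of $\Box$ and $b$ at its upper corner. We have $H_i \cap \Box \neq \emptyset$ exactly when $1+\delta i \in [a,b]$.

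Suppose $\Box$ is $i$-good. Then $1+\delta i \in [a,b]$ while $1+\delta(i-1) \notin [a,b]$. Since $1+\delta(i-1) < 1+\delta i \le b$, the only possibility is $1+\delta(i-1) < a$. Combining the two, $1+\delta(i-1) < a \le 1+\delta i$, that is, $i = \lceil (a-1)/\delta \rceil$. This determines $i$ uniquely from $\Box$, so no box is $i$-good for two different indices. Put geometrically, the lower corner of an $i$-good box lies strictly between $H_{i-1}$ and $H_i$ (possibly on $H_i$), and these slabs are disjoint for distinct $i$.

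There is no real obstacle here. The only care needed is to handle the interval endpoints correctly, noting that the definition uses closed boxes and that intersection with $H_i$ is equivalent to $1+\delta i$ lying in the closed range $[a,b]$. The rest of the argument is a direct counting step.
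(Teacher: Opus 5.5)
Your proof is correct and follows the paper's approach: show that each box is $i$-good for at most one $i \in [n]$, then apply pigeonhole to the disjoint index sets. Your functional-interval argument, which places the lower-corner sum in $(1+\delta(i-1),\,1+\delta i]$, is a cleaner rendering of the paper's ``same side of $H_{j-1}$'' case analysis and yields the same uniqueness claim.
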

\begin{proof}
Observe that for different $i,j \in [n]$, no box in $\mathbb{R}^d$ can be both $i$-good and $j$-good.
To see this, suppose $i < j$ without loss of generality.
Let $\Box$ be a $j$-good box.
Then $H_j \cap \Box \neq \emptyset$ and $H_{j-1} \cap \Box = \emptyset$.
So $H_j$ and $\Box$ are on the same side of $H_{j-1}$.
If $i = j-1$, then $H_i \cap \Box = \emptyset$ and thus $\Box$ is not $i$-good.
Otherwise, $i \leq j-2$.
In this case, $H_i$ and $\Box$ are on the opposite sides of $H_{j-1}$, which implies $H_i \cap \Box = \emptyset$ and $\Box$ is not $i$-good.
Therefore, any box in $\mathbb{R}^d$ can be $i$-good for at most one $i \in [n]$.
Let $t_i$ be the number of $i$-good boxes in $\{\Box_1,\dots,\Box_q\}$.
It follows that $\sum_{i=1}^n t_i \leq q$ and hence there exists $i \in [n]$ satisfying $t_i \leq q/n$.
\end{proof}

% I removed that q/n = O(q^{1/d}); I believe q/n = q^(1-1/d)
Let $i \in [n]$ such that at most $\frac{q}{n}$ boxes in $\{\Box_1,\dots,\Box_q\}$ are $i$-good.
Without loss of generality, assume $\Box_1,\dots,\Box_t$ are $i$-good and $\Box_{t+1},\dots,\Box_q$ are not $i$-good, where $t \leq \frac{q}{n}$.
Each $H_i \cap \Box_j$ is a $(d-1)$-dimensional convex polytope, and let $\lVert H_i \cap \Box_j \rVert'$ denote its $(d-1)$-dimensional volume.

\begin{lemma} \label{lem-smallpiece}
    For all $j \in [t]$, $\lVert H_i \cap \Box_j \rVert' = O(\delta^{d-1})$.
\end{lemma}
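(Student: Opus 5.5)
The plan is to look at the corner of the box nearest to the origin and use the fact that the box avoids $H_{i-1}$ but meets $H_i$. Write $\Box_j = \prod_{k=1}^d [a_k,b_k]$ and set $f(x) = x_1+\cdots+x_d$. On $\Box_j$, $f$ ranges over the interval $[f(a), f(b)]$ with $a=(a_1,\dots,a_d)$ and $b=(b_1,\dots,b_d)$. Since $H_i \cap \Box_j \neq \emptyset$, we have $f(a) \le 1+\delta i \le f(b)$. Since $H_{i-1}\cap\Box_j=\emptyset$ and the interval $[f(a),f(b)]$ is connected, $1+\delta(i-1)$ lies outside it. It cannot exceed $f(b)$, because then it would exceed $1+\delta i$. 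Hence $f(a) > 1+\delta(i-1)$, i.e.\ the lowest corner $a$ satisfies $1+\delta(i-1) < f(a) \le 1+\delta i$.

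Next we show that $H_i\cap\Box_j$ lies in a small simplex. Any point $x \in H_i \cap \Box_j$ has $x_k \ge a_k$ for all $k$, and
\[
\sum_k (x_k-a_k) = 1+\delta i - f(a) < \delta .
\]
So $H_i\cap\Box_j$ is contained in $S = \{x \in H_i : x_k \ge a_k \ \forall k\}$. This set $S$ is a regular $(d-1)$-simplex: the translate $a + \{y\ge 0: \sum y_k = s\}$ with $s = 1+\delta i - f(a) \in [0,\delta)$.

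The $(d-1)$-volume of $\{y \ge 0: \sum y_k = s\}$ is $\frac{\sqrt d}{(d-1)!}\, s^{d-1}$, and we only need that it scales as $s^{d-1}$ times a constant depending on $d$. Therefore
\[
\lVert H_i\cap\Box_j\rVert' \le \lVert S\rVert' \le \tfrac{\sqrt d}{(d-1)!}\,\delta^{d-1} = O(\delta^{d-1}).
\]

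The only point needing care is the first step: identifying that the hyperplane family is ordered along $f$, and that the minimum of $f$ over the box is attained at the corner $a$. Once that is set up, the containment in a simplex of side proportional to $\delta$ and the volume scaling are immediate.
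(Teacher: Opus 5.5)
Your proof is correct and takes the same route as the paper. The paper bounds $H_i \cap \Box_j$ by $H_i \cap \Box_j'$, where $\Box_j'$ is the orthant $\prod_k [x_k^-,\infty)$ at the lowest corner, and that set is exactly your simplex $S$; you just spell out what the paper states tersely, namely how $i$-goodness gives $1+\delta(i-1) < f(a) \le 1+\delta i$, and the explicit volume $\frac{\sqrt d}{(d-1)!}s^{d-1}$.
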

\begin{proof}
Suppose $\Box_j = [x_1^-,x_1^+] \times \cdots \times [x_d^-,x_d^+]$ and $\Box_j' = [x_1^-,+\infty) \times \cdots \times [x_d^-,\infty)$.
Since $\Box_j$ is $i$-good, we have $H_i \cap \Box_j \neq \emptyset$ and $H_{i-1} \cap \Box_j = \emptyset$, which implies $1+\delta i -\delta \leq x_1^- +\cdots + x_d^- \leq 1+\delta i$.
So the distance from the point $(x_1^-,\dots,x_d^-) \in \mathbb{R}$ to $H_i$ is $O(\delta)$.
This implies $\lVert H_i \cap \Box_j' \rVert' = O(\delta^{d-1})$, and thus $\lVert H_i \cap \Box_j \rVert' = O(\delta^{d-1})$.
\end{proof}

\begin{lemma} \label{lem-avoidance}
    $H_i \cap [0,1]^d \nsubseteq \bigcup_{j=1}^t \Box_j$.
\end{lemma}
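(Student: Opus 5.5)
The plan is a volume comparison. I will lower-bound the $(d-1)$-dimensional volume of $H_i \cap [0,1]^d$ by a positive constant depending only on $d$. Then I will show that the pieces $H_i \cap \Box_j$ for $j \in [t]$ have total $(d-1)$-volume strictly smaller than this constant, once $c$ is large enough. Since a finite union of sets of total measure less than $\lVert H_i \cap [0,1]^d \rVert'$ cannot cover it, the lemma follows.

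\textbf{Step 1: a constant lower bound for $H_i \cap [0,1]^d$.} I first check that $\delta i$ is small. We have $i \le n = \lceil q^{1/d}\rceil + 1 \le q^{1/d}+2 \le 3q^{1/d}$ for $q \ge 1$, so $\delta i \le 3/c$. Taking $c \ge 6d$ gives $1+\delta i \le 1 + 1/(2d)$. Consider the hyperplane $x_1+\cdots+x_d = s$ with $s \in [1, 3/2]$. It contains the point $(s/d,\dots,s/d)$, which lies in $[0,1]^d$ and has distance at least $\min(s/d,\, 1-s/d) \ge \frac{1}{2d}$ from $\partial [0,1]^d$ (using $d \ge 2$). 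So the slice contains a $(d-1)$-ball of radius $\frac{1}{2d}$. Hence $\lVert H_i \cap [0,1]^d \rVert' \ge \kappa_d$ for some constant $\kappa_d > 0$ depending only on $d$.

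\textbf{Step 2: bound the total covered volume.} By Lemma~\ref{lem-smallpiece}, each $\lVert H_i \cap \Box_j \rVert' \le K_d \delta^{d-1}$. I need $K_d$ not to depend on $c$, so I will confirm this from that proof. There the corner $(x_1^-,\dots,x_d^-)$ is within distance $\delta/\sqrt d$ of $H_i$. Then $H_i \cap \Box_j'$ is a scaled copy of a standard simplex of size $O(\delta)$, with constant depending only on $d$. With $t \le q/n \le q^{1-1/d}$ and $\delta^{d-1} = q^{-(d-1)/d} c^{-(d-1)}$, we get
\[
\Big\lVert \bigcup_{j=1}^t (H_i \cap \Box_j) \Big\rVert' \le t K_d \delta^{d-1} \le K_d\, c^{-(d-1)}.
\]
For $d \ge 2$ this is less than $\kappa_d$ when $c$ is large. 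Subadditivity of $(d-1)$-dimensional measure then finishes the proof.

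The main obstacle is only bookkeeping: confirming that the constants $K_d$ and $\kappa_d$ are independent of $c$, and that $\delta i$ stays bounded so that the slice $H_i$ still meets the cube in a region of constant size. The $d=1$ case is degenerate and excluded, since in that case the adaptive and non-adaptive settings coincide.
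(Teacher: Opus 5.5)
Your proposal is correct and follows the paper's proof essentially verbatim: you compare a constant lower bound on $\lVert H_i \cap [0,1]^d \rVert'$ against the total $t \cdot O(\delta^{d-1})$ coming from Lemma~\ref{lem-smallpiece}, and your explicit constants (a $(d-1)$-ball of radius $\frac{1}{2d}$ inside the slice, covered volume at most $K_d c^{-(d-1)}$) just make the paper's $O(1/c)$ bookkeeping precise. One small correction to your closing remark: $d=1$ must be excluded because there $H_i \cap [0,1] = \emptyset$ (since $1+\delta i > 1$), so the statement itself fails; the reason is not that the adaptive and non-adaptive settings coincide.
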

\begin{proof}
Equivalently, we prove that $H_i \cap [0,1]^d \nsubseteq \bigcup_{j=1}^t (H_i \cap \Box_j)$.
As $H_i \cap [0,1]^d$ and $H_i \cap \Box_1,\dots,H_i \cap \Box_t$ are all $(d-1)$-dimensional convex polytopes, it suffices to show that $\lVert H_i \cap [0,1]^d \rVert' > \sum_{j=1}^t \lVert H_i \cap \Box_j \rVert'$.
%Here $\lVert \cdot \rVert$ denotes the $(d-1)$-dimensional volume.
%Recall that the equation of $H_i$ is $x_1 + \cdots + x_d = 1+\delta i$.
% I weakened O(1/c * 1/q^{1/d}) to just O(1/c) here, to account for when i = q^{1/d}. Shouldn't change anything else.
Since $1+\delta i = 1+O(\frac{1}{c})$ and $c$ is sufficiently large, $\lVert H_i \cap [0,1]^d \rVert'$ is lower bounded by a constant depending only on $d$.
By Lemma~\ref{lem-smallpiece}, $\sum_{j=1}^t \lVert H_i \cap \Box_j \rVert' = O(\delta^{d-1} t) = O(\frac{1}{c})$.
Therefore, we have $\lVert H_i \cap [0,1]^d \rVert' > \sum_{j=1}^t \lVert H_i \cap \Box_j \rVert'$, for sufficiently large $c$.
\end{proof}

Next, we construct two sets $Z$ and $Z'$ of points in $\mathbb{R}^d$.
We include in $Z$ the vertices of the polytope $H_{i-1} \cap [0,1]^d$.
Furthermore, for each $j \in [q]$ such that $H_{i-1} \cap \Box_j \neq \emptyset$, we include in $Z$ a point $p_j \in H_{i-1} \cap \Box_j$.
Note that all points in $Z$ lie on $H_{i-1} \cap [0,1]^d$ and we have $\mathcal{CH}(Z) = H_{i-1} \cap [0,1]^d$, so $\|\mathcal{CH}(Z)\| = 0$.
To further construct $Z'$, we pick a point $p \in (H_i \cap [0,1]^d) \backslash (\bigcup_{j=1}^t \Box_j)$, which exists by Lemma~\ref{lem-avoidance}.
Then we define $Z' = Z \cup \{p\}$.
\begin{lemma}
    \label{lem-indistinguish}
    For all $j \in [q]$, $Z \cap \Box_j = \emptyset$ iff $Z' \cap \Box_j = \emptyset$.
\end{lemma}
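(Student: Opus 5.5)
Since $Z \subseteq Z'$, one direction is immediate: if $Z' \cap \Box_j = \emptyset$ then $Z \cap \Box_j = \emptyset$. The content of Lemma~\ref{lem-indistinguish} is therefore the converse. Fix $j \in [q]$ with $Z \cap \Box_j = \emptyset$; we must show $p \notin \Box_j$. Because $Z' = Z \cup \{p\}$, this is the only thing to check. We argue by contradiction: suppose $p \in \Box_j$.

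The first step uses the construction of $Z$. For every box with $H_{i-1} \cap \Box_j \neq \emptyset$, we put into $Z$ a point $p_j \in H_{i-1} \cap \Box_j$. So $Z \cap \Box_j = \emptyset$ forces $H_{i-1} \cap \Box_j = \emptyset$. Next, $p \in H_i \cap \Box_j$, so $H_i \cap \Box_j \neq \emptyset$. By definition, $\Box_j$ is then $i$-good, which means $j \in [t]$. But $p$ was chosen in $(H_i \cap [0,1]^d) \setminus \bigcup_{j'=1}^t \Box_{j'}$, so $p \notin \Box_j$. This is a contradiction, and the lemma follows.

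I do not expect a real obstacle here; the lemma is essentially a bookkeeping consequence of how $Z$ and $p$ were chosen. The one point to state carefully is that the enumeration $\Box_1,\dots,\Box_t$ lists \emph{all} the $i$-good boxes among the $q$ queries, as fixed before Lemma~\ref{lem-smallpiece}. With that explicit, the two cases cover everything: boxes meeting $H_{i-1}$ contain a point of $Z$, and boxes missing $H_{i-1}$ but containing $p$ would be $i$-good, which is excluded.
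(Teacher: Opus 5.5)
Your proof is correct and follows essentially the same route as the paper: you use the same case analysis (a box meeting $H_{i-1}$ contains the point $p_j \in Z$, while a box that misses $H_{i-1}$ but contains $p$ would be $i$-good, which the choice of $p$ excludes). You phrase it as a contradiction starting from $Z \cap \Box_j = \emptyset$, whereas the paper argues directly from $p \in \Box_j$, but the content is identical.
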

\begin{proof}
If $p \notin \Box_j$, then $Z \cap \Box_j = Z' \cap \Box_j$ and we are done.
So assume $p \in \Box_j$.
Then $j \notin [t]$ by the choice of $p$.
Thus, $\Box_j$ is not $i$-good and we have either $H_i \cap \Box_j = \emptyset$ or $H_{i-1} \cap \Box_j \neq \emptyset$.
But $H_i \cap \Box_j \neq \emptyset$ since $p \in H_i \cap \Box_j$.
So we must have $H_{i-1} \cap \Box_j \neq \emptyset$.
By construction, we include in $Z$ the point $p_j \in H_{i-1} \cap \Box_j$, which implies $Z \cap \Box_j \neq \emptyset$ and $Z' \cap \Box_j \neq \emptyset$.
\end{proof}

The above lemma shows that the algorithm $\mathbf{A}$ cannot distinguish $Z$ and $Z'$, that is, it returns the same convex body when running on $P = Z$ and $P = Z'$.
\begin{lemma}
\label{lem-totalvol}
    $\lVert \mathcal{CH}(Z') \backslash \mathcal{CH}(Z) \rVert = \Omega(\delta)$.
\end{lemma}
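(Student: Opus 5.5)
Since $\mathcal{CH}(Z) = H_{i-1} \cap [0,1]^d$ has volume zero, it suffices to show $\lVert \mathcal{CH}(Z') \rVert = \Omega(\delta)$. Because $Z \subseteq Z'$, the hull $\mathcal{CH}(Z')$ contains the pyramid (cone) $K$ with base $F = H_{i-1} \cap [0,1]^d$ and apex $p$. Its volume is $\frac{1}{d}\cdot \lVert F \rVert' \cdot h$, where $h$ is the distance from $p$ to the hyperplane $H_{i-1}$. So the plan is to bound the base area from below by a constant and the height from below by $\Omega(\delta)$.

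\textbf{Height.} Since $p \in H_i$, and $H_i$ and $H_{i-1}$ are the parallel hyperplanes $x_1+\cdots+x_d = 1+\delta i$ and $x_1+\cdots+x_d = 1+\delta(i-1)$, their distance is $h = \delta/\sqrt{d}$. This is $\Omega(\delta)$ for constant $d$.

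\textbf{Base.} $F$ is the section of the unit cube by a hyperplane $x_1+\cdots+x_d = s$ with $s = 1+\delta(i-1)$. Since $i \le n = \lceil q^{1/d}\rceil+1$ and $\delta = q^{-1/d}/c$, we get $s \in [1, 1+O(1/c)]$, exactly as observed in the proof of Lemma~\ref{lem-avoidance}. For $s$ in this range with $c$ large, $F$ contains a fixed $(d-1)$-dimensional region. For instance, it contains the region around the point $(s/d,\dots,s/d)$, whose coordinates lie in $[1/d, 2/d] \subseteq [0,1]$ for $d \ge 2$. Hence there is a $(d-1)$-ball of radius $\Omega(1/d)$ inside $F$, and $\lVert F \rVert' = \Omega(1)$, a constant depending only on $d$.

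Combining the two bounds, $\lVert \mathcal{CH}(Z') \backslash \mathcal{CH}(Z) \rVert = \lVert \mathcal{CH}(Z') \rVert \ge \lVert K \rVert = \Omega(\delta)$. The main point needing care is the uniform constant lower bound on the area of the hyperplane section for all admissible $s$. This reduces to the ball-containment check above: the ball around $(s/d,\dots,s/d)$ within $H_{i-1}$ of radius $1/(2d)$ stays inside $[0,1]^d$ because each coordinate moves by at most $1/(2d)$.
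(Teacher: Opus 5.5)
Your proof is correct and follows the same argument as the paper. $\mathcal{CH}(Z)$ has volume zero, and $\mathcal{CH}(Z')$ contains the cone over $H_{i-1}\cap[0,1]^d$ with apex $p$; its height is $\Theta(\delta)$, and your value $\delta/\sqrt{d}$ is the exact one, whereas the paper writes $\delta$. Its base has area $\Omega(1)$ because $\delta(i-1)=O(1/c)$.

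One small arithmetic slip: the crude range $[1/d,2/d]$ plus radius $1/(2d)$ gives an upper end of $5/(2d)$, which exceeds $1$ when $d=2$. Use the sharper bound $s/d \le (1+O(1/c))/d$ that you already derived, or a smaller radius, to complete the ball-containment check.
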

\begin{proof}
As $\lVert \mathcal{CH}(Z) \rVert = 0$, it suffices to show $\lVert \mathcal{CH}(Z')\rVert = \Omega(\delta)$.
The distance from $p$ to $H_{i-1}$ is $\delta$, since $p \in H_i$.
Thus, $\lVert \mathcal{CH}(Z')\rVert = \Omega(\delta \cdot \lVert H_{i-1} \cap [0,1]^d  \rVert')$.
We have $\lVert H_{i-1} \cap [0,1]^d \rVert' = \Omega(1)$, because the equation of $H_{i-1}$ is $x_1+\cdots+x_d = 1+\delta(i-1)$ and $\delta(i-1) = O(\frac{1}{c})$.
It follows that $\lVert \mathcal{CH}(Z')\rVert = \Omega(\delta)$.
\end{proof}

Let $C^*$ be the output of $\mathbf{A}$ when running on $P = Z$ or $P = Z'$.
We have $\lVert \mathcal{CH}(Z') \backslash \mathcal{CH}(Z) \rVert \leq \lVert C^* \oplus \mathcal{CH}(Z) \rVert + \lVert C^* \oplus \mathcal{CH}(Z') \rVert$.
Therefore, the above lemma implies either $\lVert C^* \oplus \mathcal{CH}(Z) \rVert = \Omega(\delta)$ or $\lVert C^* \oplus \mathcal{CH}(Z') \rVert = \Omega(\delta)$.
As $\delta = \Theta(q^{-1/d})$, the error of $\mathbf{A}$ is $\Omega(q^{-1/d})$.

\begin{theorem}
\label{thm-nonadapt_orth_lb}
    Any deterministic algorithm for \textnormal{\sc Approximate Convex Hull} in $\mathbb{R}^d$ that performs $q$ non-adaptive orthogonal emptiness queries has error $\Omega(q^{-1/d})$.
\end{theorem}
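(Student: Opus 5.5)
The theorem follows by assembling the lemmas just proved into a standard indistinguishability (adversary) argument. Fix any deterministic algorithm $\mathbf{A}$ that makes $q$ non-adaptive orthogonal queries. Since the queries cannot depend on $P$, we fix them in advance as $\Box_1,\dots,\Box_q$. We may clip each box to $[0,1]^d$: this does not change any emptiness answer on point sets contained in $[0,1]^d$.

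Set $n=\lceil q^{1/d}\rceil+1$ and $\delta=q^{-1/d}/c$, and consider the hyperplanes $H_0,\dots,H_n$. The argument then proceeds in five steps.

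\begin{enumerate}
\item Because each box is $i$-good for at most one $i$, the counting lemma gives an index $i\in[n]$ with at most $t\le q/n$ boxes that are $i$-good.
\item By Lemma~\ref{lem-smallpiece}, each $i$-good box meets $H_i$ in $(d-1)$-volume $O(\delta^{d-1})$. The total covered $(d-1)$-volume on $H_i$ is therefore
\[
O\bigl(\delta^{d-1}q/n\bigr)=O\bigl(c^{-(d-1)}q^{-(d-1)/d}\cdot q^{1-1/d}\bigr)=O\bigl(c^{-(d-1)}\bigr).
\]
For $c$ large this is smaller than the $\Omega(1)$ volume of $H_i\cap[0,1]^d$. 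Lemma~\ref{lem-avoidance} then supplies a point $p\in H_i\cap[0,1]^d$ lying in none of the $i$-good boxes.
\item Build $Z$ as the vertices of $H_{i-1}\cap[0,1]^d$ together with one witness point $p_j\in H_{i-1}\cap\Box_j$ for every box meeting $H_{i-1}$. Set $Z'=Z\cup\{p\}$.
\item By Lemma~\ref{lem-indistinguish}, every query returns the same answer on $Z$ and on $Z'$. Hence $\mathbf{A}$, being deterministic, outputs the same body $C^*$ on both inputs.
\item $\mathcal{CH}(Z)$ is flat, while $\mathcal{CH}(Z')$ is a pyramid of height $\delta$ over a base of $(d-1)$-volume $\Omega(1)$. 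Lemma~\ref{lem-totalvol} therefore gives $\lVert\mathcal{CH}(Z')\setminus\mathcal{CH}(Z)\rVert=\Omega(\delta)$.
\end{enumerate}

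To finish, apply the triangle inequality for symmetric difference:
\[
\lVert \mathcal{CH}(Z')\oplus\mathcal{CH}(Z)\rVert\le\lVert C^*\oplus\mathcal{CH}(Z)\rVert+\lVert C^*\oplus\mathcal{CH}(Z')\rVert .
\]
The left side is at least $\Omega(\delta)$, so one of the two inputs forces error $\Omega(\delta)=\Omega(q^{-1/d})$.

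The main technical checkpoint is the choice of constants. The same $c$ must make $\delta n=O(1/c)$ small enough that both $H_{i-1}\cap[0,1]^d$ and $H_i\cap[0,1]^d$ keep $\Omega(1)$ $(d-1)$-volume, while also making the total covered area on $H_i$ fall below that bound. Both conditions hold once $c$ is a sufficiently large constant depending only on $d$.

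A second point to confirm is that $Z'\subseteq[0,1]^d$ is a legal input. This holds because $p$ was chosen inside $H_i\cap[0,1]^d$. Everything else is immediate from the preceding lemmas.
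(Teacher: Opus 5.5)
Your proposal is correct and is essentially the paper's own argument. You choose the index $i$ with at most $q/n$ $i$-good boxes, build $Z$ on $H_{i-1}$ with one witness per box meeting $H_{i-1}$, add a point $p\in H_i\cap[0,1]^d$ that avoids all the good boxes, and conclude by indistinguishability plus the triangle inequality for symmetric difference. Your $O(c^{-(d-1)})$ bound on the covered area is slightly sharper than the paper's $O(1/c)$, and, as in the paper, the pyramid height is really $\delta/\sqrt{d}$ rather than $\delta$, which only affects constants.
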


\subsection{Adaptive orthogonal queries} \label{sec-adaptorth}
% TODO: move query definitions to preliminaries to avoid repetition?
%Let $P \subseteq [0, 1]^d$ be an (unknown) set of points. A rectangle query $\textsc{Query}(R)$ accepts a rectangle $R$, and returns the number of points in the rectangle $|P \cap R|$.
We now consider adaptive orthogonal emptiness queries, and show how to approximate $\mathcal{CH}(P)$ with error $O(q^{-1/(d-1)})$ via $O(q)$ queries.
We shall define $2^d$ supersets of $\mathcal{CH}(P)$, one-to-one corresponding to the vectors in $\{-1, 1\}^d$.
Our algorithm independently approximates these supersets and then merges them together.
%In two dimensions, these are the upper-right, upper-left, lower-right, and lower-left sections of the convex hull.
%For a $d$-dimensional point set, we define $2^d$ corners of the hull, one for each vector $v \in \{-1, 1\}^d$.
For each $\vec{v} = (v_1,\dots,v_d) \in \{-1, 1\}^d$, let $O_v$ be the orthant $\{(x_1, \dots, x_d) : x_i v_i\geq 0 \text{ for all } i \in [d]\}$.
We define $\mathcal{CH}_{\vec{v}} (P) = (\mathcal{CH}(P) + O_{-\vec{v}}) \cap [0,1]^d$.
%Intuitively, the boundary of $\mathcal{CH}_\vec{v} (P)$ contains all points on the convex hull of $P$ that are extremal in the direction of some vector in $O_{\vec{v}}$.
%It also contains the region of $[0,1]^d$ underneath this portion of the hull (in the sense that they can be reached by starting at a point on the boundary and traveling in the direction of some vector in $O^-_v$).
The following observation relates $\mathcal{CH}(P)$ with the supersets $\mathcal{CH}_{\vec{v}} (P)$.

%Given $\mathcal{CH}_{\vec{v}} (P)$ for all $\vec{v} \in \{-1, 1\}^d$, the following observation allows us to reconstruct $\mathcal{CH}(P)$.

\begin{lemma}
\label{lem-adaptrec}
$\mathcal{CH}(P) = \bigcap_{\vec{v} \in \{-1, 1\}^d} \mathcal{CH}_{\vec{v}}(P)$.
\end{lemma}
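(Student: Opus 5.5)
The inclusion $\mathcal{CH}(P) \subseteq \bigcap_{\vec{v}} \mathcal{CH}_{\vec{v}}(P)$ is immediate. The origin lies in every orthant $O_{-\vec{v}}$, so $\mathcal{CH}(P) \subseteq \mathcal{CH}(P) + O_{-\vec{v}}$. Also $\mathcal{CH}(P) \subseteq [0,1]^d$ because $P \subseteq [0,1]^d$ and the cube is convex. Hence $\mathcal{CH}(P) \subseteq \mathcal{CH}_{\vec{v}}(P)$ for every $\vec{v} \in \{-1,1\}^d$.

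For the reverse inclusion I will argue by separation. Take a point $x \in [0,1]^d \setminus \mathcal{CH}(P)$; points outside $[0,1]^d$ are already excluded by every $\mathcal{CH}_{\vec{v}}(P)$. Since $\mathcal{CH}(P)$ is compact and convex (for finite $P$ it is a polytope), there is a unit vector $\vec{u}$ with $\langle x, \vec{u} \rangle > \max_{p \in \mathcal{CH}(P)} \langle p, \vec{u} \rangle$. Now choose $\vec{v} \in \{-1,1\}^d$ with $v_i = \mathrm{sign}(u_i)$, taking $v_i = 1$ when $u_i = 0$. Then every $y \in O_{-\vec{v}}$ satisfies $y_i v_i \le 0$ for all $i$, which gives $\langle y, \vec{u} \rangle = \sum_i y_i u_i \le 0$, since each $y_i u_i$ is a product of a sign-matched coordinate and a nonpositive quantity. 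Consequently, for every $p + y \in \mathcal{CH}(P) + O_{-\vec{v}}$ we have $\langle p+y, \vec{u} \rangle \le \langle p, \vec{u} \rangle < \langle x, \vec{u} \rangle$. So $x \notin \mathcal{CH}(P) + O_{-\vec{v}}$, hence $x \notin \mathcal{CH}_{\vec{v}}(P)$, and the intersection is contained in $\mathcal{CH}(P)$.

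The only subtle point is matching the orthant $O_{-\vec{v}}$ to the sign pattern of the separating normal, including zero coordinates; with the convention above this causes no problem. The degenerate case $P = \emptyset$ is handled by the convention that $\mathcal{CH}(\emptyset) = \emptyset$, in which case both sides are empty.
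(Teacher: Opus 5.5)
Your proof is correct and matches the paper's argument: the paper separates $p$ from $\mathcal{CH}(P)$ by the hyperplane normal to $p-p'$, where $p'$ is the nearest point of $\mathcal{CH}(P)$, and chooses $\vec{v}$ with $p-p' \in O_{\vec{v}}$. That is exactly your choice of $\vec{v}$ from the sign pattern of a strictly separating normal $\vec{u}$, with $O_{-\vec{v}}$ then pointing away from $p$.
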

\begin{proof}
Consider an arbitrary point $p$.
By definition, if $p \in \mathcal{CH}(P)$, then $p \in \mathcal{CH}_{\vec{v}}(P)$ for all $\vec{v}$.
So assume $p \not\in \mathcal{CH}(P)$, and let $p' \in \partial\mathcal{CH}(P)$ be the closest point to $p$.
Then there is a hyperplane $H$ separating $p$ from $P$ that is perpendicular to the vector $p - p'$.
Let $\vec{v} \in \{-1, 1\}^d$ such that $p - p' \in O_{\vec{v}}$.
Then all points of $\mathcal{CH}_{\vec{v}}(P)$ are on the opposite side of $H$ from $p$, so we have $p \notin \mathcal{CH}_{\vec{v}}(P)$ and thus $p \notin \bigcap_{\vec{v} \in \{-1, 1\}^d} \mathcal{CH}_{\vec{v}}(P)$.
\end{proof}

Algorithm \ref{alg-AdaptiveRectangle} presents how to approximate $\mathcal{CH}_{\vec{v}} (P)$ for each $\vec{v} \in \{-1, 1\}^d$.
It essentially ``sandwiches'' the hull between two convex boundaries, refining the estimate at each iteration. Figure \ref{fig:adaptorth_alg} illustrates one such refinement step.
The subroutine $\textsc{Subdivide}(R)$ evenly partitions the hypercube $R$ into $2^d$ smaller hypercubes (each of which has side-length half of the side-length of $R$) and returns the set of these $2^d$ hypercubes.
%returns the $2^d$ pieces of $R$ produced by splitting $R$ exactly in half in each dimension (they coincide at the shared boundary).
For a hypercube $R$ and a vector $\vec{v} \in \{-1, 1\}^d$, we define $\mathsf{cor}(R, \vec{v}) = \arg \max_{p \in R} \langle p,\vec{v} \rangle$ as the corner of $R$ in the direction $\vec{v}$.
The algorithm starts with a set $\mathcal{R}$ of hypercubes which initially consists of only $[0,1]^d$, and runs in $\lceil \frac{\log q}{d-1}\rceil$ rounds.
In each round, we perform emptiness queries for the hypercubes in $\mathcal{R}$, and let $\mathcal{R'} \subseteq \mathcal{R}$ consists of the nonempty ones.
Then define $U = \mathcal{CH}_{\vec{v}}(\{\mathsf{cor}(R, \vec{v}): R \in \mathcal{R'}\})$ and $L' = \mathcal{CH}_{\vec{v}}(\{\mathsf{cor}(R, -\vec{v}): R \in \mathcal{R}'\})$.
Let $L = L' \setminus \partial L'$ be the interior of $L'$.
%(We remove the boundary of $L'$ so that $\partial \mathcal{CH}_{\vec{v}}$ is always ``sandwiched'' between $L$ and $U$.)
This is simply to ensure that $\partial \mathcal{CH}_{\vec{v}}(P)$ is contained in $U\setminus L$.
We then take all hypercubes in $\mathcal{R}$ that intersect $U \backslash L$, partition them using the subroutine $\textsc{Subdivide}$, and let the new $\mathcal{R}$ be the set of the resulting smaller hypercubes.
(Technically, we only need to consider the nonempty queries in $\mathcal{R}'$, but this is equivalent in the worst case, and this way nicely ensures that our queries continue to cover all of $\partial CH_{\vec{v}}(P)$.)
Finally, we return the convex body $U$ in the last iteration as an approximation of $\mathcal{CH}_{\vec{v}}(P)$.

\begin{algorithm}
    \caption{\textsc{AdaptiveOrthogonal}$(q, \vec{v})$}
    \begin{algorithmic}[1]
        \State $\mathcal{R} \leftarrow \{[0, 1]^d\}$
        \For{$i = 1$ to $\lceil \frac{\log q}{d-1}\rceil$}
            \State $\mathcal{R'} \leftarrow \{R\in \mathcal{R}:$ \textsc{Query}$(R) = \mathsf{no}\}$
            %\State $L \leftarrow \mathcal{CH}(\{p | R_{p, \cdot} \in \mathcal{R}\})$
            \State $U \leftarrow \mathcal{CH}_{\vec{v}}(\{\mathsf{cor}(R, \vec{v}): R \in \mathcal{R'}\})$
            \State $L' \leftarrow \mathcal{CH}_{\vec{v}}(\{\mathsf{cor}(R, -\vec{v}): R \in \mathcal{R'}\})$
            \State $L \leftarrow L' \setminus \partial L'$
            \State $\mathcal{R}'' \leftarrow \{R \in \mathcal{R}: R \cap (U \setminus L) \neq \emptyset\}$
            \State $\mathcal{R} \leftarrow \bigcup_{R \in  \mathcal{R''}} \textsc{Subdivide}(R)$
        \EndFor
    \State \textbf{return} $U$
    \end{algorithmic}
    \label{alg-AdaptiveRectangle}
\end{algorithm}

\begin{figure}[h!]
\centering
\begin{tikzpicture}[scale=7,>=stealth]

% UNIT SQUARE ---------------------------------------------------------
\draw[thick] (0,0) rectangle (1,1);
\node[left]  at (0,1) {$(0,1)$};
\node[below] at (1,0) {$(1,0)$};

% POINT SET P ---------------------------------------------------------
\coordinate (p1) at (0.8,0.3);
\coordinate (p2) at (0.6,0.7);
\coordinate (p3) at (0.4,0.6);
\coordinate (p4) at (0.2,0.2);
\coordinate (p5) at (0.3,0.9);
\coordinate (p6) at (0.65,0.55);

\coordinate (l1) at (0.25,0.75);
\coordinate (l2) at (0.25,0.5);
\coordinate (l3) at (0.5,0.5);
\coordinate (l4) at (0.75,0.25);
%\coordinate (l5) at (0.5,0.25);

\coordinate (u1) at (0.5,1);
\coordinate (u2) at (0.5,0.75);
\coordinate (u3) at (0.75,0.75);
\coordinate (u4) at (1,0.5);
%\coordinate (u5) at (0.75,0.5);

% LIST OF BOXES TO HIGHLIGHT
\foreach \i/\j in {2/4, 2/3, 3/3, 4/2} {
    % compute box coordinates
    \pgfmathsetmacro{\xmin}{(\i-1)*0.25}
    \pgfmathsetmacro{\xmax}{\i*0.25}
    \pgfmathsetmacro{\ymin}{(\j-1)*0.25}
    \pgfmathsetmacro{\ymax}{\j*0.25}
    
    % draw highlighted rectangle
    \draw[black!60, very thick, fill=gray!20] (\xmin,\ymin) rectangle (\xmax,\ymax);
}

% QUAD TREE DISSECTION -----------------------------------------------
% Split the square into 4 (first level)
\draw[gray!60, dashed] (0.5,0) -- (0.5,1);
\draw[gray!60, dashed] (0,0.5) -- (1,0.5);

% Split top-right quadrant further (example)
\draw[gray!60, dashed] (0.75,0.5) -- (0.75,1);
\draw[gray!60, dashed] (0.5,0.75) -- (1,0.75);

% Split top-left quadrant further (example)
\draw[gray!60, dashed] (0.25,0.5) -- (0.25,1);
\draw[gray!60, dashed] (0,0.75) -- (0.5,0.75);

% Split bottom-right quadrant further (example)
\draw[gray!60, dashed] (0.75,0) -- (0.75,0.5);
\draw[gray!60, dashed] (1,0.25) -- (0.5,0.25);

\foreach \i/\j in {
    4/1, 3/2, 4/2, 2/3, 3/3, 4/3, 1/4, 2/4, 3/4
} {
    % original box coordinates
    \pgfmathsetmacro{\xmin}{(\i-1)*0.25}
    \pgfmathsetmacro{\xmax}{\i*0.25}
    \pgfmathsetmacro{\ymin}{(\j-1)*0.25}
    \pgfmathsetmacro{\ymax}{\j*0.25}
    
    % size of sub-box
    \pgfmathsetmacro{\dx}{(\xmax-\xmin)/2}
    \pgfmathsetmacro{\dy}{(\ymax-\ymin)/2}
    
    % draw sub-boxes
    \foreach \ii in {0,1} {
        \foreach \jj in {0,1} {
            \draw[gray, dashed] 
                ({\xmin+\ii*\dx},{\ymin+\jj*\dy}) rectangle 
                ({\xmin+(\ii+1)*\dx},{\ymin+(\jj+1)*\dy});
        }
    }
}

% Draw points
\foreach \p in {p1,p2,p3,p4,p5,p6}{
    \fill[black] (\p) circle (0.015);
}

\foreach \p in {l1,l2,l3,l4}{
    \fill[red] (\p) circle (0.015);
}
\foreach \p in {u1,u2,u3,u4}{
    \fill[blue] (\p) circle (0.015);
}

% CONVEX HULL ---------------------------------------------------------
\draw[thick, black, rounded corners=2pt] 
    (p1) -- (p2) -- (p5);
\draw[thick, black, rounded corners=2pt]
    (p1) -- (0.8,0);
\draw[thick, black, rounded corners=2pt]
    (p5) -- (0,0.9);

\draw[thick, blue, rounded corners=2pt] 
    (u1) -- (u3) -- (u4);
\draw[thick, blue, rounded corners=2pt]
    (u1) -- (0,1);
\draw[thick, blue, rounded corners=2pt]
    (u4) -- (1,0);

\draw[thick, red, rounded corners=2pt] 
    (l1) -- (l3) -- (l4);
\draw[thick, red, rounded corners=2pt]
    (l1) -- (0,0.75);
\draw[thick, red, rounded corners=2pt]
    (l4) -- (0.75,0);

% Remaining labels
\node[above right, blue] at (0.875,0.625) {${\partial U}$};
\node[below left, red] at (0.625,0.375) {${\partial L}$};
\node[above] at (0.875,0.35) {${\partial \mathcal{CH}_{\vec{v}}(P)}$};

\end{tikzpicture}

\caption{One iteration of Algorithm \ref{alg-AdaptiveRectangle} for $d=2, \vec{v}=(1,1)$.
The four queries in $\mathcal{R'}$ are filled in gray.
The upper right corners of these queries are used to compute the set $U$ with boundary shown in blue, while the lower left corners produce $L$ in red. The boundary of $\mathcal{CH}_{\vec{v}}(P)$ lies in $U\setminus L$. The smallest dashed queries form the set $\mathcal{R}$ for the next iteration; their union covers $U \setminus L$.}
\label{fig:adaptorth_alg}
\end{figure}

% TODO: It's annoying to have to subtract the boundary of L' at line 7, but this makes the claim that the boundary of CH_v(P) contained in U - L possible (otherwise it would be U - L + boundary L), which seems equally annoying. Maybe there's a direct way to handle this without the second inductive  claim in the proof.
At a high level, the algorithm maintains an upper and lower bound (the sets $U$ and $L$ respectively) on the unknown convex body $\mathcal{CH}_{\vec{v}}(P)$.
At each iteration of the loop, we reduce the error volume $\lVert (U \backslash \mathcal{CH}_{\vec{v}}(P)) \cap [0,1]^d \rVert$ by half.
Let $t = \lceil \frac{\log q}{d-1}\rceil$ be the number of iterations.
For $i \in [t]$, denote by $\mathcal{R}_i$ the set $\mathcal{R}$ at the end of the $i$-th iteration of the loop, and likewise for $L_i$, $U_i$, $\mathcal{R}'_i$, $\mathcal{R}_i''$.
Without loss of generality, we assume that $\vec{v} = (1,\dots,1)$.
The following two lemmas are used to bound the error and number of queries.%; their proofs are included in the full version.

\begin{lemma}
\label{lem-adaptiverec_bound}
For all $i \in [t]$, $L_i \subset \mathcal{CH}_{\vec{v}}(P) \subseteq U_i$ and $\partial \mathcal{CH}_{\vec{v}}(P) \subseteq \bigcup_{R \in \mathcal{R}_i}R$.
\end{lemma}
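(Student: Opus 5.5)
I will prove the statement by induction on $i$, together with the auxiliary invariant that, at the start of iteration $i$, the set $\mathcal{R}$ (that is, $\mathcal{R}_{i-1}$, with $\mathcal{R}_0 = \{[0,1]^d\}$) covers $\partial \mathcal{CH}_{\vec{v}}(P)$. The base case is trivial because $\partial \mathcal{CH}_{\vec{v}}(P) \subseteq [0,1]^d$. Throughout, $\vec{v}=(1,\dots,1)$, so $\mathcal{CH}_{\vec{v}}(X) = (\mathcal{CH}(X)+O_{-\vec{v}})\cap[0,1]^d$ is a down-closed convex set (within the cube). I also use the fact that $\mathcal{CH}_{\vec{v}}$ is monotone: $X \subseteq \mathcal{CH}_{\vec{v}}(Y)$ implies $\mathcal{CH}_{\vec{v}}(X) \subseteq \mathcal{CH}_{\vec{v}}(Y)$, since the target is convex and closed under adding $O_{-\vec{v}}$ within the cube.

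\textbf{Upper bound $\mathcal{CH}_{\vec{v}}(P) \subseteq U_i$.} First I claim that every point $p \in P$ lies in some $R \in \mathcal{R}_{i-1}$ that is nonempty, hence in $\mathcal{R}'_i$. For a point $p$ on $\partial\mathcal{CH}_{\vec{v}}(P)$ this follows from the invariant. For $p$ in the interior, I argue differently: $p$ is dominated by a point $p' \in \partial \mathcal{CH}_{\vec{v}}(P)$ obtained by moving in direction $\vec{v}$. Then $p \le p' \le \mathsf{cor}(R',\vec{v})$ coordinatewise for the cube $R'\ni p'$. To avoid needing $R'$ itself to be nonempty, the cleaner route is to work with vertices of $\mathcal{CH}_{\vec{v}}(P)$ that are points of $P$ on the boundary. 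Every extreme point of $\mathcal{CH}_{\vec{v}}(P)$ in the relevant directions is either a point of $P$ (on the boundary, hence in some $R\in\mathcal{R}_{i-1}$ that is then nonempty) or a point dominated by one, or a point on the cube faces that is dominated similarly. Each such point is $\le \mathsf{cor}(R,\vec{v})$ for a cube $R \in \mathcal{R}'_i$, so it lies in $U_i$. Monotonicity then gives $\mathcal{CH}_{\vec{v}}(P)\subseteq U_i$.

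\textbf{Lower bound $L_i \subset \mathcal{CH}_{\vec{v}}(P)$.} For each $R\in\mathcal{R}'_i$ there is some $p\in P\cap R$, and $\mathsf{cor}(R,-\vec{v}) \le p$ coordinatewise, so $\mathsf{cor}(R,-\vec{v}) \in \mathcal{CH}_{\vec{v}}(P)$. Monotonicity gives $L'_i \subseteq \mathcal{CH}_{\vec{v}}(P)$, and hence $L_i \subseteq L'_i$. I also want $L_i \cap \partial\mathcal{CH}_{\vec{v}}(P) = \emptyset$, which holds because the interior of a subset lies in the interior of the superset. Strictly, the boundary should be taken relative to the ambient cube, so some care is needed at the faces $x_j=0$.

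\textbf{Covering invariant.} Since $L_i \subseteq \operatorname{int}\mathcal{CH}_{\vec{v}}(P)$ and $\mathcal{CH}_{\vec{v}}(P)\subseteq U_i$, every boundary point of $\mathcal{CH}_{\vec{v}}(P)$ lies in $U_i\setminus L_i$. By the inductive invariant it also lies in some $R\in\mathcal{R}_{i-1}$. That $R$ meets $U_i\setminus L_i$, so $R\in\mathcal{R}''_i$, and the subdivision of $R$ covers $R$. Hence $\partial\mathcal{CH}_{\vec{v}}(P)\subseteq\bigcup_{R\in\mathcal{R}_i}R$, which completes the induction.

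The main obstacle is the upper bound. There I must show that the generating set of $\mathcal{CH}_{\vec{v}}(P)$, namely the points of $P$ that are maximal in the dominance sense, lies in nonempty cubes of $\mathcal{R}_{i-1}$. Points of $P$ strictly inside $\mathcal{CH}_{\vec{v}}(P)$ may lie in discarded regions, but they are irrelevant because $\mathcal{CH}_{\vec{v}}(P)$ equals $\mathcal{CH}_{\vec{v}}$ of the points of $P$ on its boundary. I would prove this last identity first, as a small convexity lemma: an interior generator is a convex combination plus a downward shift of boundary ones, which follows by moving in direction $\vec{v}$ until the boundary is hit. The boundary-versus-cube-faces subtlety in the definition of $L$ is the other point that needs care.
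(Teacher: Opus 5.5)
Your argument is sound in its final form. On the lower bound and the covering invariant it coincides with the paper. Each $\mathsf{cor}(R,-\vec v)$ with $R\in\mathcal{R}'_i$ is dominated by a point of $P\cap R$, so $L_i=\mathrm{int}\,L'_i\subseteq\mathrm{int}\,K$, where $K=\mathcal{CH}_{\vec v}(P)$. A cube of $\mathcal{R}_{i-1}$ containing a point of $\partial K\subseteq U_i\setminus L_i$ then lands in $\mathcal{R}''_i$ with no emptiness requirement. If all boundaries and interiors are taken in $\mathbb{R}^d$, as the paper does, the faces $x_j=0$ need no special care.

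The real difference is the upper bound. The paper fixes an arbitrary $p\in\partial K$, takes the cube $R_p\in\mathcal{R}_{i-1}$ containing it, and asserts $\textsc{Query}(R_p)=\mathsf{no}$. That does not follow, because boundary points of $K$ need not belong to $P$. For $P=\{(0.9,0.1),(0.1,0.9)\}$, the boundary point $(0.45,0.55)$ lies in the interior of the empty square $[\frac14,\frac12]\times[\frac12,\frac34]$ of $\mathcal{R}_2$. Your route applies the covering invariant only to points of $P\cap\partial K$, whose containing cube is automatically in $\mathcal{R}'_i$. You then lift $P\cap\partial K\subseteq U_i$ to $K\subseteq U_i$ using monotonicity of $\mathcal{CH}_{\vec v}$ and the identity $K=\mathcal{CH}_{\vec v}(P\cap\partial K)$. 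You pay one extra convexity lemma, but you close a step the paper's shorter argument glosses over.

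When you write it up, make three changes.
\begin{itemize}
\item Drop the opening claim that every $p\in P$ lies in a nonempty cube of $\mathcal{R}_{i-1}$. It is false: points of $P$ in $\mathrm{int}\,K$ can lie in cubes discarded in an earlier round because they were contained in $L$.
\item Drop the detour through extreme points of $K$ on cube faces. It would need its own argument and is unnecessary once you have the identity.
\item Replace your sketch of the identity (``move in direction $\vec v$ until the boundary is hit''). It is circular, since the boundary point you reach is generally not in $P$.
\end{itemize}
A correct argument for the identity goes as follows. If $p\in P\cap\mathrm{int}\,K$, then $p+\epsilon\vec v\in K$ for some $\epsilon>0$. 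Hence $p+\epsilon\vec v\le\sum_{q\in P}\lambda_q q$ coordinatewise for some convex weights $\lambda_q$. The case $\lambda_p=1$ is impossible, so $p$ is strictly dominated by a convex combination of $P\setminus\{p\}$. Deleting $p$ therefore leaves $\mathcal{CH}(P)+O_{-\vec v}$, and hence $K$, unchanged. Repeat until no point of $P$ lies in $\mathrm{int}\,K$.

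Alternatively, use the extreme points of the line-free polyhedron $\mathcal{CH}(P)+O_{-\vec v}$. They are points of $P$, they cannot lie in $\mathrm{int}\,K$, and together with the recession cone $O_{-\vec v}$ they generate the polyhedron.
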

\begin{proof}
%It suffices to show that for each $i$, $\partial \mathcal{CH}_v(P) \subseteq U_i \setminus L_i$.
We prove both statements simultaneously by induction on $i$.
Initially $\mathcal{R}$ contains a single rectangle $R_0 = [0, 1]^d$ such that $P \subseteq R_0$ and $\partial \mathcal{CH}_{\vec{v}}(P) \subseteq R_0$ by definition.
In the first iteration, $\mathcal{R'}_1 = \{R_0\}$, $U_1 = [0,1]^d$, and $L_1 = \emptyset$.
Clearly, $L_1 \subset \mathcal{CH}_{\vec{v}}(P) \subseteq U_1$ for any nonempty set $P \subseteq [0,1]^d$.
%Also $\partial \mathcal{CH}_v(P) \subseteq R_0$, so $\partial \mathcal{CH}_v(P) \subseteq \bigcup_{R \in \mathcal{R}_1} R$ since $\bigcup_{R \in \mathcal{R}_1} R = \bigcup_{R \in \textsc{Subdivide}(R_0)} R = R_0$.
Assume that $L_{i-1} \subset \mathcal{CH}_{\vec{v}}(P) \subseteq U_{i-1}$ and $\partial\mathcal{CH}_{\vec{v}}(P) \subseteq \bigcup_{R \in \mathcal{R}_{i-1}}R$; we will prove both statements for index $i$.
Fix some $p \in \partial \mathcal{CH}_{\vec{v}}(P)$.
By the second inductive claim, there exists $R_p \in \mathcal{R}_{i-1}$ such that $p \in R_p$.
Then $\textsc{Query}(R_p) =$ no, so $R_p \in \mathcal{R'}_i$.
Let $u = \mathsf{cor}(R_p, {\vec{v}})$.
By definition, $u \in U$.
Furthermore, $(p - u) \in O_{-{\vec{v}}}$ since $(p_j - u_j) \cdot v_j \leq 0$ for all $j \in [d]$, where $v_j$ is the $j$-th coordinate of $\vec{v}$.
Then $p = u + (p - u) \in U_i$, so $\partial \mathcal{CH}_{\vec{v}}(P) \subseteq U_i$, which implies $\mathcal{CH}_{\vec{v}}(P) \subseteq U_i$.
Similarly, for each $p \in \{\mathsf{cor}(R, -{\vec{v}}): R \in \mathcal{R}_i'\}$, there exists $p' \in P$ with $p' - p \in O_{-{\vec{v}}}$, witnessing that $p \in \mathcal{CH}_{{\vec{v}}}(P)$.
Then $L_i \subseteq \mathcal{CH}_{{\vec{v}}}(P)$.

Finally, we show that $\partial \mathcal{CH}_{\vec{v}}(P) \subseteq \bigcup_{R \in \mathcal{R}_i}R$.
% TODO: avoid repetion here
Again, fix $p \in \partial \mathcal{CH}_{\vec{v}}(P)$. 
There exists $R_p \in \mathcal{R}_{i-1}$ such that $p \in R_p$.
Then $R_p \in R'_{i}$ since $\textsc{Query}(R_p) = \mathsf{no}$.
Since we have already shown $L_i \subset \mathcal{CH}_{\vec{v}}(P) \subseteq U_i$, it must be that $\partial \mathcal{CH}_{\vec{v}}(P) \subseteq U_i \setminus L_i$, which implies $p \in U_i \setminus L_i$.
$R_p$ intersects $U_i \setminus L_i$ at point $p$, so $R_p \in \mathcal{R}''_i$.
Then there exists $R'_p \in \textsc{Subdivide}(R_p)$ such that $p \in R'_p$ because $\bigcup_{R\in \textsc{Subdivide}(R_p)} R = R_p$.
$R_p' \in \mathcal{R}_i$, so $p \in \bigcup_{R \in \mathcal{R}_i} R$, and therefore $\partial \mathcal{CH}_{\vec{v}}(P) \subseteq \bigcup_{R \in \mathcal{R}_i} R$. 
\end{proof}

\begin{lemma}
\label{lem-adaptrec_queries}
The number of queries at the $i$-th iteration is at most $4^dd!\cdot(2^i + 1)^{d-1}$.
\end{lemma}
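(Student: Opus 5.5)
The number of queries at iteration $i$ equals $|\mathcal{R}_{i-1}|$, which is $2^d|\mathcal{R}''_{i-1}|$. So I will bound the number of hypercubes of side length $2^{-(i-1)}$ from the previous round that meet $U_{i-1}\setminus L_{i-1}$; call this side length $s = 2^{-(i-2)}$, up to an index shift. All cubes in $\mathcal{R}$ at a given round are cells of the regular grid of side $s$. The plan has three steps.

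First, show that $U\setminus L$ is contained in a thin region around a monotone surface. With $\vec v=(1,\dots,1)$, both $U$ and $L'$ are down-closed in $[0,1]^d$: if $x$ is in the set and $y\le x$ coordinatewise with $y \in [0,1]^d$, then $y$ is in the set. Moreover, every generator $\mathsf{cor}(R,\vec v)$ of $U$ equals the generator $\mathsf{cor}(R,-\vec v)$ of $L'$ plus $s(1,\dots,1)$. Hence $U \subseteq L' + s\,\mathbf{1}$ (Minkowski sum with the segment $[0,s]\mathbf{1}$, then intersected with the cube), since $\mathcal{CH}_{\vec v}$ commutes with translation up to clipping. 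Consequently, every point $x\in U\setminus L$ satisfies the following: the diagonal line through $x$ meets $\partial L'$ within $\ell_\infty$-distance $s$. That is, $x$ lies within diagonal distance $s\sqrt d$ of the boundary surface $\partial L'$ (or of the boundary of the cube).

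Second, count the grid cells meeting this region by projecting along $\mathbf{1}$. Project onto the hyperplane $\mathbf{1}^\perp$. Each diagonal line meets the monotone surface $\partial L'$ in at most one point (or one point up to the flat parts on $\partial[0,1]^d$). A grid cell of side $s$ projects onto a region of $(d-1)$-volume $\Theta(s^{d-1})$. Along any diagonal line, the cells meeting the slab $\{x: \text{diagonal distance to } \partial L' \le s\sqrt d\}$ number $O(d)$. The projection of $[0,1]^d$ has $(d-1)$-volume $O(1)$ (explicitly $\sqrt d$ times a constant). Therefore the number of cells is at most roughly $c_d\,(1/s)^{d-1}$, with $1/s = 2^{i-1}$ or so. To reach the stated form $4^d d!(2^i+1)^{d-1}$, I would instead do an exact lattice count. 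Index cells by integer vectors $a\in\{0,\dots,m-1\}^d$ with $m=1/s$. A cell meets the region only if $\sum_j a_j$ lies within a window of width $O(d)$ around a monotone "height function" evaluated at $a$'s projection. Counting projected lattice classes gives about $(m+1)^{d-1}$ diagonal lines through the box, times $d\cdot O(1)$ cells per line. The constant $4^d d!$ should absorb these slack factors, including the $2^d$ from subdivision.

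Third, combine: queries at iteration $i$ equal $2^d$ times the number of cells of side $2^{-(i-1)}$ meeting $U_{i-1}\setminus L_{i-1}$, which is bounded by $2^d\cdot 2^d d!\,(2^{i}+1)^{d-1}$. The base cases $i=1,2$ are trivial.

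The main obstacle is the second step: rigorously bounding the number of cells per diagonal line. A cell is not a point, so I need that the union of cells meeting $U\setminus L$ and lying on a common diagonal "column" is contained in an interval of $O(d)$ cells. The claim to prove is that if cells $a$ and $a+k\mathbf{1}$ both meet $U\setminus L$, then $k\le d+1$. The argument: $a+\mathbf 1$ has its min-corner $\ge$ the max-corner of $a$. If $a+k\mathbf{1}$ meets $U$ with $k$ large, monotonicity forces cell $a$'s max-corner to be deep inside $L'$'s interior, contradicting that $a$ meets $U\setminus L$. This uses $U\subseteq L'+s\mathbf 1$. Cells sharing a column are exactly those with the same projection class $a-\min_j a_j\mathbf 1$, and there are at most $d\,(m+1)^{d-1}$ such classes (choose which coordinate is zero), which is where a factor like $d!$ or $d$ enters.
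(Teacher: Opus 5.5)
Your route is different from the paper's. The paper never uses the relation between $U$ and $L'$. It proves a generic bound: the boundary of any convex body meets at most $2^{d+1}d!(r+1)^{d-1}$ grid cells. The proof is by induction on $d$, with a union bound over the $d(r+1)$ grid hyperplanes: each hyperplane $H$ cuts $\partial C$ in the boundary of a $(d-1)$-dimensional body, and each $(d-1)$-cell of $H$ is a facet of two $d$-cells. The paper applies this bound to $\partial U_i$ and $\partial L_i$ separately.

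You instead use that $U$ and $L'$ are orthant-closed and generated by opposite corners of the same cells. This confines $U\setminus L$ to a band of diagonal width $O(s)$, and you count cells column by column along $\mathbf{1}$. Your approach avoids the dimension induction and directly counts every cell of $\mathcal{R}$ meeting $U\setminus L$, empty or not. It also gives a constant of order $2^{d+2}d$ instead of $4^dd!$. In exchange, the paper's lemma holds for arbitrary convex bodies.

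The column claim, however, is false as you state it, and the cube boundary you set aside in your first step is the reason. With $\partial L'$ the boundary in $\mathbb{R}^d$, as in the paper, the open set $L$ contains no point of a face $\{x_j=0\}$. So the bottom cell of a column ($\min_j a_j=0$) meets $U\setminus L$ whenever it meets $U$. For example, if the cell $[1-s,1]^d$ is nonempty, then $U=[0,1]^d$, and both $[0,s]^d$ and $[1-s,1]^d$ meet $U\setminus L$, giving $k=m-1$.

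Your argument breaks exactly here. ``Max corner in the interior of $L'$'' implies ``cell $\subseteq L$'' only for cells that avoid the lower faces. Likewise, $U\subseteq (L'+[0,s]\mathbf{1})\cap[0,1]^d$ fails near those faces; take the single nonempty cell $[0,s]\times[1-s,1]$ in the plane. What does hold is that $x\in U$ and $\min_j x_j\ge s$ imply $x-s\mathbf{1}\in L'$.

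The repair is to set aside the bottom cell of each column. Suppose a cell $c$ with $\min_j c_j\ge 1$ and the cell $c+k\mathbf{1}$ both meet $U\setminus L$. Then $s(c+(k-1)\mathbf{1})\in L'$. For $k\ge 3$ this point strictly dominates every point of cell $c$, and all those points lie in $(0,1)^d$. A point strictly below a point of $\mathcal{CH}(X)+O_{-\vec v}$ is interior to it, so cell $c\subseteq L$, a contradiction. Hence each column contributes at most $4$ cells.

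There are at most $m^d-(m-1)^d\le dm^{d-1}$ columns, and $m=2^{i-2}$ for the cells of $\mathcal{R}''_{i-1}$. So iteration $i$ makes at most $2^d\cdot 4d\cdot 2^{(i-2)(d-1)}\le 4^dd!(2^i+1)^{d-1}$ queries.
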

\begin{proof}
Fix some $i \in [t]$, and let $r = 2^i$.
It suffices to show that $|\mathcal{R''}_i| = 2^dd! (r + 1) ^{d-1}$, since $\mathcal{R}_i$ contains exactly $2^d$ rectangles for each $R \in \mathcal{R}''_i$.
For all $R \in \mathcal{R}'_i$, $R \cap (U_i \setminus L_i) \neq \emptyset$ if and only if $R \cap \partial U_i \neq \emptyset$ or $R \cap \partial L_i \neq \emptyset$.
We bound both $|\{R \in \mathcal{R}': R \cap \partial U_i \neq \emptyset\}|$, $|\{R \in \mathcal{R}': R \cap \partial L_i \neq \emptyset\}| \leq 2^{d+1}d! (r+1)^{d-1}$.
Specifically, we prove that for any $d$-dimensional convex body $C$ and set of $d$-dimensional rectangles $\mathcal{R} = \mathcal{R}'_i$ produced by Algorithm \ref{alg-AdaptiveRectangle}, $|\{R \in \mathcal{R}: R \cap \partial C \neq \emptyset\}| \leq 2^{d+1}d! (r+1)^{d-1}$.
For $d=1$, $\partial C$ contains at most two points (the endpoints of a closed interval) each of which can intersect at most two intervals $R \in \mathcal{R}$.
Then for $1$-dimensional $C$ and all $i \in [t]$, $|\{R \in \mathcal{R}_1: R \cap \partial C \neq \emptyset| \leq 4$.

We now induct on $d$.
For $d > 1$, notice that if $R \in \mathcal{R}$ intersects $\partial C$, either $\partial R$ also intersects $\partial C$, or $C \subset R$.
In the latter case, $|\{R \in \mathcal{R}: R \cap \partial C\}| = 1$ since the rectangles of $\mathcal{R}$ are pairwise interior-disjoint, so we may assume $\partial R \cap \partial C \neq \emptyset$ for all $R$ such that $R \cap \partial C \neq \emptyset$.
Furthermore, across all $R \in \mathcal{R}$, the bounds of $R$ use at most $r+1$ unique coordinates in each dimension.
Indeed, let $H_{\alpha,\beta}$ be the $(d-1)$-dimensional hyperplane $\{x: x_j = \alpha/r\}$ for $\alpha \in [r] \cup \{0\}$ and $\beta \in [d]$, and $\mathcal{H} = \{H_{\alpha,\beta} : \alpha \in [r]\cup\{0\}, \beta \in [d]\}$.
Then $\bigcup_{R \in \mathcal{R}} \partial R \subseteq \bigcup_{H \in \mathcal{H}} H$.
We may write $\{R \in \mathcal{R}: \partial R \cap \partial C \neq \emptyset\} = \bigcup_{H \in \mathcal{H}}\{R \in \mathcal{R}: \partial R \cap \partial C \cap H \neq \emptyset\}$.
But now for all $H \in \mathcal{H}$, $\partial C \cap H$ is the boundary of some $(d-1)$-dimensional convex body, and $\mathcal{R}^{d-1} = \{R \cap H \neq \emptyset: R \in \mathcal{R}\}$ is precisely the set $\mathcal{R}'_i$ produced by running Algorithm \ref{alg-AdaptiveRectangle} in dimension $d-1$ on the projection of the point set $P$ onto $H$.
(Note, however, that $|\{R \in \mathcal{R}: R \cap H \neq \emptyset\}| = 2 \cdot |\{R \cap H \neq \emptyset: R \in \mathcal{R}\}|$, since each $(d-1)$-dimensional rectangle in the former maps to two $d$-dimensional rectangles in the latter, one on each side of $H$.)
By induction, $|\{R \in \mathcal{R}: \partial R \cap \partial C \cap H \neq \emptyset\}| \leq 2|\{R \in \mathcal{R}^{d-1}: R \cap (\partial C \cap H) \neq \emptyset\}| \leq 2\cdot 2^{d}(d-1)! (r+1)^{d-2})$ for all $H \in \mathcal{H}$.
$|\mathcal{H}| = d(r+1)$, so, by union bound,
$|\bigcup_{H \in \mathcal{H}}\{R\in \mathcal{R}: \partial R \cap \partial C \cap H \neq \emptyset\}| \leq 2^{d+1}(d-1)! (r+1)^{d-2} \cdot d(r+1) = 2^{d+1}d!(r+1)^{d-1}$.
\end{proof}

As $d$ is a constant, Lemma~\ref{lem-adaptrec_queries} implies the number of queries in the $i$-th iteration is $O((2^i+1)^{d-1})$.
So the total number of queries is bounded by $\sum_{i=1}^t (2^i+1)^{d-1} = O(q)$.
Since each query box in the $i$-th iteration has volume $2^{-id}$, we have 
\begin{equation*}
    \left\lVert \bigcup_{R\in \mathcal{R}_t}R \right\rVert \leq 2^{-td} |\mathcal{R}_t| = O(2^{-d\log q / (d-1)}q) = O(q^{-1/(d-1)}).
\end{equation*}
By Lemma~\ref{lem-adaptiverec_bound}, this further implies $\lVert (U_t \backslash \mathcal{CH}_{\vec{v}}(P)) \cap [0,1]^d \rVert = O(q^{-1/(d-1)})$, simply because $(U_t \backslash \mathcal{CH}_{\vec{v}}(P)) \cap [0,1]^d \subseteq (U_t \backslash L_t) \cap [0,1]^d \subseteq \bigcup_{R\in \mathcal{R}_t}R$.

We run Algorithm~\ref{alg-AdaptiveRectangle} for all $\vec{v} \in \{-1, 1\}^d$ and let $U_{\vec{v}}$ be the output for $\vec{v}$.
Finally, we return $C^* = (\bigcap_{\vec{v} \in \{-1, 1\}^d} U_{\vec{v}}) \cap [0,1]^d$ as the approximation of $\mathcal{CH}(P)$.
The total number of queries is still $O(q)$.
To bound the error $\lVert C^* \oplus \mathcal{CH}(P) \rVert$, we first observe $\mathcal{CH}(P) \subseteq C^*$.
Indeed, since $\mathcal{CH}_{\vec{v}}(P) \subseteq U_{\vec{v}}$ for all $\vec{v} \in \{-1, 1\}^d$, we have $\mathcal{CH}(P) \subseteq \bigcap_{\vec{v} \in \{-1, 1\}^d} U_{\vec{v}}$ by Lemma~\ref{lem-adaptrec}, which implies $\mathcal{CH}(P) \subseteq C^*$ because $P \subseteq [0,1]^d$.
Again, by Lemma~\ref{lem-adaptrec}, we have
\begin{equation*}
    \lVert C^* \backslash \mathcal{CH}(P) \rVert \leq \sum_{\vec{v} \in \{-1, 1\}^d} \lVert (U_{\vec{v}} \backslash \mathcal{CH}_{\vec{v}}(P)) \cap [0,1]^d \rVert = O(q^{-1/(d-1)}).
\end{equation*}
Therefore, the error of our algorithm is $O(q^{-1/(d-1)})$.

%a total of $2^d$ times to approximate the corner $U_v \approx \mathcal{CH}_v$ for each $v \in \{-1, 1\}^d$.
%We then produce the final approximation using Observation \ref{obs-adaptrec}, i.e. we take the intersection of the $2^d$ output sets $U_v$.
%The total error volume is at most $2^d$ times $\max_v ||U_v\setminus L_v||$, and the $2^d$ factor is hidden by $O(\cdot)$ for constant $d$.
%We get the following theorem.

\begin{theorem} \label{thm-adapt_orth_ub}
There exists an algorithm for \textnormal{\sc Approximate Convex Hull} in $\mathbb{R}^d$ that performs $O(q)$ adaptive orthogonal emptiness queries and has error $O(q^{-1/(d-1)})$.
\end{theorem}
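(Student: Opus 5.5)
The statement to prove is Theorem~\ref{thm-adapt_orth_ub}. Most of the work is already in place; what remains is to assemble the pieces and check the accounting.

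For each $\vec{v} \in \{-1,1\}^d$ I will run Algorithm~\ref{alg-AdaptiveRectangle}. By symmetry (reflecting coordinates) it suffices to treat $\vec{v} = (1,\dots,1)$. The output is $C^* = (\bigcap_{\vec{v}} U_{\vec{v}}) \cap [0,1]^d$. The proof then has three parts, taken in order: correctness ($\mathcal{CH}(P) \subseteq C^*$), the query count, and the error bound.

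\textbf{Correctness.} Lemma~\ref{lem-adaptiverec_bound} gives $\mathcal{CH}_{\vec{v}}(P) \subseteq U_{\vec{v}}$ for every $\vec{v}$. Combined with Lemma~\ref{lem-adaptrec}, this yields
\[
\mathcal{CH}(P) = \bigcap_{\vec{v}} \mathcal{CH}_{\vec{v}}(P) \subseteq \bigcap_{\vec{v}} U_{\vec{v}}.
\]
Since $P \subseteq [0,1]^d$, also $\mathcal{CH}(P) \subseteq [0,1]^d$, so $\mathcal{CH}(P) \subseteq C^*$. In particular $C^* \oplus \mathcal{CH}(P) = C^* \setminus \mathcal{CH}(P)$.

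\textbf{Error.} Take a point $x \in C^* \setminus \mathcal{CH}(P)$. By Lemma~\ref{lem-adaptrec} there is some $\vec{v}$ with $x \notin \mathcal{CH}_{\vec{v}}(P)$, while $x \in U_{\vec{v}} \cap [0,1]^d$. A union bound over the $2^d$ directions gives
\[
\lVert C^* \setminus \mathcal{CH}(P)\rVert \le \sum_{\vec{v}} \lVert (U_{\vec{v}} \setminus \mathcal{CH}_{\vec{v}}(P)) \cap [0,1]^d \rVert .
\]
Fix $\vec{v}$ and let $t = \lceil \log q/(d-1)\rceil$ be the final round. Lemma~\ref{lem-adaptiverec_bound} gives $L_t \subset \mathcal{CH}_{\vec{v}}(P)$, so each summand is at most $\lVert (U_t \setminus L_t) \cap [0,1]^d \rVert$.

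I then claim $(U_t \setminus L_t) \cap [0,1]^d \subseteq \bigcup_{R \in \mathcal{R}_t} R$. The argument is as follows.
\begin{itemize}
\item $[0,1]^d$ is tiled by the cells of the current grid level.
\item Every point of $(U_t \setminus L_t) \cap [0,1]^d$ therefore lies in some cell $R$ of $\mathcal{R}_{t-1}$ or in a cell discarded at an earlier round.
\item A discarded cell never meets $U \setminus L$ at the round it is discarded.
\item The sets $U$ are nested decreasing and the sets $L$ nested increasing, because the refined corners lie inside the earlier hulls.
\end{itemize}
Hence discarded cells stay disjoint from $U_t \setminus L_t$, and the claim follows. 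The cells of $\mathcal{R}_t$ are interior-disjoint with volume $2^{-td}$, so
\[
\lVert (U_t \setminus L_t) \cap [0,1]^d \rVert \le 2^{-td}\,|\mathcal{R}_t| .
\]

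\textbf{Query count.} Lemma~\ref{lem-adaptrec_queries} bounds the number of queries in round $i$ by $O((2^i+1)^{d-1})$. The sum over $i \le t$ is geometric, dominated by its last term, giving $O(2^{t(d-1)}) = O(q)$ since $2^{t(d-1)} \le 2^{d-1} q$. Multiplying by the $2^d$ directions changes only constants. The same lemma gives $|\mathcal{R}_t| = O(2^{t(d-1)})$, so
\[
\lVert (U_t \setminus L_t) \cap [0,1]^d \rVert \le 2^{-td} \cdot O(2^{t(d-1)}) = O(2^{-t}) = O(q^{-1/(d-1)}).
\]
Summing over the constantly many $\vec{v}$ gives error $O(q^{-1/(d-1)})$.

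\textbf{Main obstacle.} The step I expect to be delicate is the containment of $U_t \setminus L_t$ in $\bigcup_{R\in\mathcal{R}_t}R$, namely the monotonicity of $U$ and $L$ across rounds. I would establish it directly from the corners. The $\vec{v}$-corner of a child cell is dominated in the $O_{-\vec{v}}$ sense by the parent's $\vec{v}$-corner, which gives $U_i \subseteq U_{i-1}$. The $(-\vec{v})$-corner of a nonempty child dominates the parent's $(-\vec{v})$-corner, and every nonempty parent has a nonempty child in $\mathcal{R}_i$ whenever the parent met $U \setminus L$, which gives $L_i \supseteq L_{i-1}$. If the case of nonempty parents outside $\mathcal{R}''$ causes trouble, I would instead use the fact that such a parent lies entirely in $L$ or entirely outside $U$, so it cannot affect $U_t \setminus L_t$.
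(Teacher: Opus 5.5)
Your proposal follows the paper's proof. It uses the same per-direction algorithm, correctness from Lemmas~\ref{lem-adaptrec} and~\ref{lem-adaptiverec_bound}, the union bound over the $2^d$ directions, and the volume bound $2^{-td}|\mathcal{R}_t|$ with the counts taken from Lemma~\ref{lem-adaptrec_queries}. The one place you go beyond the paper is the containment $(U_t\setminus L_t)\cap[0,1]^d\subseteq\bigcup_{R\in\mathcal{R}_t}R$. The paper asserts it as following from Lemma~\ref{lem-adaptiverec_bound}, but that lemma controls only $\partial\mathcal{CH}_{\vec{v}}(P)$, not all of $U_t\setminus L_t$. You are right that a monotonicity argument is needed, and your proof of $U_i\subseteq U_{i-1}$ is sound.

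Your argument for $L_{i-1}\subseteq L_i$, however, is incomplete exactly in the case you flag, and the fallback is circular as written. Saying that a nonempty cell discarded in round $j$ lies in $L_j$ and therefore ``cannot affect'' $U_t\setminus L_t$ presupposes $L_j\subseteq L_t$, which is what you are trying to prove. The alternative ``entirely outside $U$'' also cannot occur for a nonempty cell: its $\vec{v}$-corner generates $U_j$, which is downward closed in $[0,1]^d$, so the whole cell lies in $U_j$.

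The simplest repair is to drop $L_t$ and bound $(U_t\setminus\mathcal{CH}_{\vec{v}}(P))\cap[0,1]^d$ directly, which is all the error bound needs. A cell $R$ discarded in round $j\le t$ satisfies $R\cap U_j\subseteq L_j\subset\mathcal{CH}_{\vec{v}}(P)$, by the definition of $\mathcal{R}''_j$ and Lemma~\ref{lem-adaptiverec_bound}. Since $U_t\subseteq U_j$ by your $U$-monotonicity, $R$ misses $U_t\setminus\mathcal{CH}_{\vec{v}}(P)$. Your tiling argument then gives $(U_t\setminus\mathcal{CH}_{\vec{v}}(P))\cap[0,1]^d\subseteq\bigcup_{R\in\mathcal{R}_t}R$.

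Alternatively, $L$-monotonicity itself can be rescued. A nonempty cell in $\mathcal{R}'_{i-1}\setminus\mathcal{R}''_{i-1}$ lies in the open set $L_{i-1}$. So its $(-\vec{v})$-corner is not an extreme point of $\mathcal{CH}(G)+O_{-\vec{v}}$, where $G$ is the set of generating corners, and it can be removed from $G$ without changing $L'_{i-1}$. Every remaining generator comes from a nonempty cell in $\mathcal{R}''_{i-1}$, so it is dominated by the $(-\vec{v})$-corner of a nonempty child in $\mathcal{R}'_i$. This gives $L'_{i-1}\subseteq L'_i$, and hence $L_{i-1}\subseteq L_i$.
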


We complement this result by showing that any deterministic algorithm with $q$ adaptive orthogonal queries has error $\Omega(q^{-1/(d-1)})$.
The argument is similar to and simpler than that of Theorem \ref{thm-nonadapt_orth_lb}.
%so we include a detailed proof in the full version and provide a brief discussion below.
Consider a deterministic algorithm $\mathbf{A}$ that performs $q$ adaptive orthogonal queries.
The idea is again to construct two sets $Z$ and $Z'$ which $\mathbf{A}$ cannot distinguish.
Take two parallel hyperplanes $H$ and $H'$ in $\mathbb{R}^d$ with distance $\delta = q^{-1/(d-1)}/c$ for a sufficiently large constant $c$ such that the $(d-1)$-dimensional volumes of $H \cap [0,1]^d$ and $H' \cap [0,1]^d$ are $\Omega(1)$.
Then $Z$ is essentially the set of all points in $H \cap \mathbb{R}^d$.
We run $\mathbf{A}$ on $P = Z$, and let $\Box_1,\dots,\Box_q$ be the queries performed.
We say $\Box$ is \textit{good} if $\Box \cap H = \emptyset$ and $\Box \cap H' \neq \emptyset$.
By construction, if $\Box$ is good, we can guarantee that the $(d-1)$-dimensional volume of $\Box \cap H'$ is way smaller than $\frac{1}{q}$.
Thus, the good boxes among $\Box_1,\dots,\Box_q$ cannot cover $H' \cap [0,1]^d$, implying the existence of a point $p^* \in H' \cap [0,1]^d$ not contained in any of $\Box_1,\dots,\Box_q$.
Then simply define $Z' = Z \cup \{p^*\}$.
It turns out that $\Box_i \cap Z = \emptyset$ iff $\Box_i \cap Z' = \emptyset$ for all $i \in [q]$.
As such, when running on $P = Z$ and $P = Z'$, the algorithm $\mathbf{A}$ performs exactly the same and provides the same output.
The fact $\lVert \mathcal{CH}(Z') \backslash \mathcal{CH}(Z) \rVert = \Omega(\delta) = \Omega(q^{-1/(d-1)})$ then implies that the error of $\mathbf{A}$ is $\Omega(q^{-1/(d-1)})$.

%We produce two sets of points $Z$, $Z'$, and similarly argue that any deterministic algorithm needs to make many good queries in order to distinguish $Z$ from $Z'$.
%However, there are two key differences to cope with the fact that these queries may depend on our chosen point set.
%First, we start with only two hyperplanes and have to assume that all queries can be good.
%Second, we have to make $Z$ a sufficiently dense set of points to ensure that each not good query contains a point of $Z$, since we cannot pick one point from each query a priori. The proof is similar to that of Theorem \ref{thm-nonadapt_orth_lb}, and included in the appendix.

\begin{theorem} \label{thm-adapt_orth_lb}
%For any constant $d \geq 1$, there exist $q > 0$ and $P_A \subset \mathbb{R}^d$ such that $A$ approximates $||\mathcal{CH}(P_A)||_d$ with additive error $O(1/q)$ only if $A$ uses $\Omega(q^{d-1})$ adaptive rectangle queries.
Any deterministic algorithm for \textnormal{\sc Approximate Convex Hull} in $\mathbb{R}^d$ that performs $q$ adaptive orthogonal emptiness queries has error $\Omega(q^{-1/(d-1)})$.
\end{theorem}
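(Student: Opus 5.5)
We follow the two-set indistinguishability argument of Theorem~\ref{thm-nonadapt_orth_lb}. The new ingredient is that the query sequence now depends on the answers, so we first fix it by simulating $\mathbf{A}$ on a ``dense'' input.

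Let $H$ be the hyperplane $x_1+\cdots+x_d=1$ and $H'$ the hyperplane $x_1+\cdots+x_d=1+\delta\sqrt{d}$, so that the distance between them is $\delta$. Here $\delta = q^{-1/(d-1)}/c$ with $c$ a large constant depending only on $d$. Both $H\cap[0,1]^d$ and $H'\cap[0,1]^d$ have $(d-1)$-volume $\Omega(1)$. We may assume every query box lies in $[0,1]^d$, since clipping a box to $[0,1]^d$ does not change the oracle's answer for any $P\subseteq[0,1]^d$.

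\textbf{Fixing the query sequence.} Run $\mathbf{A}$ against the idealized oracle for the set $H\cap[0,1]^d$: a box is reported nonempty iff it meets $H\cap[0,1]^d$. Since $\mathbf{A}$ is deterministic, this produces a fixed sequence $\Box_1,\dots,\Box_q$. Define the finite set $Z$ to consist of the vertices of $H\cap[0,1]^d$, together with one point $p_j\in H\cap\Box_j$ for every $j$ with $H\cap\Box_j\neq\emptyset$. Note $p_j\in[0,1]^d$ because $\Box_j\subseteq[0,1]^d$. By induction on the query index, $\mathbf{A}$ run on $P=Z$ issues exactly $\Box_1,\dots,\Box_q$ and receives the same answers: a box meeting $H$ contains its $p_j$, and a box missing $H$ misses $Z\subseteq H$.

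\textbf{Covering argument.} Call $\Box_j$ good if $\Box_j\cap H=\emptyset$ and $\Box_j\cap H'\neq\emptyset$. A good box is connected, disjoint from $H$, and touches $H'$, so it lies in the open side $x_1+\cdots+x_d>1$. Hence its lower corner $(x_1^-,\dots,x_d^-)$ satisfies $1< \sum_k x_k^- \le 1+\delta\sqrt d$. Exactly as in Lemma~\ref{lem-smallpiece}, this gives $\lVert H'\cap\Box_j\rVert' = O(\delta^{d-1}) = O(c^{-(d-1)}/q)$. Summing over at most $q$ good boxes yields $O(c^{-(d-1)})$, which is smaller than $\lVert H'\cap[0,1]^d\rVert'$ for $c$ large. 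So there is a point $p^*\in H'\cap[0,1]^d$ lying in no good box. Set $Z'=Z\cup\{p^*\}$.

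\textbf{Indistinguishability and volume.} For each $j$, if $p^*\notin\Box_j$ the answers for $Z$ and $Z'$ coincide. If $p^*\in\Box_j$, then $\Box_j$ meets $H'$ but is not good, so it meets $H$; hence $p_j\in Z\cap\Box_j$ and both answers are ``nonempty''. Thus, by the same induction, $\mathbf{A}$ issues identical queries and returns the same $C^*$ on $Z$ and on $Z'$.

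Since $\lVert\mathcal{CH}(Z)\rVert=0$ and $\mathcal{CH}(Z')$ is a pyramid of height $\delta$ over the base $H\cap[0,1]^d$, we get $\lVert\mathcal{CH}(Z')\setminus\mathcal{CH}(Z)\rVert=\Omega(\delta)$, as in Lemma~\ref{lem-totalvol}. By the triangle inequality for symmetric differences, $C^*$ has error $\Omega(\delta)=\Omega(q^{-1/(d-1)})$ on one of the two inputs.

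\textbf{Main obstacle.} The delicate step is the adaptivity bookkeeping. The queries must be fixed before $Z$ is constructed, yet $Z$ has to be finite and still reproduce every answer. Simulating $\mathbf{A}$ on the continuum $H\cap[0,1]^d$ and then picking one witness point per query that meets $H$ resolves this; the remaining steps are routine volume estimates.
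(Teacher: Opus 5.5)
Your proof is correct. Its skeleton is the paper's: two parallel hyperplanes at distance $\Theta(\delta)$, ``good'' boxes whose slices of the far hyperplane have $(d-1)$-volume $O(\delta^{d-1})$, an uncovered point $p^*$, and the comparison of a flat simplex with a pyramid.

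Where you differ is in how adaptivity is handled. The paper commits to the input first. It places a ``sufficiently dense'' set $Z$ on an offset hyperplane $x_1+\cdots+x_d=1+\epsilon$, so that every box meeting both $H_0$ and $H_1$ contains a point of $Z$. It then runs $\mathbf{A}$ on this $Z$ and loses $\epsilon$ in the height of the pyramid.

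You commit to the transcript first. You simulate $\mathbf{A}$ against the idealized oracle for the continuum $H\cap[0,1]^d$. Only then do you pick one witness $p_j\in H\cap\Box_j$ per query meeting $H$, exactly as in the non-adaptive construction. Induction on the query index then certifies that $Z$ and $Z'$ reproduce the simulated answers.

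What your route buys:
\begin{itemize}
\item a genuinely finite $Z$ with $|Z|\le q+d$;
\item no $\epsilon$-perturbation, so the pyramid has height exactly $\delta$;
\item it sidesteps a weak point of the paper's version. Query boxes can be arbitrarily thin, so no finite set on the offset hyperplane meets every box crossing from $H_0$ to $H_1$. The paper's density requirement therefore effectively forces $Z$ to be the whole slice of that hyperplane inside $[0,1]^d$.
\end{itemize}
The paper's version, in turn, has the mild conceptual advantage that $Z$ is chosen independently of $\mathbf{A}$.

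Your choice $\delta=q^{-1/(d-1)}/c$ is the scaling the statement needs. The paper's written proof sets $\delta=1/q^{d-1}$ at one point, which is a slip relative to its own informal sketch.
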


\begin{proof}
Let $\mathbf{A}$ be any deterministic algorithm that, given $q \geq 0$, $d \geq 1$, and oracle access $\mathcal{O_P}$ to an unknown point set $P$, approximates $\lVert \mathcal{CH}(P) \rVert$ using at most $q$ adaptive orthogonal emptiness queries.
We use just two of the hyperplanes defined in the previous section.
In particular, we have $H_0$, $H_1$, where $H_i$ is defined by the equations $x_1 + \cdots + x_d = 1 + \delta i$. 
We similarly define a query to be \textit{good} if it intersects $H_1$ but not $H_0$ (consistent with the definition of $1$-good).
Different from the previous section, we let $\delta = 1/q^{d-1}$.

Now, let $H_0'$ be the hyperplane defined by $x_1 + \cdots + x_d = 1 + \epsilon$ for some $\epsilon << \delta$.
Let $Z$ be a sufficiently dense set of points on $H_0'$ such that any orthogonal range intersecting both $H_0$ and $H_1$ includes at least one point of $Z$ (this density will depend on $\epsilon$).
We run $\mathbf{A}$ on $P$ and let $\Box_1, \cdots, \Box_q$ be its queries, labeled such that $\Box_1, \cdots, \Box_t$ are good and $\Box_{t+1}, \cdots, \Box_q$ are not, for some $0 \leq t \leq q$ determined by $\mathbf{A}$.
Now that $\mathbf{A}$'s queries on $Z$ are fixed, we may apply a similar argument to construct $Z'$.
By Lemma \ref{lem-smallpiece}, $\lVert H_i \cap \Box_j \rVert = O(\delta^{d-1})$ for $j \in [t]$.
Now we would like to apply Lemma \ref{lem-avoidance}, but we do not have the needed bound on $t$ (recall in the proof of Lemma \ref{lem-avoidance} that $t \leq q/n$ for $n = \Theta(q^{1/d})$); we only have $t \leq q$.
However, we have conveniently chosen a smaller $\delta = 1/(qc)^{d-1}$ so that Lemma \ref{lem-avoidance} still holds.
Indeed, we still have that $\lVert H_1 \cap [0,1]^d \rVert = \Omega(1)$, while $\sum_{j=1}^t ||H_i \cap \Box_j|| = O(\delta^{d-1}t) = O(1/(qc) \cdot q) = O(1/c)$.
As a consequence, we may select a point $p \in (H_1 \cap [0,1]^d)\setminus (\bigcup_{j=1}^t \Box_j)$, and similarly define $Z' = Z \cup \{p\}$.
We have preserved the relevant properties of not good queries so that Lemma \ref{lem-indistinguish} holds. 
In particular, $\mathbf{A}$ cannot distinguish $Z$ from $Z'$.
Finally, we compute the volume of $\mathcal{CH}(Z') \setminus \mathcal{CH}(Z).$
We use the same argument as in Lemma \ref{lem-totalvol}, with the caveat that the distance from $p$ to $H_0'$ (the hyperplane containing $Z$) is only $\delta - \epsilon$.
However, for a sufficiently small choice of $\epsilon << \delta$, we get the desired bound
$\lVert \mathcal{CH}(Z') \setminus \mathcal{CH}(Z) \rVert = \Omega(\delta - \epsilon) = \Omega(1/q^{d-1})$.
Thus, for one of $Z, Z'$, it must be that $\mathbf{A}$ answers with error $\Omega(1/q^{d-1})$.
\end{proof}

\section{Approximating convex hulls via halfplane queries}

In this section, we present our results for \textsc{Approximate Convex Hull} in $\mathbb{R}^2$ via halfplane emptiness queries.
%For non-adaptive queries (Section~\ref{sec-nonadaptorth}), the algorithm is very simple, while the lower bound proof is nontrivial and interesting.
%For adaptive queries (Section~\ref{sec-adaptorth}), the algorithm is more technical and the lower bound proof is simpler.
%In this section, we develop algorithms and lower bounds for approximating convex hulls via halfplane emptiness queries.
%Similar to the previous section, we can only access the point set $P$ via an oracle $O$ that accepts a query region $H$, and returns yes if $P \cap H = \emptyset$, no otherwise.
%The only difference is that $H$ is now a halfplane.
Instead of working on halfplane emptiness queries directly, it is more convenient to consider \textit{extreme halfplane queries}, which we formalize below.
For two parallel lines $\ell$ and $\ell'$, let $\mathsf{dist}(\ell,\ell')$ denote the distance between them.

\begin{definition}[extreme halfplane oracle]
    An \textbf{extreme halfplane oracle} on $P$ takes a unit vector $\vec{v} \in \mathbb{S}^1$ as input and returns a halfplane $H$ with normal vector $\vec{v}$ such that $P \subseteq H$.
    We say the oracle is \textbf{$\delta$-accurate} for a number $\delta \geq 0$ if for every $\vec{v} \in \mathbb{S}^1$, the halfplane $H$ returned by the oracle for query $\vec{v}$ satisfies the condition $\mathsf{dist}(\partial H, \partial H^*) \leq \delta$, where $H^*$ is the minimal halfplane with normal vector $\vec{v}$ satisfying $P \subseteq H^*$.
\end{definition}

It is easy to see that a $0$-accurate extreme halfplane oracle is stronger than a halfplane emptiness oracle.
On the other hand, one can simulate an extreme halfplane query with certain accuracy via multiple non-adaptive or adaptive halfplane emptiness queries.

\begin{lemma} \label{lem-extremetoemptiness}
    Given access to a halfplane emptiness oracle $\mathcal{O}$ on an unknown set $P$ of points in the plane, one can build a $\delta$-accurate extreme halfplane oracle on $P$ which performs $O(\frac{1}{\delta})$ non-adaptive queries to $\mathcal{O}$ or performs $O(\log \frac{1}{\delta})$ adaptive queries to $\mathcal{O}$.
\end{lemma}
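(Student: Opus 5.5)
To simulate a query with direction $\vec{v}$, we work with the one-parameter family of halfplanes $G_s = \{p : \langle p,\vec{v}\rangle \geq s\}$. These are the closed complements of halfplanes with normal $\vec{v}$. Since $P \subseteq [0,1]^2$, every $p \in P$ satisfies $\langle p,\vec{v}\rangle \in [-\sqrt{2},\sqrt{2}]$. Let $s^* = \max_{p \in P}\langle p,\vec{v}\rangle$, so that $H^* = \{p : \langle p,\vec{v}\rangle \le s^*\}$. The emptiness query on $G_s$ then answers $\mathsf{yes}$ exactly when $s > s^*$. The task therefore reduces to estimating $s^*$ from above within additive error $\delta$ using threshold comparisons. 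The extreme halfplane returned will be $H = \{p : \langle p,\vec{v}\rangle \le s\}$ for the estimate $s \ge s^*$ we compute. This $H$ has normal $\vec{v}$, contains $P$, and satisfies $\mathsf{dist}(\partial H,\partial H^*) = s - s^*$.

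In the non-adaptive model, we fix a grid $s_k = -\sqrt{2} + k\delta$ for $k = 0,1,\dots,\lceil 2\sqrt{2}/\delta\rceil + 1$, which is $O(1/\delta)$ values. We query all the $G_{s_k}$ at once. Let $k_0$ be the smallest index whose answer is $\mathsf{yes}$. Such an index exists because the last grid value exceeds $\sqrt{2} \ge s^*$. We output $s = s_{k_0}$. Since $G_{s_{k_0}}$ is empty, $s_{k_0} > s^*$, so $P \subseteq H$. If $k_0 > 0$, then $G_{s_{k_0-1}}$ is nonempty, so $s_{k_0-1} \le s^*$ and hence $s - s^* \le \delta$. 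If $k_0 = 0$, then $P$ is empty or the bound is trivial. If we assume $P \neq \emptyset$, then $s^* \ge -\sqrt{2} = s_0$, so $G_{s_0}$ is nonempty and this case does not arise.

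In the adaptive model, we binary search over the same interval $[-\sqrt{2},\sqrt{2}+\delta]$ while maintaining the invariant $a \le s^* < b$. Initially this means $G_a$ is nonempty and $G_b$ is empty. At each step we query $G_{(a+b)/2}$ and update $a$ or $b$ accordingly. After $O(\log(1/\delta))$ steps we have $b - a \le \delta$, and we output $s = b$.

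The argument is routine; the only care needed is with boundary conventions. With closed halfplanes $G_s$, we must use the strict inequality $s > s^*$ for emptiness. This guarantees $P \subseteq H$ even when the output halfplane is taken to be closed.
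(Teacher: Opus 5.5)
Your proposal is correct and follows the paper's argument. The paper likewise queries $O(1/\delta)$ parallel halfplanes with normal $-\vec{v}$, spaced $\delta$ apart across $[0,1]^2$, and returns the opposite halfplane bounded by the last empty one; in the adaptive case it replaces the sweep by binary search. Your explicit treatment of the boundary convention (the closed $G_s$ is empty iff $s > s^*$) and of the assumption $P \neq \emptyset$ is more careful than the paper, which even misstates the key step as $P \cap H_{i+1} = \emptyset$ where nonemptiness is what is needed.
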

\begin{proof}
Let $\vec{v} \in \mathbb{S}^1$ be the query vector of our extreme halfplane oracle.
We first consider how to build an extreme halfplane oracle on $P$ by applying non-adaptive queries to $\mathcal{O}$.
We construct a sequence of nested halfplanes $H_0 \subseteq H_1 \subseteq \cdots \subseteq H_n$ with normal vector $-\vec{v}$ such that $H_0 \cap [0,1]^2 = \emptyset$, $[0,1]^2 \subseteq H_n$, and $\mathsf{dist}(\partial H_{i-1}, \partial H_i) = \delta$ for all $i \in [n]$.
Clearly, such a sequence exists for $n = O(\frac{1}{\delta})$.
We simply query $\mathcal{O}$ using $H_0,H_1,\dots,H_n$ to find the largest $i \in [n]$ such that $P \cap H_i = \emptyset$.
Then our extreme halfplane oracle returns $H_i'$, which is the halfspace with normal vector $\vec{v}$ satisfying $\partial H_i' = \partial H_i$.
Since $P \cap H_i = \emptyset$, we have $P \subseteq H_i'$.
Furthermore, since $P \cap H_{i+1} = \emptyset$, the oracle is $\delta$-accurate.

If we are allowed to apply adaptive queries to $\mathcal{O}$, then we can reduce the number of queries by binary search.
As above, we try to find the largest $i \in [n]$ such that $P \cap H_i = \emptyset$ and then return $H_i'$.
With adaptive queries, we can do binary search in the sequence $H_0,H_1,\dots,H_n$ to find $i$, which only requires $\log n = O(\log \frac{1}{\delta})$ queries to $\mathcal{O}$.
\end{proof}

We will also need the following lemma to argue that the error introduced by querying a $\delta$-accurate extreme halfplane oracle, as opposed to a perfectly accurate one, is only $O(\delta)$.
%The proof is included in the full version.

\begin{lemma} \label{lem-enlarge}
    Let $H_1,\dots,H_r$ be halfplanes, $C = \bigcap_{i=1}^r H_i$, and $\delta \in [0,1]$.
    Suppose $C \subseteq [0,1]^2$.
    For each $i \in [r]$, let $H_i'$ be a halfplane parallel to $H_i$ such that $C \subseteq H_i$ and $\mathsf{dist}(\partial H_i', \partial H_i) \leq \delta$.
    Define $C' = \bigcap_{i=1}^r H_i'$.
    Then $\lVert (C' \cap [0,1]^2) \oplus C \rVert = O(\delta)$.
\end{lemma}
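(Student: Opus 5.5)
We interpret the hypothesis as saying that each $H_i'$ is the outward shift of $H_i$: $H_i \subseteq H_i'$ and $\mathsf{dist}(\partial H_i',\partial H_i)\le\delta$. This is the situation produced by a $\delta$-accurate extreme halfplane oracle. Then $C \subseteq C'$, so it suffices to show $\lVert (C'\cap[0,1]^2)\setminus C\rVert = O(\delta)$.

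\emph{Reduction to the defining halfplanes.} Call a halfplane $H_i$ defining if $\partial H_i$ contains an edge of the polygon $C$. Let $C''$ be the intersection of the shifted halfplanes $H_j'$ over the defining $H_j$ only. Removing constraints can only enlarge an intersection, so $C' \subseteq C''$. Hence we may assume $C$ is a convex polygon with edges $e_1,\dots,e_m$ in cyclic order. Edge $e_j$ has outward unit normal $\vec n_j$, and its line is shifted outward by at most $\delta$. (If $C$ is degenerate, i.e. a segment or a point, a limiting version of the same argument should apply; I would treat it at the end.)

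\emph{Decomposition by nearest point.} Let $x \in (C''\cap[0,1]^2)\setminus C$ and let $\pi(x)$ be its nearest point in $C$. Either $\pi(x)$ lies in the relative interior of some edge $e_j$, or $\pi(x)$ is a vertex $w$.
\begin{itemize}
\item In the first case, $x - \pi(x)$ is a positive multiple of $\vec n_j$. Since $\langle x-\pi(x),\vec n_j\rangle \le \delta$, the point $x$ lies in the rectangle of width $\delta$ erected outward on $e_j$. Summing over edges gives area at most $\delta\cdot\mathrm{perimeter}(C) \le 4\delta$, because $C\subseteq[0,1]^2$.
\item In the second case, $w$ lies between edges $e_j$ and $e_{j+1}$, and $x-w$ lies in the normal cone spanned by $\vec n_j,\vec n_{j+1}$. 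Let $\varphi_w<\pi$ be the angle between these two normals. Since $x \in C''$, we have $0\le\langle x-w,\vec n_j\rangle\le\delta$ and $0\le\langle x-w,\vec n_{j+1}\rangle\le\delta$. So $x$ lies in a parallelogram (a kite) of area $\Theta(\delta^2\tan(\varphi_w/2))$.
\end{itemize}

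\emph{Main obstacle: sharp vertices.} The vertex case is the delicate one. When $C$ has a very sharp vertex, $\varphi_w$ is close to $\pi$, the kite becomes long and thin, and its area $\delta^2\tan(\varphi_w/2)$ is unbounded. I would handle this in two ways.
\begin{itemize}
\item For any vertex, the kite lies in the strip $0\le\langle x-w,\vec n_j\rangle\le\delta$ of width $\delta$. Intersected with $[0,1]^2$ this strip has area at most $\sqrt2\,\delta$, so every vertex contributes $O(\delta)$.
\item The exterior angles satisfy $\sum_w \varphi_w = 2\pi$. So at most three vertices have $\varphi_w\ge\pi/2$, and these contribute $O(\delta)$ by the previous point. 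Every other vertex has $\tan(\varphi_w/2)\le \tfrac{4}{\pi}\cdot\tfrac{\varphi_w}{2}$, so together they contribute $O(\delta^2\sum_w\varphi_w)=O(\delta^2)$.
\end{itemize}
Adding the edge and vertex contributions gives $\lVert (C'\cap[0,1]^2)\oplus C\rVert = O(\delta)$ for $\delta\le 1$.
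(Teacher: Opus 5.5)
Your proof is correct in substance but takes a different route from the paper's.

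\emph{Comparison.} The paper argues from the side of $C'$. It pairs each edge $\sigma_i$ of $C$ with the parallel edge $\sigma_i'$ of $C'$ and covers $C'\setminus C$ by trapezoids $T_i$ of area at most $(|\sigma_i|+|\sigma_i'|)\delta$. It bounds $\sum|\sigma_i'|$ over the edges of $C'$ lying in $[-\delta,1+\delta]^2$ by the perimeter. The $O(1)$ edges of $C'$ that cross the boundary of that square are handled by a strip of width $\delta$.

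You argue from the side of $C$. Nearest-point projection gives edge rectangles of total area at most $\delta\cdot\mathrm{perimeter}(C)$, plus vertex kites. The paper's long edges of $C'$ reappear as your sharp vertices, which the exterior-angle sum limits to $O(1)$.

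What your route buys: it only uses $C'\subseteq C''$ and never needs the shape of $C'$. The paper's trapezoids presuppose that every $H_i'$ still contributes an edge of $C'$ sitting between $\sigma_{i-1}'$ and $\sigma_{i+1}'$, and unequal shifts can break this. For example, take $C=[0,1]^2\cap\{x+y\le 2-\epsilon\}$ and shift only the diagonal constraint, by $\delta>\epsilon$; then $C'=[0,1]^2$ and that constraint contributes no edge. What the paper's route buys: it is shorter, and it finds the bad pieces by a crossing count rather than by angles.

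\emph{Repairs needed.} Three steps need fixing.

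First, the inequality $0\le\langle x-w,\vec n_j\rangle$ fails when $\varphi_w>\pi/2$, which is exactly where you use it. The point $w+\delta\vec n_{j+1}$ is admissible, but $\langle\delta\vec n_{j+1},\vec n_j\rangle=\delta\cos\varphi_w<0$. The correct region is not the parallelogram cut out by your two double inequalities, whose area is $\delta^2/\sin\varphi_w$. It is the normal cone cut by the two upper bounds: the kite with vertices $w$, $w+\delta\vec n_j$, $w+\delta\vec n_{j+1}$ and a fourth point on the bisector. This kite has area $\delta^2\tan(\varphi_w/2)$. It lies in the strip $\delta\cos\varphi_w\le\langle x-w,\vec n_j\rangle\le\delta$, of width at most $2\delta$, so each sharp vertex still contributes $O(\delta)$.

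Second, four vertices, not three, can have $\varphi_w\ge\pi/2$ (a rectangle). This is harmless.

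Third, the degenerate case is never proved. A limiting argument would need continuity of $C$ in the offsets; arguing directly is simpler. Suppose $C$ has empty interior. Then no point satisfies all constraints $\langle x,\vec m_k\rangle\le c_k$ strictly. A theorem of the alternative gives weights $\lambda_k\ge 0$ with $\sum_k\lambda_k=1$, $\sum_k\lambda_k\vec m_k=0$ and $\sum_k\lambda_k c_k\le 0$, and a basic solution uses at most three indices. Take the index $k$ with $\lambda_k\ge\frac{1}{3}$. Every $x\in C'$ then satisfies $c_k-2\delta\le\langle x,\vec m_k\rangle\le c_k+\delta$. Hence $\lVert C'\cap[0,1]^2\rVert\le 3\sqrt{2}\,\delta$, and this also covers $C=\emptyset$.
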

\begin{proof}
    Without loss of generality, we can assume that $C \cap \partial H_i \neq \emptyset$ for all $i \in [r]$ and the normal vectors of $H_1,\dots,H_r$ are sorted in clockwise order.
Let $\sigma_i$ be the edge of $C$ corresponding to $H_i$ for $i \in [r]$.
Without loss of generality, assume $\sigma_1,\dots,\sigma_r$ are sorted in clockwise order along $\partial C$.
For convenience, write $\sigma_0 = \sigma_r$.
For $i \in [r]$, let $a_i = \sigma_{i-1} \cap \sigma_i$ be the vertex of $C$ incident to $\sigma_{i-1}$ and $\sigma_i$.
Similarly, let $\sigma_i'$ be the edge of $C'$ corresponding to $H_i'$ and $a_i'= \sigma_{i-1}' \cap \sigma_i'$ for $i \in [r]$.
Define $s_i$ as the segment connecting $a_i$ and $a_i'$.
Then for each $i \in [r]$, the four segments $\sigma_i$, $s_i$, $\sigma_i'$, and $s_{i+1}$ enclose a trapezoid $T_i$.
The height of $T_i$ is $\delta$ and the two parallel edges of $T_i$ are $\sigma_i$ and $\sigma_i'$, which implies $\lVert T_i \rVert = (|\sigma_i|+|\sigma_i'|) \cdot \delta$.
Note that $(C' \cap [0,1]^2) \oplus C = (C' \cap [0,1]^2) \backslash C = \bigcup_{i=1} (T_i \cap [0,1]^2)$.
%So it suffices to bound $\sum_{i \in I} \lVert T_i \cap [0,1]^2 \rVert$.
%We claim that $\lVert T_i \cap [0,1]^2 \rVert = O(|\sigma_i' \cap [0,1]^2| \cdot \delta)$.

Let $I = \{i \in [r]: \sigma_i' \subseteq [-\delta,1+\delta]^2\}$.
%We have $\lVert T_i \cap [0,1]^2 \rVert = \lVert T_i \rVert = (|\sigma_i|+|\sigma_i'|) \cdot \delta$ for all $i \in I$.
Then $\sigma_i'$ is an edge of the convex polygon $C' \cap [-\delta,1+\delta]^2$ for all $i \in I$.
Therefore, we have $\sum_{i \in I} |\sigma_i'| = O(1+\delta) = O(1)$.
%Then $\sum_{i \in I} \lVert T_i \rVert = O(\delta \cdot \sum_{i \in I} |\sigma_i|) = O(\delta)$.
It follows that
\begin{equation*}
    \sum_{i \in I} \lVert T_i \cap [0,1]^2 \rVert \leq \sum_{i \in I} \lVert T_i \rVert = \left( \sum_{i \in I} |\sigma_i| + \sum_{i \in I} |\sigma_i'| \right) \delta = O(\delta).
\end{equation*}
Next, we observe $|[r] \backslash I| = O(1)$.
By our construction, it is easy to see that $\sigma_i' \cap [-\delta,1+\delta]^2 \neq \emptyset$ for all $i \in [r]$.
Thus, for each $i \in [r] \backslash I$, $\sigma_i'$ intersects the boundary of $[-\delta,1+\delta]^2$.
But the boundary of $[-\delta,1+\delta]^2$ consists of four edges, each of which can intersect at most two $\sigma_i'$, which implies $|[r] \backslash I| \leq 8$.
To complete the proof, it suffices to show $\lVert T_i \cap [0,1]^2 \rVert = O(\delta)$ for all $i \in [r] \backslash I$.
Let $\ell$ (resp., $\ell'$) be the line that contains $\sigma_i$ (resp., $\sigma_i'$), and $z$ be the strip in between $\ell$ and $\ell'$.
So the width of $z$ is $\delta$ and $\lVert z \cap [0,1]^2 \rVert = O(\delta)$.
Therefore, we have $\lVert T_i \cap [0,1]^2 \rVert \leq \lVert z \cap [0,1]^2 \rVert = O(\delta)$.
As $|[r] \backslash I| = O(1)$, we further deduce that $\sum_{i \in [r] \backslash I} \lVert T_i \cap [0,1]^2 \rVert = O(\delta)$.
Finally, the fact $\lVert (C' \cap [0,1]^2) \oplus C \rVert = \sum_{i=1}^r  \lVert T_i \cap [0,1]^2 \rVert$ implies the desired bound $\lVert (C' \cap [0,1]^2) \oplus C \rVert = O(\delta)$.
\end{proof}

In what follows, we discuss the results for non-adaptive halfplane queries and adaptive halfplane queries individually.
Same as before, for non-adaptive queries (Section~\ref{sec-nonadapthplane}), the algorithm is simple and the lower bound proof is nontrivial, while for adaptive queries (Section~\ref{sec-adapthplane}), the algorithm is more technical and the lower bound proof is simpler.

\subsection{Non-adaptive halfplane queries} \label{sec-nonadapthplane}
We first present our algorithmic result with non-adaptive halfplane emptiness queries.
Let $P \subseteq [0,1]^2$ be a (unknown) set of points.
We shall show how to approximate $\mathcal{CH}(P)$ with error $O(\frac{1}{q}+\delta)$ via $q$ non-adaptive queries to a $\delta$-accurate extreme halfplane oracle $\mathcal{E}$ on $P$.
Setting $\delta = \frac{1}{q}$ and applying Lemma~\ref{lem-extremetoemptiness} will give us the desired result.

\begin{algorithm}
    \caption{\textsc{NonAdaptiveHalfplane}$(q)$}
    \begin{algorithmic}[1]
        \State $V \leftarrow \{(\sin \frac{2i\pi}{q},\cos \frac{2i\pi}{q}) : i \in [q]\}$
        \State $H_{\vec{v}} \leftarrow \textsc{Query}(\vec{v})$ for all $\vec{v} \in V$
        \State $C^* \leftarrow \bigcap_{\vec{v} \in V} H_{\vec{v}}$
        \State \textbf{return} $C^* \cap [0,1]^2$
    \end{algorithmic}
    \label{alg-nonadahplane}
\end{algorithm}

For a unit vector $\vec{v} \in \mathbb{S}^1$, let $\textsc{Query}(\vec{v})$ denote the output of $\mathcal{E}$ when queried with $\vec{v}$.
Our algorithm for approximating $\mathcal{CH}(P)$ is very simple, and is presented in Algorithm~\ref{alg-nonadahplane}.
Let $V = \{(\sin \frac{2i\pi}{q},\cos \frac{2i\pi}{q}) : i \in [q]\}$ be $q$ unit vectors uniformly picked on $\mathbb{S}^1$.
We query $\mathcal{E}$ with the vectors in $V$, and let $H_{\vec{v}} = \textsc{Query}(\vec{v})$ for all $\vec{v} \in V$.
After that, we simply return $C^* \cap [0,1]^2$ where $C^* = \bigcap_{\vec{v} \in V} H_{\vec{v}}$.
Clearly, the number of queries to $\mathcal{E}$ is $q$, and they are non-adaptive as the algorithm performs them at the same time. 

We now bound the error to $O(1/\sqrt{q})$.
Write $\vec{v}_i = (\sin \frac{2i\pi}{q},\cos \frac{2i\pi}{q})$.
Then $V = \{\vec{v}_1,\dots,\vec{v}_q\}$, and let $\vec{v}_0 = \vec{v}_q$.
Define $L_{{\vec{v}_i}}$ as the minimal halfplane with normal vector $\vec{v}_i$ that contains $P$ and $C = \bigcap_{i=1}^q L_{{\vec{v}_i}}$.
We have $C \subseteq C^*$, since $L_{\vec{v}_i} \subseteq H_{\vec{v}_i}$ for all $i \in [n]$.
%Now $\mathcal{CH}(P) \subseteq C \subseteq C^* \cap [0,1]^2$.
%By Lemma~\ref{lem-enlarge}, $\lVert (C^* \cap [0,1]^2) \backslash C \rVert = O(\delta)$.
%So it suffices to show $\lVert C \backslash \mathcal{CH}(P) \rVert = O(\frac{1}{q^2}+\delta)$.
%Define $\mu_i = C \cap \partial L_{\vec{v}_i}$ for $i \in [n]$ and $\mu_0 = \mu_n$.
We first bound $\lVert C \backslash \mathcal{CH}(P) \rVert$.

\begin{lemma} \label{lem-CandCHP}
    $\lVert C \backslash \mathcal{CH}(P) \rVert = O(\frac{1}{q})$.
\end{lemma}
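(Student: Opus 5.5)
Throughout, $C=\bigcap_{i=1}^q L_{\vec v_i}$ is the polygon circumscribed about $\mathcal{CH}(P)$ using the $q$ equally spaced normals $\vec v_i$. The plan is to decompose $C\backslash\mathcal{CH}(P)$ into $q$ triangles, one between each consecutive pair of directions, and bound the area of each.

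\textbf{Decomposition.} For each $i\in[q]$, pick a support point $p_i\in P$ on $\partial L_{\vec v_i}$; it exists since $L_{\vec v_i}$ is minimal and $P$ is finite. Let $a_i=\partial L_{\vec v_{i-1}}\cap\partial L_{\vec v_i}$. Consecutive normals differ by the angle $\theta=2\pi/q<\pi$, so these lines do intersect. The polygon $\mathcal{CH}(\{p_1,\dots,p_q\})\subseteq\mathcal{CH}(P)$ is inscribed in $C$. The region $C\backslash\mathcal{CH}(\{p_i\})$ is covered by the triangles $\Delta_i=p_{i-1}a_ip_i$. This is because $\partial C$ between $p_{i-1}$ and $p_i$ consists of the two segments $p_{i-1}a_i$ and $a_ip_i$. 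Hence $\lVert C\backslash\mathcal{CH}(P)\rVert\le\sum_i\lVert\Delta_i\rVert$. If $p_{i-1}=p_i$, the triangle is degenerate and contributes zero.

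\textbf{Bounding each triangle.} The interior angle of $\Delta_i$ at $a_i$ is $\pi-\theta$. Writing $x=|p_{i-1}a_i|$ and $y=|a_ip_i|$, we get $\lVert\Delta_i\rVert=\tfrac12xy\sin\theta\le\tfrac{\theta}{2}xy\le\tfrac{\theta}{4}(x+y)^2$. It remains to control $x+y$, which is the length of the part of $\partial C$ between $p_{i-1}$ and $p_i$.

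\textbf{Main obstacle.} The lengths $x+y$ are not individually small, and $a_i$ may lie outside $[0,1]^2$. However, $\sum_i(x_i+y_i)$ is the perimeter of $C$. Each $x_i+y_i$ is at most $O(1)$ whenever the vertex $a_i$ lies within distance $O(1)$ of the square. This holds here: since the angle at $a_i$ is $\pi-\theta\ge\pi/2$ for $q\ge4$, the triangle inequality gives $x+y\le\sqrt2\,|p_{i-1}p_i|\le2$. Therefore
\[
\sum_i\lVert\Delta_i\rVert\le\frac{\theta}{4}\sum_i(x_i+y_i)^2\le\frac{\theta}{4}\cdot\max_i(x_i+y_i)\cdot\mathrm{per}(C).
\]
Next we bound the perimeter. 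Every edge of $C$ has length at most $O(1)$, and $C\subseteq\mathcal{CH}(P)+B_{O(1)}$, so $\mathrm{per}(C)=O(1)$. Alternatively, $\sum_i(x_i+y_i)\le\sqrt2\sum_i|p_{i-1}p_i|\le\sqrt2\cdot\mathrm{per}(\mathcal{CH}(P))\le4\sqrt2$, using convexity of the inscribed polygon.

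Combining, $\lVert C\backslash\mathcal{CH}(P)\rVert\le\frac{\theta}{4}\cdot2\cdot4\sqrt2=O(\theta)=O(1/q)$. Small $q$ is absorbed into the constant.
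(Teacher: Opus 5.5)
Your proof is correct and follows essentially the same route as the paper: the same support points $p_i$, the same triangles $p_{i-1}a_ip_i$ covering $C \backslash \mathcal{CH}(P')$, an area bound of the form (angle $2\pi/q$) times lengths, and a final sum controlled by the perimeter of the inscribed polygon $\mathcal{CH}(P')$. The only difference is in one step. The paper simply asserts $|\mu_i| = O(1)$, whereas you use the obtuse apex angle $\pi - 2\pi/q$ (valid for $q \geq 4$) to get $|p_{i-1}a_i| + |a_ip_i| \leq \sqrt{2}\,|p_{i-1}p_i|$, which makes the edge-length bound self-contained.
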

\begin{proof}
Define $\mu_i = C \cap \partial L_{\vec{v}_i}$ for $i \in [n]$ and $\mu_0 = \mu_n$.
The definition of each $L_{\vec{v}_i}$ guarantees that there exists a point $p_i \in P$ satisfying $p_i \in \partial L_{\vec{v}_i}$.
As $p_i \in P \subseteq C$, we have $p_i \in \mu_i$.
Let $P' = \{p_1,\dots,p_q\}$.
We have $P' \subseteq P$ and thus $\lVert C \backslash \mathcal{CH}(P') \rVert \geq \lVert C \backslash \mathcal{CH}(P) \rVert$.
It suffices to show $\lVert C \backslash \mathcal{CH}(P') \rVert = O(\frac{1}{q})$.
For each $i \in [q]$, let $a_i$ be the intersection point of $\mu_{i-1}$ and $\mu_i$, and $\tau_i$ be the segment with endpoints $p_{i-1}$ and $p_i$.
Note that $\lVert C \backslash \mathcal{CH}(P') \rVert = \sum_{i=1}^q \lVert \triangle p_{i-1} p_i a_i \rVert$.
Also, we have $\lVert \triangle p_{i-1} p_i a_i \rVert \leq |\tau_i| \cdot |\mu_i| \cdot \sin \angle a_ip_i p_{i-1} \leq |\tau_i| \cdot |\mu_i| \cdot \mathsf{ang}(\vec{v}_{i-1},\vec{v}_i)$.
Since $|\mu_i| = O(1)$ and $\mathsf{ang}(\vec{v}_{i-1},\vec{v}_i) = \frac{2\pi}{q}$, we have $\lVert \triangle p_{i-1} p_i a_i \rVert = O(\frac{|\tau_i|}{q})$.
It follows that $\sum_{i=1}^q \lVert \triangle p_{i-1} p_i a_i \rVert = O(\sum_{i=1}^q \frac{|\tau_i|}{q})$.
We have $\sum_{i=1}^n |\tau_i| = O(1)$, as $\tau_1,\dots,\tau_q$ are edges of the convex polygon $\mathcal{CH}(P')$ which is contained in $[0,1]^d$.
Therefore, $\lVert C \backslash \mathcal{CH}(P') \rVert = \sum_{i=1}^q \lVert \triangle p_{i-1} p_i a_i \rVert = O(\frac{1}{q})$.
\end{proof}

Set $C_0 = C\cap [0,1]^2$ and $C_0^* = C^* \cap [0,1]^2$.
Since $P \subseteq [0,1]^2$, the above lemma implies $\lVert C_0 \backslash \mathcal{CH}(P) \rVert = O(\frac{1}{q})$ as well.
Next, we show $\lVert C_0^* \backslash C_0 \rVert = O(\delta)$.
Let $\mathcal{L}$ (resp., $\mathcal{H}$) consist of the halfplanes $L_{\vec{v}_1},\dots,L_{\vec{v}_q}$ (resp., $H_{\vec{v}_1},\dots,H_{\vec{v}_q}$) and four additional halfplanes defined by the equations $x \geq 0$, $x \leq 1$, $y \geq 0$, and $y \leq 1$, respectively.
Clearly, $C_0 = \bigcap_{L \in \mathcal{L}} L$ and $C_0^* = \bigcap_{H \in \mathcal{H}} H$.
Since each $H_{\vec{v}_i}$ is a halfplane parallel to $L_{\vec{v}_i}$ satisfying $C_0 \subseteq H_{\vec{v}_i}$ and $\mathsf{dist}(\partial H_{\vec{v}_i}, \partial L_{\vec{v}_i}) \leq \delta$, Lemma~\ref{lem-enlarge} implies $\lVert (C_0^* \cap [0,1]^2) \backslash C_0 \rVert = O(\delta)$, i.e., $\lVert C_0^* \backslash C_0 \rVert = O(\delta)$.
Combining this with Lemma~\ref{lem-CandCHP}, we have $\lVert C_0^* \backslash \mathcal{CH}(P) \rVert = O(\frac{1}{q}+\delta)$, which is the error of Algorithm~\ref{alg-nonadahplane}.
Setting $\delta = \frac{1}{q}$, we see that one can approximate $\mathcal{CH}(P)$ with error $O(\frac{1}{q})$ via $q$ non-adaptive queries to a $\frac{1}{q}$-accurate extreme halfplane oracle.
Lemma~\ref{lem-extremetoemptiness} shows that a query to a $\frac{1}{q}$-accurate extreme halfplane oracle can be simulated with $O(q)$ non-adaptive halfplane emptiness queries.
Therefore, one can approximate $\mathcal{CH}(P)$ with error $O(\frac{1}{q})$ via $O(q^2)$ non-adaptive halfplane emptiness queries.
Equivalently, if we are allowed to perform $q$ queries, then the error achieved is $O(\frac{1}{\sqrt{q}})$.

\begin{theorem} \label{thm-nonadapt_hplane_ub}
    There exists an algorithm for \textnormal{\sc Approximate Convex Hull} in $\mathbb{R}^2$ that performs $q$ non-adaptive halfplane emptiness queries and has error $O(1/\sqrt{q})$.
\end{theorem}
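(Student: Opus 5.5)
The theorem follows by combining Algorithm~\ref{alg-nonadahplane}, Lemma~\ref{lem-CandCHP}, Lemma~\ref{lem-enlarge} and Lemma~\ref{lem-extremetoemptiness}. The plan is to run Algorithm~\ref{alg-nonadahplane} with $k = \lfloor \sqrt{q}/c \rfloor$ directions for a suitable constant $c$, answering each extreme halfplane query with a $\delta$-accurate oracle, $\delta = 1/k$. Each such query is simulated by Lemma~\ref{lem-extremetoemptiness} with $O(1/\delta) = O(k)$ non-adaptive halfplane emptiness queries. Here we use $n = O(1/\delta)$ fixed nested halfplanes with normal $-\vec{v}$ sweeping $[0,1]^2$ at spacing $\delta$.

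The key point for non-adaptivity is that the direction set $V$ is fixed in advance, independent of $P$. Therefore the whole family of $k \cdot O(k) = O(k^2)$ emptiness queries is determined before any answer is seen, and can be issued as one batch. Choosing $c$ large makes the total at most $q$. From the batched answers we recover, for each $\vec{v}_i$, the halfplane $H_{\vec{v}_i}$ as the complement-side halfplane of the largest empty $H_j$ in the sweep. It contains $P$ and is within distance $\delta$ of the tight halfplane $L_{\vec{v}_i}$.

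For the error, note $\mathcal{CH}(P) \subseteq C_0 \subseteq C_0^*$, so the symmetric difference equals $C_0^* \setminus \mathcal{CH}(P)$. We split it as $(C_0^* \setminus C_0) \cup (C_0 \setminus \mathcal{CH}(P))$.
\begin{itemize}
\item Lemma~\ref{lem-CandCHP}, applied with $k$ uniformly spaced directions, bounds the second part by $O(1/k)$.
\item Lemma~\ref{lem-enlarge}, applied to the family $\mathcal{L}$ (the $L_{\vec{v}_i}$ together with the four sides of the unit square) and its shifted copies $\mathcal{H}$, bounds the first part by $O(\delta) = O(1/k)$. The unit-square constraints are unshifted, which is allowed since their distance shift is $0 \le \delta$.
\end{itemize}
Hence the error is $O(1/k) = O(1/\sqrt{q})$.

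The only delicate step is checking that Lemma~\ref{lem-enlarge} applies. It needs $C_0 \subseteq [0,1]^2$, which holds because of the four box constraints. It also needs each $H_{\vec{v}_i} \supseteq C_0$. This follows from $L_{\vec{v}_i} \subseteq H_{\vec{v}_i}$, since the two halfplanes are parallel with the same normal and $H_{\vec{v}_i}$ contains $P$, hence contains the minimal such halfplane. Degenerate cases, such as $P$ empty, or the oracle's largest empty index lying at the sweep boundary, are handled trivially by returning $\emptyset$ or by noting that $H_0$ misses $[0,1]^2$.
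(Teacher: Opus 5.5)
Your proposal is correct and follows the paper's proof. You fix the directions of Algorithm~\ref{alg-nonadahplane}, simulate each $\delta$-accurate extreme halfplane query non-adaptively via Lemma~\ref{lem-extremetoemptiness}, and split the error into an $O(1/k)$ term from Lemma~\ref{lem-CandCHP} and an $O(\delta)$ term from Lemma~\ref{lem-enlarge}. The only difference is parametrization: you take $k \approx \sqrt{q}/c$ directions and $\delta = 1/k$ directly, while the paper takes $q$ directions and $\delta = 1/q$ to get error $O(1/q)$ with $O(q^2)$ queries, then rescales.
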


Next, we show that any deterministic algorithm that performs $q$ non-adaptive halfplane queries must have error $\Omega(1/\sqrt{q})$.
Consider an algorithm $\mathbf{A}$.
As the $q$ queries performed by $\mathbf{A}$ are non-adaptive, they are independent of the point set $P$.
Let $H_1,\dots,H_q$ be these queries, each of which is a halfplane in $\mathbb{R}^2$.
For $x,y \in (0,1)$, denote by $\ell_{x,y}$ the line that goes through the points $(x,0)$ and $(0,y)$.
Also, for $X,Y \subseteq (0,1)$, define $\mathcal{L}_{X,Y} = \{\ell_{x,y}: x \in X \text{ and } y \in Y\}$.
%Also, define $p_{x,y,t} = (\frac{x}{2}+t,\frac{y}{2}+t)$ for $t \in \mathbb{R}$.
%That is, $p_{x,y,t}$ is the point with offset $(t,t)$ from the midpoint of $\ell_{x,y} \cap [0,1]^2$, $(\frac{x}{2},\frac{y}{2})$.
Fix a parameter $\delta = \frac{1}{10\sqrt{q}}$, and let $A = \{i\delta: i \in \mathbb{Z}\} \cap [\frac{1}{3},\frac{2}{3}]$.
%For each $(\alpha,\beta)\in I\times I$, let $\sigma_{\alpha,\beta}$ be the segment in $[0,1]^2$ connecting the points $(\alpha,0)$ and $(0,\beta)$.
%Also, define $p_{\alpha,\beta} = (\frac{\alpha}{2}+\frac{\delta}{10},\frac{\beta}{2}+\frac{\delta}{10})$

\begin{lemma} \label{lem-emptybucket}
    There exists $a,b \in A$ such that $\partial H_i \notin \mathcal{L}_{[a,a+\delta),[b,b+\delta)}$ for all $i \in [q]$.
\end{lemma}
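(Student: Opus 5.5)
This is a pigeonhole argument over the grid of buckets indexed by $A \times A$. First observe that each line $\partial H_i$ belongs to at most one family $\mathcal{L}_{[a,a+\delta),[b,b+\delta)}$ with $a,b \in A$.

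To see this, suppose $\partial H_i = \ell_{x,y}$ with $x,y \in (0,1)$. The pair $(x,y)$ is determined uniquely by the line: $x$ is its intersection with the $x$-axis and $y$ its intersection with the $y$-axis. A line through $(x,0)$ and $(0,y)$ with $x,y>0$ is neither horizontal nor vertical and does not pass through the origin, so both intercepts are well defined and unique. Hence if $\ell_{x,y} = \ell_{x',y'}$ then $x = x'$ and $y = y'$.

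Since the elements of $A$ are spaced exactly $\delta$ apart, the half-open intervals $[a,a+\delta)$ for $a \in A$ are pairwise disjoint. Therefore $x$ lies in at most one such interval, and so does $y$. Lines $\partial H_i$ not of the form $\ell_{x,y}$ (for example, lines through the origin or parallel to an axis) lie in no family at all. So each of the $q$ query boundaries occupies at most one of the $|A|^2$ buckets.

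It remains to count the buckets. Since $A = \{i\delta : i \in \mathbb{Z}\} \cap [\frac13,\frac23]$, we have $|A| \geq \frac{1}{3\delta} - 1$. With $\delta = \frac{1}{10\sqrt{q}}$ this gives $\frac{1}{3\delta} = \frac{10\sqrt{q}}{3}$, so $|A| \geq \frac{10\sqrt q}{3} - 1 \geq 2\sqrt{q}$ for $q \geq 1$. Then $|A|^2 \geq 4q > q$. By pigeonhole, some pair $(a,b) \in A \times A$ has an empty bucket, meaning $\partial H_i \notin \mathcal{L}_{[a,a+\delta),[b,b+\delta)}$ for all $i \in [q]$.

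The only points needing care are the uniqueness of the intercept pair, including the degenerate lines just handled, and the floor effect in the size of $A$. Neither is a real obstacle: the constant $10$ in $\delta$ leaves ample slack for the count.
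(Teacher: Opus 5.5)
Your proposal is correct and follows the paper's proof. Both arguments note that the half-open intervals $[a,a+\delta)$, $a \in A$, are pairwise disjoint, so each $\partial H_i$ lies in at most one family $\mathcal{L}_{[a,a+\delta),[b,b+\delta)}$, and then apply pigeonhole using $|A \times A| > q$; your explicit check that the intercept pair is determined by the line, and your more conservative count $|A| \geq \frac{10\sqrt{q}}{3} - 1 \geq 2\sqrt{q}$ (versus the paper's $|A| \geq 3\sqrt{q}$), just make the same argument more careful.
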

\begin{proof}
Observe that for different $a,a' \in A$, the intervals $[a,a+\delta)$ and $[a',a'+\delta)$ are disjoint.
Therefore, $\mathcal{L}_{[a,a+\delta),[b,b+\delta)}$ and $\mathcal{L}_{[a',a'+\delta),[b',b'+\delta)}$ are disjoint, for $(a,b),(a',b') \in A \times A$ with $(a,b) \neq (a',b')$.
It follows that for each $i \in [q]$, there exists at most one pair $(a,b) \in A \times A$ such that $\partial H_i \in \mathcal{L}_{[a,a+\delta),[b,b+\delta)}$; we charge $i$ to $(a,b)$ if this is the case.
By construction, $|A| \geq 3\sqrt{q}$ and thus $|A \times A| \geq 9q > q$.
As such, there exists a pair $(a,b) \in A \times A$ such that no number in $[q]$ is charged to $(a,b)$.
We then have $\partial H_i \notin \mathcal{L}_{[a,a+\delta),[b,b+\delta)}$ for all $i \in [q]$.
\end{proof}

Let $a,b \in A$ be as in the above lemma, and $p^* = (\frac{a}{2}+\frac{\delta}{10},\frac{b}{2}+\frac{\delta}{10})$.
We define two sets $Z = \{(a,0),(0,b)\}$ and $Z' = \{(a,0),(0,b),p^*\}$; see figure \ref{fig:nonadapt_lb}.

\begin{lemma} \label{lem-below}
    The point $p^*$ is below the lines $\ell_{a+\delta,b}$ and $\ell_{a,b+\delta}$.
\end{lemma}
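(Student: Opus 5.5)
We will verify the claim by direct computation with the explicit line equations. The line $\ell_{x,y}$ through $(x,0)$ and $(0,y)$ is $\{(u,w): u/x + w/y = 1\}$. Since $x,y>0$, a point $(u,w)$ lies strictly below it (on the same side as the origin) if and only if $u/x + w/y < 1$. The plan is to substitute $p^* = (\frac{a}{2}+\frac{\delta}{10},\frac{b}{2}+\frac{\delta}{10})$ into this expression for the two lines $\ell_{a+\delta,b}$ and $\ell_{a,b+\delta}$. By the symmetry $(a,x)\leftrightarrow(b,y)$ it suffices to handle $\ell_{a+\delta,b}$; the other case follows by swapping coordinates.

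For $\ell_{a+\delta,b}$ we must show
\[
\frac{a/2+\delta/10}{a+\delta} + \frac{b/2+\delta/10}{b} < 1 .
\]
The first term equals $\frac12 - \frac{\delta/2-\delta/10}{a+\delta} = \frac12 - \frac{2\delta/5}{a+\delta}$. The second term equals $\frac12 + \frac{\delta/10}{b}$. So the inequality reduces to
\[
\frac{\delta}{10b} < \frac{2\delta}{5(a+\delta)}, \qquad\text{equivalently}\qquad a+\delta < 4b .
\]
Since $a,b \in A \subseteq [\frac13,\frac23]$ and $\delta = \frac{1}{10\sqrt q} \le \frac1{10}$, we have $a+\delta \le \frac23+\frac1{10} < \frac43 \le 4b$, which gives the claim. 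The symmetric computation for $\ell_{a,b+\delta}$ yields the condition $b+\delta<4a$, which holds for the same reason.

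There is no real obstacle here. The only points needing care are fixing the correct orientation of ``below'' (we take it to mean the side containing the origin, as above) and recording that the constants $\frac{\delta}{10}$ and the range $[\frac13,\frac23]$ leave ample slack. We might also note for later use that $p^*$ lies strictly above $\ell_{a,b}$: substituting gives $1 + \frac{\delta}{10a}+\frac{\delta}{10b} > 1$. This is presumably what makes $p^*$ contribute $\Omega(\delta)$ area to $\mathcal{CH}(Z')$.
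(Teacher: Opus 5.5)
Your proposal is correct and is essentially the paper's proof. The paper substitutes $p^*$ into the cleared-denominator equation $bx+(a+\delta)y-b(a+\delta)=0$ and gets $\bigl(\frac{a+b+\delta}{10}-\frac{b}{2}\bigr)\delta<0$. You use the intercept form $u/x+w/y<1$ instead; both reduce to $a+\delta<4b$, which holds because $a,b\in[\frac13,\frac23]$, and the second line $\ell_{a,b+\delta}$ follows by symmetry.
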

\begin{proof}
Without loss of generality, we only need to show $p^*$ is below $\ell_{a+\delta,b}$.
The equation of $\ell_{a+\delta,b}$ is $bx+(a+\delta)y - b(a+\delta) = 0$.
Since $a,b \in A$, we have $a,b \in [\frac{1}{3},\frac{2}{3}]$.
So we have
\begin{align*}
    b\left(\frac{a}{2}+\frac{\delta}{10}\right) +(a+\delta) \left(\frac{b}{2}+\frac{\delta}{10}\right) - b(a+\delta) = \left(\frac{a+b +\delta}{10} - \frac{b}{2}\right) \delta < 0,
\end{align*}
which implies that that $p^*$ is below $\ell_{a+\delta,b}$.
\end{proof}

\begin{corollary}
    For all $i \in [q]$, $Z \cap H_i = \emptyset$ iff $Z' \cap H_i = \emptyset$.
\end{corollary}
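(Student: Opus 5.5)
Since $Z \subseteq Z'$ and $Z' \setminus Z = \{p^*\}$, the corollary reduces to the following claim: whenever $p^* \in H_i$, we also have $H_i \cap Z \neq \emptyset$, i.e., $H_i$ contains $(a,0)$ or $(0,b)$. Suppose, for contradiction, that $p^* \in H_i$ but $(a,0),(0,b) \notin H_i$. Then the complement of $H_i$, which is an open halfplane, contains both $(a,0)$ and $(0,b)$. By convexity it contains the segment $\tau$ joining them, while $p^*$ lies in the closed halfplane $H_i$. So $\partial H_i$ separates $\tau$ from $p^*$.

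Next I locate $\partial H_i$. The point $p^*$ lies strictly above $\ell_{a,b}$, since $p^*$ is the midpoint $(a/2,b/2)$ of $\tau$ shifted by $(\delta/10,\delta/10)$. Hence $\partial H_i$ separates the segment $\tau$ from a point just above its midpoint, which forces $\partial H_i$ to cross the segment from $(a,0)$ to $p^*$ and the segment from $(0,b)$ to $p^*$. Therefore $\partial H_i$ meets the $x$-axis at some $(x,0)$ with $x > a$ and the $y$-axis at some $(0,y)$ with $y > b$. The alternative is that $\partial H_i$ is parallel to an axis or has positive slope. In that case one of the endpoints of $\tau$ lies on the same side as $p^*$, because the projection of $p^*$ onto the relevant direction lies between those of the two endpoints. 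I will check this small case by projecting onto the normal vector of $H_i$: $\langle p^*,\vec{n}\rangle$ must exceed both $\langle (a,0),\vec{n}\rangle$ and $\langle (0,b),\vec{n}\rangle$, which forces both components of $\vec{n}$ to be positive.

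So $\partial H_i = \ell_{x,y}$ with $x > a$ and $y > b$, and $p^*$ lies on or above this line. By Lemma~\ref{lem-emptybucket}, $\partial H_i \notin \mathcal{L}_{[a,a+\delta),[b,b+\delta)}$, so $x \geq a+\delta$ or $y \geq b+\delta$. In the first case the line $\ell_{x,y}$ lies weakly above $\ell_{a+\delta,b}$ in the positive quadrant, since both intercepts are at least as large. By Lemma~\ref{lem-below}, $p^*$ is strictly below $\ell_{a+\delta,b}$, hence strictly below $\ell_{x,y}$. This contradicts $p^*$ lying in $H_i$, the side of $\ell_{x,y}$ away from the origin. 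The second case is symmetric, using $\ell_{a,b+\delta}$.

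The main obstacle is the sign bookkeeping: confirming that $H_i$ is the side of $\ell_{x,y}$ not containing the origin, and that ``larger intercepts'' implies ``$p^*$ below'' in the relevant region. This follows from the line equation $x'/x + y'/y = 1$: for fixed $p^*$ with positive coordinates, the quantity $p^*_1/x + p^*_2/y$ is decreasing in both $x$ and $y$. Hence $p^*_1/x + p^*_2/y \le p^*_1/(a+\delta) + p^*_2/b < 1$, where the last inequality is Lemma~\ref{lem-below}, whereas $H_i$ requires this quantity to be $\geq 1$.
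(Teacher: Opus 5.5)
Your proof is correct and follows the same route as the paper. You reduce to showing that $p^* \in H_i$ forces $(a,0)$ or $(0,b)$ into $H_i$, identify $\partial H_i$ as $\ell_{x,y}$ with $x>a$ and $y>b$, and conclude with Lemma~\ref{lem-emptybucket}, Lemma~\ref{lem-below}, and monotonicity in the intercepts. Your normal-vector check is valid (the two strict inequalities sum to $2\cdot\frac{\delta}{10}(n_1+n_2)>0$, and either one then excludes a nonpositive component), and it makes rigorous the step the paper only justifies by saying $p^*$ is ``sufficiently close'' to $\ell_{a,b}$.
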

\begin{proof}
If $p^* \notin H_i$, then $Z \cap H_i = Z' \cap H_i$ and we are done.
So suppose $p^* \in H_i$.
We claim that either $(a,0) \in H_i$ or $(0,b) \in H_i$.
It suffices to show that at least one of $(a,0)$ and $(0,b)$ lies on the same side of $\partial H_i$ as $p^*$.
Assume both $(a,0)$ and $(0,b)$ are on the opposite side of $\partial H_i$ against $p^*$.
Then $p^*$ and the segment $\ell_{a,b} \cap [0,1]^2$ must lie on the opposite side of $\partial H_i$.
But by our construction, $p^*$ is sufficiently close to $\ell_{a,b} \cap [0,1]^2$, which forces $\partial H_i = \ell_{x,y}$ for some $x \in [a,1]$ and $y \in [b,1]$.
By Lemma~\ref{lem-emptybucket}, we cannot have $x \in [a,a+\delta)$ and $y \in [b,b+\delta)$ at the same time.
So either $x \geq a+\delta$ or $y \geq b+\delta$.
Without loss of generality, assume $x \geq a+\delta$.
By Lemma~\ref{lem-below}, $p^*$ is below $\ell_{a+\delta,b}$.
Since $x \geq a+\delta$ and $y \geq b$, $p^*$ is below $\partial H_i$ as well.
But both $(a,0)$ and $(0,b)$ are also below $\partial H_i$, contradicting our assumption.
As such, we have either $(a,0) \in H_i$ or $(0,b) \in H_i$.
It follows that $Z \cap H_i \neq \emptyset$ and $Z' \cap H_i \neq \emptyset$, which completes the proof.
\end{proof}

The rest of the proof is almost the same as that of Theorem~\ref{thm-nonadapt_orth_lb}.
The above lemma shows that the algorithm $\mathbf{A}$ cannot distinguish $Z$ and $Z'$, that is, it returns the same convex polygon when running on $P = Z$ and $P = Z'$.
We only need to observe the following simple fact.
\begin{lemma}
\label{lem-totalvol'}
    $\lVert \mathcal{CH}(Z') \backslash \mathcal{CH}(Z) \rVert = \Omega(\delta)$.
\end{lemma}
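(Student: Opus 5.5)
Since $Z$ consists of only the two points $(a,0)$ and $(0,b)$, its convex hull is the segment joining them and $\lVert \mathcal{CH}(Z) \rVert = 0$. It therefore suffices to show $\lVert \mathcal{CH}(Z') \rVert = \Omega(\delta)$. Here $\mathcal{CH}(Z')$ is the triangle with vertices $(a,0)$, $(0,b)$, $p^*$. I will compute its area as $\frac{1}{2}\cdot|\sigma|\cdot h$, where $\sigma$ is the segment from $(a,0)$ to $(0,b)$ (lying on $\ell_{a,b}$) and $h$ is the distance from $p^*$ to $\ell_{a,b}$.

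First, $|\sigma| = \sqrt{a^2+b^2} \geq \frac{\sqrt{2}}{3}$, since $a,b \in A \subseteq [\frac13,\frac23]$. Second, the line $\ell_{a,b}$ has equation $bx+ay-ab=0$. Substituting $p^* = (\frac{a}{2}+\frac{\delta}{10},\frac{b}{2}+\frac{\delta}{10})$ gives
\begin{equation*}
b\left(\frac{a}{2}+\frac{\delta}{10}\right)+a\left(\frac{b}{2}+\frac{\delta}{10}\right)-ab = \frac{(a+b)\delta}{10}.
\end{equation*}
Hence $h = \frac{(a+b)\delta}{10\sqrt{a^2+b^2}} \geq \frac{(2/3)\delta}{10\cdot(2\sqrt{2}/3)} = \Omega(\delta)$. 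Equivalently, the area equals $\frac{1}{2}\cdot\frac{(a+b)\delta}{10} \geq \frac{\delta}{30}$. In particular $p^*$ lies strictly above $\ell_{a,b}$, so the triangle is nondegenerate.

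Combining the two bounds gives $\lVert \mathcal{CH}(Z') \backslash \mathcal{CH}(Z) \rVert = \lVert \mathcal{CH}(Z') \rVert = \Omega(\delta)$. The argument is a single area computation, so there is no real obstacle; the only step needing care is the sign and size of the point-to-line expression, which the displayed formula settles.
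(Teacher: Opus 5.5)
Your proof is correct and follows the same route as the paper: both reduce to $\lVert \mathcal{CH}(Z') \rVert$ because $\mathcal{CH}(Z)$ is a segment, and both bound the triangle's area by the base $\sqrt{a^2+b^2} = \Omega(1)$ times a height of $\Omega(\delta)$. The only difference is that you compute the height explicitly, getting area exactly $(a+b)\delta/20 \geq \delta/30$, whereas the paper simply asserts that the distance from $p^*$ to $\ell_{a,b}$ is $\Omega(\delta)$ by construction.
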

\begin{proof}
Clearly, $\lVert \mathcal{CH}(Z) \rVert = 0$, it suffices to show $\lVert \mathcal{CH}(Z')\rVert = \Omega(\delta)$.
The distance from $p$ to $\ell_{a,b}$ is $\Omega(\delta)$ by our construction.
Thus, $\lVert \mathcal{CH}(Z')\rVert = \Omega(\delta \cdot \sqrt{a^2+b^2})$.
We have $\sqrt{a^2+b^2} = \Omega(1)$, because $a,b \in [\frac{1}{3},\frac{2}{3}]$.
It follows that $\lVert \mathcal{CH}(Z')\rVert = \Omega(\delta)$.
\end{proof}

Let $C^*$ be the output of $\mathbf{A}$ when running on $P = Z$ or $P = Z'$.
We have $\lVert \mathcal{CH}(Z') \backslash \mathcal{CH}(Z) \rVert \leq \lVert C^* \oplus \mathcal{CH}(Z) \rVert + \lVert C^* \oplus \mathcal{CH}(Z') \rVert$.
Thus, the above lemma implies either $\lVert C^* \oplus \mathcal{CH}(Z) \rVert = \Omega(\delta)$ or $\lVert C^* \oplus \mathcal{CH}(Z') \rVert = \Omega(\delta)$.
As $\delta = \Theta(1/\sqrt{q})$, the error of $\mathbf{A}$ is $\Omega(1/\sqrt{q})$.

\begin{figure}
\centering
\begin{tikzpicture}[scale=7,>=stealth]

% PARAMETERS -----------------------------------------------------------
\def\c{0.15}
\def\eps{\t* \c/1.5}

% Precompute tick positions (4 ticks from c to 1-c)
\pgfmathsetmacro{\t}{(1-2*\c)/4}  % spacing between ticks
\pgfmathsetmacro{\a}{\c + \t}
\pgfmathsetmacro{\b}{\c + 2*\t}

% Derived points
\coordinate (A)  at (\a,0);
\coordinate (Ap) at (\a+\t,0);
\coordinate (B) at (0,\b);
\coordinate (Bp) at (0,\b+\t);
\coordinate (Mid) at ({\a/2},{\b/2});
\coordinate (Pab) at ({\a/2 + \eps},{\b/2 + \eps});

% UNIT SQUARE ---------------------------------------------------------
\draw[thick] (0,0) rectangle (1,1);

\node[left]  at (0,1) {$(0,1)$};
\node[below] at (1,0) {$(1,0)$};

% TICKS ---------------------------------------------------------------
% x-axis
\foreach \i in {0,1,2,3,4}{
    \draw (\c + \i*\t,0.01) -- (\c + \i*\t,-0.01);
}
% y-axis
\foreach \i in {0,1,2,3,4}{
    \draw (0.01,\c + \i*\t) -- (-0.01,\c + \i*\t);
}

% Labels
\node[below] at (\a,0) {$x$};
\node[below] at (\a+\t,0) {$x+\delta$};
\node[left]  at (0,\b) {$y$};
\node[left]  at (0,\b+\t) {$y+\delta$};

\node[below] at (\c,0) {$1/3$};
\node[below] at (1-\c,0) {$2/3$};
\node[left] at (0,\c) {$1/3$};
\node[left] at (0,1-\c) {$2/3$};

% t spacers
%\node at ({\a+\t/2}, -0.07) {$t$};
%\node at (-0.07, {\b+\t/2}) {$t$};

% GRAY CONNECTIONS (FAINT) --------------------------------------------
\foreach \i in {0,1,2,3,4}{
    \foreach \j in {0,1,2,3,4}{
        \draw[very thin,gray!40] (\c + \i*\t,0) -- (0,\c + \j*\t);
    }
}

% DARK IMPORTANT SEGMENTS ---------------------------------------------
\draw[ultra thick, blue] (A) -- (B);
\draw[thick] (Ap) -- (B);
\draw[thick] (A) -- (Bp);

% KEY POINTS -----------------------------------------------------------
\fill (A) circle (0.015);
\fill (B) circle (0.015);
\fill (Ap) circle (0.015);
\fill (Bp) circle (0.015);

\fill (Mid) circle (0.012);
\node[below, xshift=-10pt] at (Mid) {$\left(\frac{x}{2},\frac{y}{2}\right)$};

\fill[red] (Pab) circle (0.012);
\node[right, yshift=5pt] at (Pab) {$p^* = (\frac{x}{2} + \frac{\delta}{10}, \frac{y}{2} + \frac{\delta}{10})$};

\end{tikzpicture}
\caption{Lower bound construction for non-adaptive halfplane queries. The point $p^*$ (red) is offset from the midpoint $(x/2, y/2)$ of $l_{x,y}$ to lie sufficiently far above $l_{x,y}$ (blue) but below both $l_{x+\delta,y}$ and $l_{x,y+\delta}$. Only a segment with both $x$-intercept in the interval $[x, x+\delta]$ and $y$-intercept in the interval $[y, y+\delta]$ can separate both $(x, 0), (0, y)$ from $p^*$.
Doing so for all $\Theta(1/\delta^2) > q$ paired choices of $x, y$ (gray lines) necessitates as many queries.
Range [1/3, 2/3] enlarged for clarity.}
\label{fig:nonadapt_lb}
\end{figure}
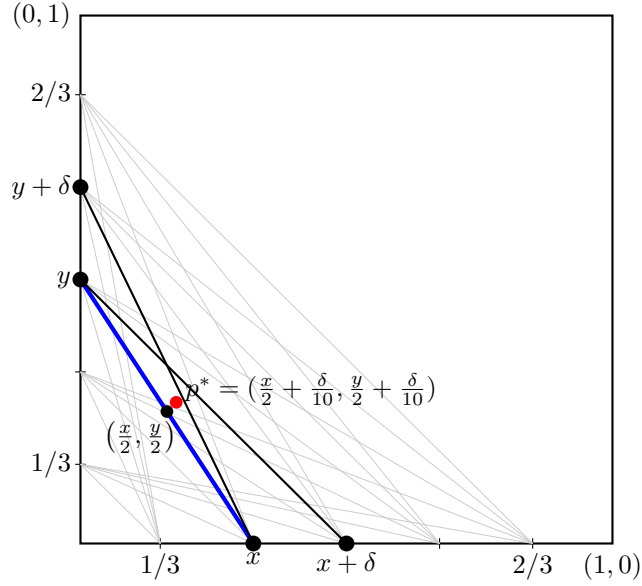

\begin{theorem} \label{thm-nonadapt_hplane_lb}
    Any deterministic algorithm for \textnormal{\sc Approximate Convex Hull} in $\mathbb{R}^2$ that performs $q$ non-adaptive halfplane emptiness queries incurs error $\Omega(1/\sqrt{q})$.
\end{theorem}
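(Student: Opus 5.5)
The plan is to assemble the lemmas just established into the usual indistinguishability argument. Fix a deterministic algorithm $\mathbf{A}$ that makes $q$ non-adaptive halfplane emptiness queries. Because the queries are non-adaptive, they form a fixed list $H_1,\dots,H_q$ that does not depend on $P$. We set $\delta = \frac{1}{10\sqrt{q}}$ and apply Lemma~\ref{lem-emptybucket}. This gives a pair $a,b\in A$ such that no $\partial H_i$ has $x$-intercept in $[a,a+\delta)$ and $y$-intercept in $[b,b+\delta)$ at the same time. We then take $Z=\{(a,0),(0,b)\}$ and $Z'=Z\cup\{p^*\}$ with $p^*=(\frac a2+\frac\delta{10},\frac b2+\frac\delta{10})$.

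Next we show that every query receives the same answer on $Z$ and on $Z'$, which is the corollary following Lemma~\ref{lem-below}. Suppose $p^*\in H_i$ while both $(a,0)$ and $(0,b)$ lie outside $H_i$. Then $\partial H_i$ separates $p^*$ from the whole segment joining $(a,0)$ and $(0,b)$. This separation is the step I expect to need the most care. Since $p^*$ lies at distance only $\Theta(\delta)$ above the midpoint of that segment, in the direction $(1,1)$, the separating line must cross both positive axes with intercepts $x\ge a$ and $y\ge b$. Near-vertical and near-horizontal lines, and lines with negative slope through the cut-off region, have to be excluded by an explicit check of the geometry. Given the intercept bounds, Lemma~\ref{lem-emptybucket} forces $x\ge a+\delta$ or $y\ge b+\delta$. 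Lemma~\ref{lem-below} then puts $p^*$ on the same side as the two points, because moving the intercepts outward only moves the line further from the origin. This is a contradiction, so both $Z\cap H_i$ and $Z'\cap H_i$ are nonempty, and all answers agree.

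Since every answer agrees, $\mathbf{A}$ performs the same computation on $P=Z$ and on $P=Z'$ and returns the same body $C^*$. By Lemma~\ref{lem-totalvol'}, $\lVert\mathcal{CH}(Z')\setminus\mathcal{CH}(Z)\rVert=\Omega(\delta)$. This holds because $p^*$ is at distance $\Omega(\delta)$ from the line $\ell_{a,b}$, which contains $\mathcal{CH}(Z)$, and the segment has length $\sqrt{a^2+b^2}=\Omega(1)$. The triangle inequality for symmetric differences gives $\lVert\mathcal{CH}(Z')\setminus\mathcal{CH}(Z)\rVert\le\lVert C^*\oplus\mathcal{CH}(Z)\rVert+\lVert C^*\oplus\mathcal{CH}(Z')\rVert$. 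Hence one of the two instances incurs error $\Omega(\delta)=\Omega(1/\sqrt q)$, which proves the theorem.
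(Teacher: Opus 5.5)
Your proposal is correct and follows the paper's proof essentially step for step. It uses the same choice of $(a,b)$ via Lemma~\ref{lem-emptybucket}, the same $Z$, $Z'$ and $p^*$, the same separation argument combining Lemma~\ref{lem-emptybucket} with Lemma~\ref{lem-below} (the paper likewise only sketches the intercept step you flag), and the same $\Omega(\delta)$ volume bound with the triangle inequality.

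If you want that intercept step airtight, write $H_i=\{z : n_1z_1+n_2z_2\ge c\}$. The conditions $an_1<c$, $bn_2<c$ and $\langle p^*,(n_1,n_2)\rangle\ge c$ force $n_1+n_2>0$. Then $\frac b2+\frac{\delta}{10}<b$ rules out $n_1\le 0$, and symmetrically $n_2\le 0$. So $\partial H_i$ has intercepts $c/n_1>a$ and $c/n_2>b$.
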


\subsection{Adaptive halfplane queries} \label{sec-adapthplane}

We first present our algorithm with adaptive halfplane emptiness queries.
Let $P \subseteq [0,1]^2$ be a (unknown) set of points.
We shall show how to approximate $\mathcal{CH}(P)$ with error $O(\frac{1}{q^2} + q^2\delta)$ via $O(q \log q)$ adaptive queries to a $\delta$-accurate extreme halfplane oracle $\mathcal{E}$ on $P$.
Combining this with Lemma~\ref{lem-extremetoemptiness} will give us the desired result.

Same as before, for a unit vector $\vec{v} \in \mathbb{S}^1$, let $\textsc{Query}(\vec{v})$ denote the output of $\mathcal{E}$ when queried with $\vec{v}$.
%An extreme halfplane query $\textsc{Query}(\vec{v})$ takes a unit vector $\vec{v} \in \mathbb{S}^1$ as input and returns the minimal halfplane $H$ with normal vector $\vec{v}$ such that $P \subseteq H$.
Our algorithm for approximating $\mathcal{CH}(P)$ is presented in Algorithm~\ref{alg-adahplane}.
We begin with a set $V$ of unit vectors which is initially $\{(1,0),(0,1),(-1,0),(0,-1)\}$.
We query $\mathcal{E}$ with the vectors in $V$ and let $H_{\vec{v}} = \textsc{Query}(\vec{v})$.
Then we iteratively add more vectors to $V$ and query $\mathcal{E}$ with these vectors (the repeat-until loop in the algorithm).
In each round, we first compute $C^* = \bigcap_{\vec{v} \in V} H_{\vec{v}}$ for the current $V$ (line~4).
Then line~5 sorts the vectors in the current $V$ in clockwise order, and let $\vec{v}_1,\dots,\vec{v}_n$ be the sorted sequence.
Let $\vec{v}_0 = \vec{v}_n$ and $\sigma_i = C^* \cap \partial H_{\vec{v}_i}$, which is the edge of $C^*$ corresponding to the halfplane $H_{\vec{v}_i}$.
We consider each pair $(\vec{v}_{i-1},\vec{v}_i)$ of adjacent vectors satisfying $\mathsf{ang}(\vec{v}_{i-1},\vec{v}_i) \cdot |\sigma_i| > \frac{1}{q^2}$, where $\mathsf{ang}(\vec{v}_{i-1},\vec{v}_i)$ denotes the magnitude of the angle from $\vec{v}_{i-1}$ to $\vec{v}_i$.
%Let $\mathsf{ang}(\vec{v}_{i-1},\vec{v}_i)$ denote the magnitude of the angle from $\vec{v}_{i-1}$ to $\vec{v}_i$.
%If $\mathsf{ang}(\vec{v}_{i-1},\vec{v}_i) \cdot |\sigma_i| > \frac{1}{q^2}$, 
For each such pair $(\vec{v}_{i-1},\vec{v}_i)$, we add a new unit vector $\vec{v}$ to $V$, which is the bisector of $\vec{v}_{i-1}$ and $\vec{v}_i$, and query $\mathcal{E}$ with $\vec{v}$ to obtain the halfplane $H_{\vec{v}} = \textsc{Query}(\vec{v})$.
After this, we proceed to the next round.
The procedure terminates when no more vectors are added to $V$ in a round.
At the end, the algorithm returns the convex polygon $C^* \cap [0,1]^2$ as its output.

\begin{algorithm}
    \caption{\textsc{AdaptiveHalfplane}$(q)$}
    \begin{algorithmic}[1]
        \State $V \leftarrow \{(1,0),(0,1),(-1,0),(0,-1)\}$
        \State $H_{\vec{v}} \leftarrow \textsc{Query}(\vec{v})$ for all $\vec{v} \in V$
        \Repeat
            \State $C^* \leftarrow \bigcap_{\vec{v} \in V} H_{\vec{v}}$        
            \State $(\vec{v}_1,\dots,\vec{v}_n) \leftarrow \textsc{Sort}(V)$
            \State $\vec{v}_0 \leftarrow \vec{v}_n$
            %\State $H_i \leftarrow \textsc{Query}(\vec{v}_i)$ for $i \in [n]_0$
            \State $\sigma_i \leftarrow C^* \cap \partial H_{\vec{v}_i}$ for $i \in [n]_0$
            \State $I \leftarrow \{i \in [n]: \mathsf{ang}(\vec{v}_{i-1},\vec{v}_i) \cdot |\sigma_i| > \frac{1}{q^2} \}$
            \For{every $i \in I$}
                \State $\vec{v} \leftarrow (\vec{v}_{i-1}+\vec{v}_i)/ \lVert \vec{v}_{i-1}+\vec{v}_i \rVert_2$
                \State $V \leftarrow V \cup \{\vec{v}\}$
                \State $H_{\vec{v}} \leftarrow \textsc{Query}(\vec{v})$
            \EndFor
        \Until{$I = \emptyset$}
        \State \textbf{return} $C^* \cap [0,1]^2$
    \end{algorithmic}
    \label{alg-adahplane}
\end{algorithm}

\subparagraph{Bounding the number of queries.}
Clearly, the number of queries we made in Algorithm~\ref{alg-adahplane} is just equal to the number of vectors in $V$ at the end of the algorithm.
Suppose the repeat-until loop in Algorithm~\ref{alg-adahplane} has $t$ iterations.
For $i \in [t]$, let $V_i$ be the set $V$ at the beginning of the $i$-th iteration and $C_i^* = \bigcap_{\vec{v} \in V_i} H_{\vec{v}}$ which is just the convex polygon $C^*$ in the $i$-th iteration.
Note that $V_t$ is just the set $V$ at the end of the algorithm, as the algorithm does not add new vectors to $V$ in the last iteration.
%Denote by $A_i$ the set of angles between adjacent vectors in $V_i$.
For convenience, we write $V_0 = \emptyset$.
%we write $V_{t+1}$ and $C_{t+1}^*$ as the set $V$ and the convex polygon $C^*$ at the end of the algorithm, respectively.
%Also, set $V_0 = \emptyset$.
Denote by $\mathcal{A}(V_i)$ the set of angles between adjacent vectors in $V_i$.
We first bound $|V_i \backslash V_{i-1}|$ for $i \in [t]$.
%The proofs in this subsection are included in the appendix.
%and then bound the number $t$ of iterations.

\begin{lemma}
\label{lem-23}
    $|V_i \backslash V_{i-1}| = O(q)$ for all $i \in [t]$.
\end{lemma}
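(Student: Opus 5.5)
The new vectors in round $i$ correspond one-to-one to the indices in the set $I$ computed in that round. So $|V_i \backslash V_{i-1}| = |I|$, where $I$ is computed from $V_{i-1}$ and $C^*_{i-1}$ (for $i=1$ the bound is trivial, since $|V_1|=4$). It therefore suffices to show that, for any single round, at most $O(q)$ indices $j$ satisfy
\[
\mathsf{ang}(\vec{v}_{j-1},\vec{v}_j)\cdot|\sigma_j| > \frac{1}{q^2}.
\]
The plan is a Cauchy--Schwarz (equivalently, AM--GM) counting argument using two budgets: total angle and total perimeter.

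\textbf{Angle budget.} The vectors $\vec{v}_1,\dots,\vec{v}_n$ are sorted clockwise around $\mathbb{S}^1$, so the angles $\mathsf{ang}(\vec{v}_{j-1},\vec{v}_j)$ for $j\in[n]$ partition the full circle. Hence
\[
\sum_{j} \mathsf{ang}(\vec{v}_{j-1},\vec{v}_j) = 2\pi.
\]

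\textbf{Perimeter budget.} Since $V$ always contains the four axis directions, $C^*$ is bounded. The edges $\sigma_j$, some possibly degenerate, are edges of the convex polygon $C^*$. We need $\sum_j |\sigma_j| = O(1)$, which requires $C^*$ to lie in a bounded box. The four axis queries return halfplanes within distance $\delta \le 1$ of the true extreme halfplanes of $P\subseteq[0,1]^2$, so $C^*\subseteq[-\delta,1+\delta]^2\subseteq[-1,2]^2$. The perimeter of a convex polygon contained in a convex set is at most the perimeter of that set, so $\sum_j|\sigma_j|\le 12$.

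\textbf{Combining.} For each $j\in I$ we have $q^{-1} < \sqrt{\mathsf{ang}(\vec{v}_{j-1},\vec{v}_j)\,|\sigma_j|}$. Summing over $j \in I$ and applying Cauchy--Schwarz gives
\[
\frac{|I|}{q} < \sum_{j\in I}\sqrt{\mathsf{ang}(\vec{v}_{j-1},\vec{v}_j)\,|\sigma_j|} \le \Big(\sum_j \mathsf{ang}(\vec{v}_{j-1},\vec{v}_j)\Big)^{1/2}\Big(\sum_j|\sigma_j|\Big)^{1/2} \le \sqrt{24\pi}.
\]
Therefore $|I| = O(q)$, and so $|V_i\backslash V_{i-1}|=O(q)$.

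The only step needing care is the perimeter budget: confirming that $C^*$ stays in a constant-size box and that $\sum_j|\sigma_j|$ counts each edge of $C^*$ at most once. Distinct query directions give distinct supporting lines, and redundant halfplanes give degenerate or empty $\sigma_j$ of length $0$, so this holds. If the oracle's $\delta$ were not assumed to be at most $1$, I would instead intersect with $[0,1]^2$ first or add the box constraints. Since $\delta$ will be set to be small, I expect this to be routine.
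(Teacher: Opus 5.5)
Your proof is correct and uses the same two budgets as the paper's: total angle $2\pi$, and total edge length at most the perimeter of $C^*\subseteq[-\delta,1+\delta]^2$. The paper combines them by observing that $|\alpha_j|\cdot|\tau_j|>\frac{1}{q^2}$ forces $|\alpha_j|>\frac{1}{q}$ or $|\tau_j|>\frac{1}{q}$ and counting each case separately, whereas you apply Cauchy--Schwarz (which gives the slightly better constant $\sqrt{24\pi}$); your explicit justification of the perimeter bound via the four axis-direction queries is a detail the paper leaves implicit in this lemma.
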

\begin{proof}
Clearly, we have $|V_i \backslash V_{i-1}| = |\mathcal{A}(V_{i-1}) \backslash \mathcal{A}(V_i)|$.
Indeed, the vectors in $V_i \backslash V_{i-1}$ are those added to $V$ in the $(i-1)$-th iteration by the algorithm, which one-to-one corresponds to the angles in $\mathcal{A}(V_{i-1})$ that split into two in $\mathcal{A}(V_i)$.
Suppose $\mathcal{A}(V_{i-1}) \backslash \mathcal{A}(V_i) = \{\alpha_1,\dots,\alpha_k\}$.
It suffices to show $k = O(q)$.
For $j \in [k]$, let $\vec{v}_j \in V_{i-1}$ be the vector incident to $\alpha_j$ that is the clockwise boundary of $\alpha_j$.
Define $\tau_j = C_{i-1}^* \cap \partial H_{\vec{v}_j}$.
As the algorithm adds to $V$ a vector in the angle $\alpha_j$ in the $(i-1)$-th iteration, we know that $|\alpha_j| \cdot |\tau_j| > \frac{1}{q^2}$, which implies either $|\alpha_j| > \frac{1}{q}$ or $|\tau_j| > \frac{1}{q}$.
Let $J = \{j \in [k]: |\alpha_j| > \frac{1}{q}\}$ and thus $[k] \backslash J = \{j \in k: |\tau_j| > \frac{1}{q}\}$.
We have $|J| \leq 2\pi q = O(q)$, since $\sum_{j \in J} |\alpha_j| \leq \sum_{j=1}^k |\alpha_j| \leq 2 \pi$.
Next, we observe $k-|J| = O(q)$.
The vectors $\vec{v}_1,\dots,\vec{v}_k$ are distinct, as they are the clockwise boundaries of $\alpha_1,\dots,\alpha_k$, respectively.
Therefore, $\sigma_1,\dots,\sigma_k$ are distinct edges of $C_{i-1}^*$.
It follows that $\sum_{j=1}^k |\tau_j|$ is bounded by the perimeter of $C_{i-1}^*$, which is $O(1)$.
We then have $\sum_{j \in [k] \backslash J} |\tau_j| = O(1)$.
But $|\tau_j|> \frac{1}{q}$ for all $j \in [k] \backslash J$, which implies $k-|J| = |[k] \backslash J| = O(q)$.
Thus, $k = |J| + (k-|J|) = O(q)$.
\end{proof}

The above lemma directly implies $|V_t| = O(qt)$.
Thus, it suffices to bound $t$.
To this end, we establish the following properties for the angles in each $\mathcal{A}(V_i)$.

\begin{lemma} \label{lem-1/q^2}
    For all $i \in [t]$ and $\alpha \in \mathcal{A}(V_i)$, we have $|\alpha| \geq \frac{1}{12q^2}$.
\end{lemma}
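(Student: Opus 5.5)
We argue by induction on $i$, tracking how angles are created. Initially $\mathcal{A}(V_1)$ consists of four right angles, each of magnitude $\pi/2 \geq \frac{1}{12q^2}$. Every later angle arises from bisection: when the algorithm adds the bisector of $\vec{v}_{j-1},\vec{v}_j$, the angle $\alpha$ between them is replaced by two angles of magnitude $|\alpha|/2$. So it suffices to show that whenever an angle $\alpha$ is split, $|\alpha| \geq \frac{1}{6q^2}$. Then both halves have magnitude at least $\frac{1}{12q^2}$, and every angle in every $\mathcal{A}(V_i)$ is either an initial angle or such a half. (If some angle $\alpha$ had magnitude below $\frac{1}{6q^2}$, it would never be split, and this is consistent with the claim.)

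The main step is therefore to show that the splitting condition $|\alpha|\cdot|\sigma| > \frac{1}{q^2}$ forces $|\alpha| > \frac{1}{6q^2}$. Here $\sigma = C^*\cap\partial H_{\vec{v}_i}$ is the edge of the current polygon with normal $\vec{v}_i$. It is enough to bound $|\sigma|$ by an absolute constant, say $|\sigma| \leq 6$; then $|\alpha| > \frac{1}{q^2|\sigma|} \geq \frac{1}{6q^2}$. Since $V$ always contains the four axis directions, $C^*$ is contained in the intersection of the four returned halfplanes $H_{(\pm1,0)}, H_{(0,\pm1)}$. Each of these halfplanes is at distance at most $\delta$ beyond the tight halfplane for $P \subseteq [0,1]^2$. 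Hence $C^* \subseteq [-\delta,1+\delta]^2$, and the diameter of $C^*$, and so $|\sigma|$, is at most $\sqrt{2}(1+2\delta)$. With $\delta \leq 1$ this is at most $3\sqrt{2} < 6$.

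The main obstacle is justifying the constant bound on $|\sigma|$. This uses the accuracy assumption on the oracle, which we need in order to assume $\delta \leq 1$; that holds for the parameter choice $\delta = O(1/q^4)$ intended for the adaptive setting. Alternatively, we can assume that the oracle returns halfplanes whose boundaries meet $[-1,2]^2$, since any larger slack is clipped anyway. With $|\sigma| = O(1)$ in hand, choosing the constant $12$ in the statement absorbs it together with the factor $2$ lost in bisection, and the induction closes.
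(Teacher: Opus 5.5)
Your proof is correct and matches the paper's argument: every non-initial angle is half of an angle $\alpha'$ that passed the splitting test $|\alpha'|\cdot|\sigma| > \frac{1}{q^2}$. The bound $|\sigma| \le 6$, which follows from $C^* \subseteq [-\delta,1+\delta]^2$, then forces $|\alpha'| > \frac{1}{6q^2}$, so each half has magnitude at least $\frac{1}{12q^2}$. The paper states the containment in one line, and you spell it out via the four axis-direction queries and the assumption $\delta \le 1$, which is the same range assumed in Lemma~\ref{lem-enlarge}.
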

\begin{proof}
We apply induction on $i$.
Clearly, $|\alpha| = \frac{\pi}{2} \geq \frac{1}{12q^2}$ for $\alpha \in \mathcal{A}(V_1)$.
Suppose $|\alpha| \geq \frac{1}{12q^2}$ for all $\alpha \in \mathcal{A}(V_{i-1})$.
Consider an angle $\alpha \in \mathcal{A}(V_i)$.
If $\alpha \in \mathcal{A}(V_{i-1})$, we are done.
Otherwise, $\alpha$ is contained in a unique angle $\alpha' \in \mathcal{A}(V_{i-1})$ with $|\alpha'| = 2|\alpha|$.
Let $\vec{v} \in V_{i-1}$ be the vector incident to $\alpha'$ that is the clockwise boundary of $\alpha'$.
Define $\sigma = C_{i-1}^* \cap \partial H_{\vec{v}}$.
We have $|\sigma| \leq 6$, as $\sigma \subseteq [-\delta,1+\delta]^2$.
Because the angle $\alpha'$ is split into two (one of which is $\alpha$) during the $(i-1)$-th iteration, we have $\mathsf{ang}(\vec{v},\vec{v}') \cdot |\sigma| > \frac{1}{q^2}$, which implies $|\alpha'| = \mathsf{ang}(\vec{v},\vec{v}') > \frac{1}{6q^2}$.
As such, $|\alpha| \geq \frac{1}{12q^2}$.
\end{proof}

\begin{lemma} \label{lem-survive}
    Let $\alpha \in \mathcal{A}(V_i)$.
    If $\alpha \in \mathcal{A}(V_{i+1})$, then $\alpha \in \mathcal{A}(V_j)$ for all $j \in \{i,\dots,t\}$.
\end{lemma}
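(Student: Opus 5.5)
Let $\alpha\in\mathcal{A}(V_i)$ with clockwise boundary $\vec{u}$ and counterclockwise boundary $\vec{u}'$. It suffices to prove the one-step statement: if $\alpha\in\mathcal{A}(V_j)$ and $\alpha\in\mathcal{A}(V_{j+1})$, then $\alpha\in\mathcal{A}(V_{j+2})$. Induction on $j$ then gives the lemma. Note first that $\alpha\in\mathcal{A}(V_{j+1})$ means the algorithm did not split $\alpha$ in iteration $j$. Hence, with $\sigma^{(j)}=C_j^*\cap\partial H_{\vec{u}}$ the edge of $C_j^*$ on the supporting line $\partial H_{\vec{u}}$, we have $|\alpha|\cdot|\sigma^{(j)}|\le \frac{1}{q^2}$. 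The angle $|\alpha|$ is unchanged, so we need $|\sigma^{(j+1)}|\le|\sigma^{(j)}|$, where $\sigma^{(j+1)}=C_{j+1}^*\cap\partial H_{\vec{u}}$.

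The key observation is monotonicity of the edges. The halfplane $H_{\vec{u}}$ is queried once and never changes. Moreover $V_j\subseteq V_{j+1}$, so
\[
C_{j+1}^*=C_j^*\cap\bigcap_{\vec{v}\in V_{j+1}\setminus V_j}H_{\vec{v}}\subseteq C_j^*.
\]
Therefore $\sigma^{(j+1)}=C_{j+1}^*\cap\partial H_{\vec{u}}\subseteq C_j^*\cap\partial H_{\vec{u}}=\sigma^{(j)}$, and so $|\sigma^{(j+1)}|\le|\sigma^{(j)}|$. Combined with the unchanged angle, this gives $|\alpha|\cdot|\sigma^{(j+1)}|\le\frac{1}{q^2}$. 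Since $\alpha$ is still an angle between adjacent vectors of $V_{j+1}$ with the same clockwise boundary $\vec{u}$, the index corresponding to $\alpha$ is not placed in $I$ during iteration $j+1$. No vector is inserted strictly inside $\alpha$, because the only insertions are bisectors of angles indexed by $I$, and those angles are disjoint from $\alpha$. Hence $\alpha\in\mathcal{A}(V_{j+2})$.

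The main subtlety is matching the algorithm's indexing: the condition for splitting the angle between $\vec{v}_{i-1}$ and $\vec{v}_i$ uses $\sigma_i$, the edge for one particular boundary vector. I would fix this convention as in the proofs of Lemma~\ref{lem-23} and Lemma~\ref{lem-1/q^2}, where the relevant edge belongs to the clockwise boundary of the angle. The argument works for either boundary, since both halfplanes are fixed and the edges only shrink.

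The edge $\sigma^{(j)}$ might be empty or degenerate if $H_{\vec{u}}$ is not tight for $C_j^*$. In that case $|\sigma^{(j)}|=0$ and the inclusion still applies trivially, so the argument is uniform.

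Finally, for the range $j\in\{i,\dots,t\}$: the statement for $j=i,i+1$ is the hypothesis, and the one-step claim propagates it up to $V_t$. The cases $j=t-1$ or $j=t$ are harmless, because the algorithm halts once $I=\emptyset$.
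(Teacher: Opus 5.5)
Your proposal is correct and follows essentially the same argument as the paper. Both reduce the lemma to the one-step claim and use $C^*_{j+1}\subseteq C^*_j$ to conclude that the edge on the fixed line $\partial H_{\vec{u}}$ can only shrink while $|\alpha|$ is unchanged, so the splitting condition still fails.
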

\begin{proof}
Let $\alpha \in \mathcal{A}(V_i) \cap \mathcal{A}(V_{i+1})$.
It suffices to show $\alpha \in \mathcal{A}(V_{i+2})$, which would implies $\alpha \in \mathcal{A}(V_j)$ for all $j \geq i+2$ by an simple induction argument.
%Suppose $\alpha$ is the angle between $\vec{v},\vec{v}' \in V_i$.
Let $\vec{v} \in V_i$ be the vector incident to $\alpha$ that is the clockwise boundary of $\alpha$ and $\sigma = C_i^* \cap H_{\vec{v}}$.
As $\alpha$ survives in $\mathcal{A}(V_{i+1})$, we have $\mathsf{ang}(\vec{v},\vec{v}') \cdot |\sigma| \leq \frac{1}{q^2}$.
Now notices that $C_{i+1}^* \cap H_{\vec{v}} \subseteq \sigma$, simply because $C_{i+1}^* \subseteq C_i^*$.
This directly implies $\alpha \in \mathcal{A}(V_{i+2})$.
\end{proof}

\begin{lemma} \label{lem-2^i}
    Let $i \in [t]$.
    We have $|\alpha| = \frac{\pi}{2^i}$ for all $\alpha \in \mathcal{A}(V_i) \backslash \mathcal{A}(V_{i+1})$.
    %and $|\alpha| \geq \frac{1}{2q^2}$ for all $\alpha \in \mathcal{A}(V_{t+1})$.
\end{lemma}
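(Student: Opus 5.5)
The plan is to prove, by induction on $i$, the stronger statement that every angle in $\mathcal{A}(V_i)$ has magnitude either exactly $\frac{\pi}{2^i}$ or strictly larger than it. We will also show that angles of magnitude larger than $\frac{\pi}{2^i}$ belong to $\mathcal{A}(V_{i-1})$ and survived into $\mathcal{A}(V_i)$ without being split. The lemma then follows at once. An angle $\alpha \in \mathcal{A}(V_i)\backslash\mathcal{A}(V_{i+1})$ is split in the $i$-th iteration, so it cannot be one of the larger angles: those survived from $\mathcal{A}(V_{i-1})$ into $\mathcal{A}(V_i)$, and Lemma~\ref{lem-survive} forces them to remain in $\mathcal{A}(V_{i+1})$. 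Hence $|\alpha| = \frac{\pi}{2^i}$.

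For the base case $i=1$, $V_1$ is the four axis directions, so every angle in $\mathcal{A}(V_1)$ equals $\frac{\pi}{2}$. For the inductive step, take $\alpha \in \mathcal{A}(V_{i})$ with $i \geq 2$ and split into two cases.
\begin{itemize}
\item If $\alpha \notin \mathcal{A}(V_{i-1})$, then $\alpha$ is one of the two halves produced by bisecting some $\alpha' \in \mathcal{A}(V_{i-1})\backslash\mathcal{A}(V_i)$. The algorithm adds exactly the bisector, so $|\alpha| = |\alpha'|/2$. By the inductive statement applied to the split angle $\alpha'$, we have $|\alpha'| = \frac{\pi}{2^{i-1}}$, hence $|\alpha| = \frac{\pi}{2^i}$.
\item If $\alpha \in \mathcal{A}(V_{i-1})$, then by induction $|\alpha| \geq \frac{\pi}{2^{i-1}} > \frac{\pi}{2^i}$, and it is an angle that survived from $\mathcal{A}(V_{i-1})$ into $\mathcal{A}(V_i)$.
\end{itemize}

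The only point needing care is applying the induction hypothesis to the split angle $\alpha'$ in the first case, which requires that $\alpha'$ was itself split in iteration $i-1$ and hence has magnitude exactly $\frac{\pi}{2^{i-1}}$. To secure this, we prove the cleaner invariant that all angles in $\mathcal{A}(V_i)\backslash\mathcal{A}(V_{i-1})$ have magnitude $\frac{\pi}{2^{i-1}}$, with all of $\mathcal{A}(V_1)$ counted as new and of magnitude $\frac{\pi}{2}$. Together with Lemma~\ref{lem-survive}, this invariant says exactly that an angle can only be split in the round immediately after it was created. Any angle present in two consecutive sets never splits again, so every split angle in round $i$ was created in round $i$, more precisely lies in $\mathcal{A}(V_i)\backslash\mathcal{A}(V_{i-1})$, and each split halves the magnitude.

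The main obstacle is this bookkeeping, namely making sure Lemma~\ref{lem-survive} is invoked with the correct indices. The lemma needs $\alpha \in \mathcal{A}(V_{i-1})\cap\mathcal{A}(V_i)$ in order to conclude $\alpha\in\mathcal{A}(V_{i+1})$. We also need to handle the last index $i=t$ by setting $V_{t+1}=V_t$, so that $\mathcal{A}(V_t)\backslash\mathcal{A}(V_{t+1})=\emptyset$ and the claim is vacuous there.
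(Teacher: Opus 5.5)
Your proof is correct and is essentially the paper's argument. Both induct on $i$, use Lemma~\ref{lem-survive} to show that an angle split in iteration $i$ must be new in $\mathcal{A}(V_i)$, and conclude it is the bisected half of an angle split in iteration $i-1$; the only difference is that the paper takes the lemma's own statement for $i-1$ as the inductive hypothesis instead of your invariant on new angles. One small slip: your ``cleaner invariant'' should say that angles in $\mathcal{A}(V_i)\backslash\mathcal{A}(V_{i-1})$ have magnitude $\frac{\pi}{2^i}$, not $\frac{\pi}{2^{i-1}}$, as your own base case $\frac{\pi}{2}$ and your first bullet already indicate.
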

\begin{proof}
%To show $|\alpha| = \frac{\pi}{2^i}$ for all $\alpha \in \mathcal{A}(V_i) \backslash \mathcal{A}(V_{i+1})$, 
We apply induction on $i$.
When $i = 1$, the equation holds since $|\alpha| = \frac{\pi}{2}$ for all $\alpha \in \mathcal{A}(V_1)$.
Assume $|\alpha| = \frac{\pi}{2^{i-1}}$ for all $\alpha \in \mathcal{A}(V_{i-1}) \backslash \mathcal{A}(V_i)$.
Let $\alpha \in \mathcal{A}(V_i) \backslash \mathcal{A}(V_{i+1})$.
There is a unique angle $\alpha' \in \mathcal{A}(V_{i-1})$ that contains $\alpha$.
As $\alpha \in \mathcal{A}(V_i)$ and $\alpha \notin \mathcal{A}(V_{i+1})$, by Lemma~\ref{lem-survive}, we have $\alpha \notin \mathcal{A}(V_{i-1})$.
Thus, $\alpha \neq \alpha'$ and $|\alpha| = |\alpha'|/2$.
Since $\alpha'$ is split into two smaller angles in $\mathcal{A}(V_i)$ (one of which is $\alpha$), we have $\alpha' \notin \mathcal{A}(V_i)$ and thus $\alpha' \in \mathcal{A}(V_{i-1}) \backslash \mathcal{A}(V_i)$.
By our induction hypothesis, $|\alpha'| = \frac{\pi}{2^{i-1}}$, which implies $|\alpha| = \frac{\pi}{2^i}$.
\end{proof}
Lemma~\ref{lem-1/q^2} and Lemma~\ref{lem-2^i} together imply that $t = O(\log q)$.
Indeed, $\mathcal{A}(V_{i}) \neq \mathcal{A}(V_{i+1})$ for all $i \in [t-1]$, for otherwise the algorithm terminates after the $i$-th iteration.
Thus there exists $\alpha \in \mathcal{A}(V_{t-1}) \backslash \mathcal{A}(V_t)$.
We have $\frac{\pi}{2^{t-1}} = |\alpha| \geq \frac{1}{q^2}$ by Lemma~\ref{lem-1/q^2} and Lemma~\ref{lem-2^i}, which implies $t = O(\log q)$.
It follows that $|V_t| = O(qt) = O(q \log q)$.

\subparagraph{Bounding the error.}
Next, we show that the output $C^* \cap [0,1]^2$ of Algorithm~\ref{alg-adahplane} satisfies $\lVert (C^* \cap [0,1]^2) \oplus \mathcal{CH}(P) \rVert = O(\frac{1}{q^2}+\delta)$.
Let $V = \{\vec{v}_1,\dots,\vec{v}_n\}$ be the set of vectors at the end of Algorithm~\ref{alg-adahplane}, where $\vec{v}_1,\dots,\vec{v}_n$ are sorted in clockwise order.
Set $\vec{v}_0 = \vec{v}_n$.
As in the algorithm, for each $i \in [n]_0$, let $H_{\vec{v}_i} = \textsc{Query}(\vec{v}_i)$ and $\sigma_i = C^* \cap \partial H_{\vec{v}_i}$.
Then $C^* = \bigcap_{i=1}^n H_{\vec{v}_i}$.
We have $|\mathsf{ang}(\vec{v}_{i-1},\vec{v}_i)| \cdot (|\sigma_{i-1}|+|\sigma_i|) > \frac{1}{q^2}$ for all $i \in [n]_0$, for otherwise the repeat-until loop in Algorithm~\ref{alg-adahplane} cannot terminate.
Define $L_{{\vec{v}_i}}$ as the minimal halfplane with normal vector $\vec{v}_i$ that contains $P$ and $C = \bigcap_{i=1}^n L_{{\vec{v}_i}}$.
We have $C \subseteq C^*$, since $L_{\vec{v}_i} \subseteq H_{\vec{v}_i}$ for all $i \in [n]$.
Also, we have $C \subseteq [0,1]^2$, since $V$ consists of the four vectors $(1,0),(0,1),(-1,0),(0,-1)$, which guarantees that $C$ is contained in the axis-parallel bounding box of $P$ and thus contained in $[0,1]^2$.
Now $\mathcal{CH}(P) \subseteq C \subseteq C^* \cap [0,1]^2$.
By Lemma~\ref{lem-enlarge}, $\lVert (C^* \cap [0,1]^2) \backslash C \rVert = O(\delta)$.
So it suffices to show $\lVert C \backslash \mathcal{CH}(P) \rVert = O(\frac{1}{q^2}+\delta)$.
Define $\mu_i = C \cap \partial L_{\vec{v}_i}$ for $i \in [n]$ and $\mu_0 = \mu_n$.

\begin{lemma} \label{lem-mu=sigma+}
    $|\mu_i| \leq |\sigma_i| + O(q^2\delta)$ for all $i \in [n]$.
    %$|\mu_i| \leq |\sigma_i| + O(\delta /\sin \alpha_i)$, where $\alpha_i = \min\{\mathsf{ang}(\vec{v}_{i-1},\vec{v}_i),\mathsf{ang}(\vec{v}_i,\vec{v}_{i+1})\}$.
\end{lemma}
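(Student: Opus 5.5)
We compare the polygons $C = \bigcap_{i} L_{\vec{v}_i}$ and $C^* = \bigcap_i H_{\vec{v}_i}$ edge by edge, using that each $H_{\vec{v}_j}$ is the translate of $L_{\vec{v}_j}$ outward along $\vec{v}_j$ by at most $\delta$. The plan is to express the edge $\mu_i$ of $C$ in terms of its two neighbouring constraints, do the same for $\sigma_i$, and control how much the endpoints move when each constraint line shifts by at most $\delta$. The movement is governed by $1/\sin$ of the angle between consecutive normals. Lemma~\ref{lem-1/q^2} says every angle in $\mathcal{A}(V_t)$ is at least $\frac{1}{12q^2}$, which turns that factor into $O(q^2)$.

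\textbf{Reduction to a local picture.} Fix $i$. The edge $\mu_i$ lies on $\partial L_{\vec{v}_i}$. Its endpoints are where $\partial L_{\vec{v}_i}$ meets the boundary lines of the constraints active at that edge (possibly $\mu_i$ is empty, and then there is nothing to prove). I will use the adjacent normals $\vec{v}_{i-1}$ and $\vec{v}_{i+1}$, and handle inactive neighbours by the monotonicity argument below. Let $\mu_i^{+}$ be the segment of $\partial L_{\vec{v}_i}$ cut out by the lines $\partial L_{\vec{v}_{i-1}}$ and $\partial L_{\vec{v}_{i+1}}$; it contains $\mu_i$. Let $\sigma_i^{+}$ be the analogous segment on $\partial H_{\vec{v}_i}$ cut out by $\partial H_{\vec{v}_{i-1}}$ and $\partial H_{\vec{v}_{i+1}}$. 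Both are edges of local wedges/trapezoids, so their lengths are explicit: the length of such a segment is a fixed combination of the three offsets, with coefficients $\cot$ and $1/\sin$ of the angles $\alpha=\mathsf{ang}(\vec{v}_{i-1},\vec{v}_i)$ and $\beta=\mathsf{ang}(\vec{v}_i,\vec{v}_{i+1})$. Shifting each of the three lines outward by at most $\delta$ therefore changes the length by at most $O(\delta(1/\sin\alpha + 1/\sin\beta))$. The three lines form an edge with a convex turn only when $\alpha,\beta<\pi$, so these sines are well defined. By Lemma~\ref{lem-1/q^2}, $\sin\alpha,\sin\beta=\Omega(1/q^2)$ whenever $\alpha,\beta \le \pi/2$; the only possible initial angles are $\pi/2$, and later angles only get smaller. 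Hence $|\mu_i^{+}| \le |\sigma_i^{+}| + O(q^2\delta)$.

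\textbf{From local segments to true edges.} This step is the main obstacle, because in $C$ or $C^*$ the true neighbours of edge $i$ may be nonadjacent constraints, when some intermediate edges are degenerate or empty. I would argue directly with the endpoints instead. Let $x$ be an endpoint of $\mu_i$, defined by some other line $\partial L_{\vec{v}_j}$ that is active there. Let $x'$ be the point of $\partial H_{\vec{v}_i}$ obtained by moving along the constraint chain. The endpoint of $\sigma_i$ is the intersection of $\partial H_{\vec{v}_i}$ with the first binding shifted constraint, and all candidate constraints have normals between $\vec{v}_{i-1}$ and $\vec{v}_{i+1}$ in angular order. Because all shifts are outward and at most $\delta$, the endpoint of $\sigma_i$ can move inward along the line relative to the projection of $x$ only by an amount bounded by $\delta/\sin(\text{angle to a binding neighbour})$. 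The closest-in-angle neighbour is at angle at least $\frac{1}{12q^2}$, and farther constraints give a larger angle, so the bound improves. Summing over the two endpoints then gives $|\mu_i| \le |\sigma_i| + O(q^2\delta)$.

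I expect the careful case analysis of which constraint determines each endpoint to be the only delicate part. The core estimate, displacement at most $\delta/\sin(\text{angle})$ with the angle at least $1/(12q^2)$, is routine.
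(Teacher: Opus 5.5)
Your second step is essentially the paper's proof. Your first step can be deleted, because it cannot finish the job: $\sigma_i = C^*\cap\partial H_{\vec{v}_i}\subseteq \partial H_{\vec{v}_i}\cap H_{\vec{v}_{i-1}}\cap H_{\vec{v}_{i+1}} = \sigma_i^{+}$, so the inequality $|\mu_i^{+}|\le|\sigma_i^{+}|+O(q^2\delta)$ gives no upper bound on $|\mu_i|$ in terms of $|\sigma_i|$.

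The paper works with the endpoints $a,a'$ of $\mu_i$ and $b,b'$ of $\sigma_i$. It writes $|\mu_i|=|\sigma_i|+h\cot\angle baa'+h\cot\angle b'a'a$ with $h\le\delta$. It then gets $\angle baa'\ge\mathsf{ang}(\vec{v}_{i-1},\vec{v}_i)$ from $a\in C^*$ and the fact that the interior angle of $C^*$ at $b$ is at most $\pi-\mathsf{ang}(\vec{v}_{i-1},\vec{v}_i)$. Your remark that the constraint binding at $b$ may be non-adjacent, but then makes an even larger angle with $\vec{v}_i$, is exactly what justifies that ``at most''.

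The problem is the form of your decisive estimate. The argument needs a \emph{one-sided} bound, and its source is that the endpoint of $\mu_i$ lies in $C\subseteq C^*$, which you never state. Suppose $H_{\vec{v}_j}$ binds at the clockwise endpoint $b'$ of $\sigma_i$, with $\vec{v}_j$ at clockwise angle $\gamma\in(0,\pi)$ from $\vec{v}_i$. Then $a'\in H_{\vec{v}_j}$ forces the projection of $a'$ onto $\partial H_{\vec{v}_i}$ to lie at most $h\cot\gamma$ beyond $b'$. Since $\cot$ is decreasing on $(0,\pi)$ and $\gamma\ge\mathsf{ang}(\vec{v}_i,\vec{v}_{i+1})\ge\frac{1}{12q^2}$ by Lemma~\ref{lem-1/q^2}, this is at most $12q^2\delta$. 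So non-adjacent binding constraints really do only help.

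You use $\delta/\sin\gamma$ instead, which is a two-sided perturbation bound for a vertex. With that quantity, your sentence ``farther constraints give a larger angle, so the bound improves'' is false, because $1/\sin\gamma\to\infty$ as $\gamma\to\pi$. You would then need a separate argument that $\pi-\gamma$ is also $\Omega(1/q^2)$; using $\cot\gamma$ avoids this entirely.

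You also need to handle $\sigma_i=\emptyset$, where $\sigma_i$ has no endpoints. Use instead the unique vertex of $C^*$ extreme in direction $\vec{v}_i$; it lies between $\partial L_{\vec{v}_i}$ and $\partial H_{\vec{v}_i}$ because $\mu_i\subseteq C^*$. The same two cotangent inequalities then give $|\mu_i|=O(q^2\delta)$.

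Two smaller points. The point $x'$ ``obtained by moving along the constraint chain'' plays no role and can be dropped. The candidate normals lie on the arc from $\vec{v}_{i+1}$ to $\vec{v}_{i-1}$ \emph{not} containing $\vec{v}_i$, not between $\vec{v}_{i-1}$ and $\vec{v}_{i+1}$.
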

\begin{proof}
Let $a$ and $a'$ be the endpoints of $\mu_i$, where $a$ (resp., $a'$) is on the counterclockwise (resp., clockwise) side of $\mu_i$ with respect to $C$.
We first consider the general case $\sigma_i \neq \emptyset$ and then discuss the special case $\sigma_i = \emptyset$.
If $\sigma_i \neq \emptyset$, let $b$ and $b'$ be the endpoints of $\sigma_i$, where $b$ (resp., $b'$) is on the counterclockwise (resp., clockwise) side of $\sigma_i$ with respect to $C^*$.
As $\sigma_i \subseteq \partial H_{\vec{v}_i}$ and $\mu_i \subseteq \partial L_{\vec{v}_i}$, the $4$-gon $aa'b'b$ is a trapezoid with parallel edges $\mu_i$ and $\sigma_i$.
Therefore, we have the equation
\begin{equation*}
    |\mu_i| = |\sigma_i| + \frac{h}{\tan \angle baa'} + \frac{h}{\tan \angle b'a'a}.
\end{equation*}
where $h$ is the height of this trapezoid.
We have $h \leq \delta$ since $\mathsf{dist}(\partial H_{\vec{v}_i},\partial L_{\vec{v}_i}) \leq \delta$.
Furthermore, we observe that $|\angle baa'| \geq \mathsf{ang}(\vec{v}_{i-1},\vec{v}_i)$ and $|\angle b'a'a| \geq \mathsf{ang}(\vec{v}_i,\vec{v}_{i+1})$.
Indeed, $\angle baa' = \pi - \angle b'ba$.
Since $a \in C^*$, the angle $\angle b'ba$ is smaller than or equal to the angle of $C^*$ at the vertex $b$, where the latter is just $\pi - \mathsf{ang}(\vec{v}_{i-1},\vec{v}_i)$.
Thus, 
\begin{equation*}
    |\angle baa'| = \pi - |\angle b'ba| \geq \pi -(\pi - \mathsf{ang}(\vec{v}_{i-1},\vec{v}_i)) = \mathsf{ang}(\vec{v}_{i-1},\vec{v}_i).
\end{equation*}
For the same reason, $|\angle b'a'a| \geq \mathsf{ang}(\vec{v}_i,\vec{v}_{i+1})$.
By Lemma~\ref{lem-1/q^2}, it follows that $|\angle baa'| \geq \frac{1}{12q^2}$ and $|\angle b'a'a| \geq \frac{1}{12q^2}$.
As such, $|\mu_i| = |\sigma_i| + O(q^2 \delta)$.

If $\sigma_i = \emptyset$, then $C^*$ is contained in the interior of $H_{\vec{v}_i}$.
In this case, $C^*$ has a unique vertex that is closest to $\partial H_{\vec{v}_i}$.
Let both $b$ and $b'$ be this vertex.
The same argument applies.
\end{proof}

The definition of each $L_{\vec{v}_i}$ guarantees that there exists a point $p_i \in P$ satisfying $p_i \in \partial L_{\vec{v}_i}$.
As $p_i \in P \subseteq C$, we have $p_i \in \mu_i$.
Let $P' = \{p_1,\dots,p_n\}$.
We have $P' \subseteq P$ and thus $\lVert C \backslash \mathcal{CH}(P') \rVert \geq \lVert C \backslash \mathcal{CH}(P) \rVert$.
We shall show $\lVert C \backslash \mathcal{CH}(P') \rVert = O(\frac{1}{q^2}+\delta)$, which bounds $\lVert C \backslash \mathcal{CH}(P) \rVert$ as well.
For each $i \in [n]$, let $a_i$ be the intersection point of $\mu_{i-1}$ and $\mu_i$, and $\tau_i$ be the segment with endpoints $p_{i-1}$ and $p_i$.
Note that $\lVert C \backslash \mathcal{CH}(P') \rVert = \sum_{i=1}^n \lVert \triangle p_{i-1} p_i a_i \rVert$.

\begin{lemma}
    $\lVert \triangle p_{i-1} p_i a_i \rVert = O(|\tau_i| \cdot (\frac{1}{q^2}+q^2\delta))$ for all $i \in [n]$.
\end{lemma}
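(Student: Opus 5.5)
We follow the template of the proof of Lemma~\ref{lem-CandCHP} and bound the area of the triangle through the angle at $p_i$:
\begin{equation*}
\lVert \triangle p_{i-1} p_i a_i \rVert \le \tfrac12 |\tau_i| \cdot |p_i a_i| \cdot \sin\angle a_i p_i p_{i-1} \le |\tau_i| \cdot |\mu_i| \cdot \mathsf{ang}(\vec{v}_{i-1},\vec{v}_i).
\end{equation*}
The second inequality uses two facts. First, $a_i$ and $p_i$ both lie on $\mu_i$, so $|p_i a_i| \le |\mu_i|$. Second, the angle $\angle a_i p_i p_{i-1}$ is at most the exterior angle of $C$ at $a_i$, which is $\mathsf{ang}(\vec{v}_{i-1},\vec{v}_i)$. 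This holds because $p_{i-1}$ lies on the line $\partial L_{\vec{v}_{i-1}}$ inside $C$, so the direction from $p_i$ to $p_{i-1}$ is squeezed between the directions of $\mu_i$ and $\mu_{i-1}$. Thus everything reduces to bounding the product $\mathsf{ang}(\vec{v}_{i-1},\vec{v}_i)\cdot|\mu_i|$ by $O(\frac{1}{q^2}+q^2\delta)$.

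Next we apply Lemma~\ref{lem-mu=sigma+}, which gives $|\mu_i| \le |\sigma_i| + O(q^2\delta)$. Substituting,
\begin{equation*}
\mathsf{ang}(\vec{v}_{i-1},\vec{v}_i)\cdot|\mu_i| \le \mathsf{ang}(\vec{v}_{i-1},\vec{v}_i)\cdot|\sigma_i| + 2\pi\cdot O(q^2\delta).
\end{equation*}
The second term is already $O(q^2\delta)$, since angles are at most $2\pi$. For the first term we use the termination condition of Algorithm~\ref{alg-adahplane}: when the repeat-until loop stops, $I=\emptyset$, so $\mathsf{ang}(\vec{v}_{i-1},\vec{v}_i)\cdot|\sigma_i| \le \frac{1}{q^2}$ for every $i\in[n]$. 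Here $\sigma_i$ is computed with respect to the final $C^*$ and the final sorted sequence, which matches exactly the $\sigma_i$ in Lemma~\ref{lem-mu=sigma+}. Note that the inequality in the text preceding this lemma has the wrong direction; termination gives "$\le$". Combining the two terms yields $\lVert \triangle p_{i-1}p_ia_i\rVert = O(|\tau_i|(\frac{1}{q^2}+q^2\delta))$.

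The main obstacle is the geometric claim that $\sin\angle a_i p_i p_{i-1} \le \mathsf{ang}(\vec{v}_{i-1},\vec{v}_i)$, together with the degenerate cases. These include $p_{i-1}=a_i$ or $p_i=a_i$, where the area is zero, and the case where $\mu_{i-1}$ and $\mu_i$ do not meet at a common vertex because some $L_{\vec{v}_j}$ is redundant. In the latter case we would define $a_i$ as the intersection of the lines $\partial L_{\vec{v}_{i-1}}$ and $\partial L_{\vec{v}_i}$; the triangle still lies inside the wedge bounded by these lines. A cleaner alternative, if needed, is to take the side $|p_{i-1}a_i|$ with the angle at $a_i$. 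In that form the area is at most $\frac12|p_ia_i||p_{i-1}a_i|\sin(\mathsf{ang})$, and we would bound one leg by $|\mu_i|$ and the other by $O(|\tau_i|)$ using the law of sines, since the angle at $a_i$ is obtuse, being at least $\pi-\mathsf{ang}(\vec{v}_{i-1},\vec{v}_i)$.
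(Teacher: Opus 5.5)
Your proposal is correct and is essentially the paper's proof: bound the height of the triangle over $\tau_i$ by $|\mu_i|\sin\angle a_ip_ip_{i-1}\le \mathsf{ang}(\vec v_{i-1},\vec v_i)\cdot|\mu_i|$, then apply Lemma~\ref{lem-mu=sigma+} and the termination condition $\mathsf{ang}(\vec v_{i-1},\vec v_i)\cdot|\sigma_i|\le \frac{1}{q^2}$; you are right that the strict inequality written just before the lemma is a typo for this. Your worry that $\mu_{i-1}$ and $\mu_i$ might not meet is unnecessary: every edge normal of $C=\bigcap_j L_{\vec v_j}$ lies in $V$, and every $\partial L_{\vec v_j}$ touches $P\subseteq C$, so consecutive supporting sets always share the vertex $a_i$.
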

\begin{proof}
By construction, the distance between $p_{i-1}$ and $p_i$ is just $|\tau_i|$.
So we only need to show that the height $h$ of $\triangle p_{i-1} p_i a_i$ with respect to the edge $p_{i-1} p_i$ is bounded by $O(\frac{1}{q^2}+q^2\delta)$.
Let $\theta = \angle a_i p_i p_{i-1}$.
As the distance between $a_i$ and $p_i$ is at most $|\mu_i|$, we have 
\begin{equation*}
    h \leq |\mu_i| \cdot \sin|\theta| \leq |\mu_i| \cdot \theta \leq \mathsf{ang}(\vec{v}_{i-1},\vec{v}_i) \cdot |\mu_i|.
\end{equation*}
By Lemma~\ref{lem-mu=sigma+}, this implies $h = \mathsf{ang}(\vec{v}_{i-1},\vec{v}_i) \cdot |\sigma_i| + O(q^2 \delta)$.
Since $\mathsf{ang}(\vec{v}_{i-1},\vec{v}_i) \cdot |\sigma_i| \leq \frac{1}{q^2}$, we have $h = O(\frac{1}{q^2}+q^2\delta)$ and thus $\lVert \triangle p_{i-1} p_i a_i \rVert = O(|\tau_i| \cdot (\frac{1}{q^2}+q^2\delta))$.
\end{proof}

The above lemma directly implies 
\begin{equation*}
    \lVert C \backslash \mathcal{CH}(P') \rVert = \sum_{i=1}^n \lVert \triangle p_{i-1} p_i a_i \rVert = O((\sum_{i=1}^n |\tau_i|) \cdot (\frac{1}{q^2}+q^2\delta)).
\end{equation*}
As $\tau_1,\dots,\tau_n$ are just the edge of the convex polygon $\mathcal{CH}(P')$, we have $\sum_{i=1}^n |\tau_i| = O(1)$ and thus $\lVert C \backslash \mathcal{CH}(P') \rVert = O(\frac{1}{q^2}+q^2\delta)$.
As Lemma~\ref{lem-enlarge} implies $\lVert (C^* \cap [0,1]^2) \backslash C \rVert = O(\delta)$, we have $\lVert (C^* \cap [0,1]^2) \oplus \mathcal{CH}(P) \rVert \leq \lVert (C^* \cap [0,1]^2) \backslash \mathcal{CH}(P') \rVert = O(\frac{1}{q^2}+q^2\delta)$.
Setting $\delta = \frac{1}{q^4}$, we see that one can approximate $\mathcal{CH}(P)$ with error $O(\frac{1}{q^2})$ via $O(q \log q)$ queries to a $\frac{1}{q^4}$-accurate extreme halfplane oracle for $P$.

Lemma~\ref{lem-extremetoemptiness} shows that a query to a $\frac{1}{q^4}$-accurate extreme halfplane oracle can be simulated with $O(\log q)$ adaptive halfplane emptiness queries.
Therefore, one can approximate $\mathcal{CH}(P)$ with error $O(\frac{1}{q^2})$ via $O(q \log^2 q)$ adaptive halfplane emptiness queries.
Equivalently, if we are allowed to use $q$ queries, then the error achieved can be bounded by $O(\frac{\log^2q}{q^2})$, i.e., $\widetilde{O}(\frac{1}{q^2})$.

\begin{theorem} \label{thm-adapt_hplane_ub}
    There exists an algorithm for \textnormal{\sc Approximate Convex Hull} in $\mathbb{R}^2$ that performs $q$ adaptive halfplane emptiness queries and has error $\widetilde{O}(1/q^2)$.
    %Let $P$ be an unknown set of points in $[0,1]^2$.
    %One can approximate $\mathcal{CH}(P)$ with error $\widetilde{O}(\frac{1}{q^2})$ by applying $q$ adaptive halfplane range-emptiness queries on $P$.
\end{theorem}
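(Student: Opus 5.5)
The proof of Theorem~\ref{thm-adapt_hplane_ub} simply assembles the pieces already established in this subsection; I will state the chain of reductions and then rescale the query budget. The ingredients are three. First, the query count: Lemma~\ref{lem-23} gives $|V_i \backslash V_{i-1}| = O(q)$ per round. Lemmas~\ref{lem-1/q^2} and~\ref{lem-2^i} give $t = O(\log q)$ rounds. Together these show that Algorithm~\ref{alg-adahplane} issues $O(q\log q)$ queries to the extreme halfplane oracle $\mathcal{E}$. Second, the error: $\mathcal{CH}(P) \subseteq C \subseteq C^* \cap [0,1]^2$. Lemma~\ref{lem-enlarge} bounds $\lVert (C^*\cap[0,1]^2)\backslash C\rVert = O(\delta)$. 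Lemma~\ref{lem-mu=sigma+} and the triangle-height lemma bound $\lVert C \backslash \mathcal{CH}(P')\rVert = O(\frac{1}{q^2}+q^2\delta)$, using $\sum_i|\tau_i| = O(1)$. Together these give total error $O(\frac{1}{q^2}+q^2\delta)$ for a $\delta$-accurate oracle.

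Next I fix $\delta = q^{-4}$, so the error becomes $O(1/q^2)$. By the adaptive half of Lemma~\ref{lem-extremetoemptiness}, each query to a $q^{-4}$-accurate extreme halfplane oracle is simulated by binary search with $O(\log q^4) = O(\log q)$ adaptive halfplane emptiness queries. Adaptivity is essential on two levels here: the binary search inside each simulated query is adaptive, and Algorithm~\ref{alg-adahplane} chooses new bisector directions based on earlier answers. The total is therefore $Q = O(q\log^2 q)$ emptiness queries with error $O(1/q^2)$.

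Finally I convert the budget. Given a budget of $Q$ queries, I choose the largest $q$ with $c\,q\log^2 q \le Q$ for the hidden constant $c$. Then $q = \Omega(Q/\log^2 Q)$, and the error is $O(1/q^2) = O(\log^4 Q / Q^2) = \widetilde{O}(1/Q^2)$. The polylog exponent is absorbed by $\widetilde{O}$; the paper's in-text remark of $\log^2$ is loose, but this does not affect the stated theorem. For very small $Q$ the bound is trivial, since the error is always at most $1$. I will also note that every step (hull intersection, sorting, bisecting) runs in time polynomial in $q$.

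The main obstacle, though not a new one, is making sure Lemma~\ref{lem-1/q^2} really holds for the final angle set. That lemma is what makes the $q^2\delta$ blow-up in Lemma~\ref{lem-mu=sigma+} controllable and hence forces the choice $\delta = q^{-4}$ rather than $q^{-2}$. Its proof uses $|\sigma|\le 6$, which requires $C^*$ to stay inside $[-\delta,1+\delta]^2$; this holds because the four axis directions are queried first. With that checked, the theorem follows by the parameter substitution above.
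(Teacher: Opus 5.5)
Your proposal is correct and follows the paper's proof essentially step for step. The query count is $O(q\log q)$ extreme-oracle queries, from the per-round $O(q)$ bound and the $O(\log q)$ round bound. The error is $O(1/q^2+q^2\delta)$, from the enlargement and edge-length lemmas. Then $\delta=q^{-4}$, binary-search simulation, and rescaling of the budget finish it, and your $\log^4 Q/Q^2$ is exactly what the $O(q\log^2 q)$ count yields. The paper's $\log^2$ is therefore under-justified by its own count rather than loose, but it can be recovered: round $i$ only splits angles of size $\pi/2^i$, so it splits $O(\min\{2^i,q^2/2^i\})$ of them, for $O(q)$ extreme queries in total. Either way the $\widetilde{O}(1/Q^2)$ claim stands.
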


We remark that replacing the threshold $\frac{1}{q^2}$ in Algorithm~\ref{alg-adahplane} with an even smaller number cannot result in an improvement for Theorem~\ref{thm-adapt_hplane_ub}: while it can decrease the error of the output, it also increases the number of queries substantially.
%Next, we show that the bound in the above theorem is tight up to logarithmic factors.
%Specifically, we prove that any deterministic algorithm with $q$ adaptive halfplane range-emptiness queries must have error $\Omega(\frac{1}{q^2})$ in the worst case.
In fact, the bound $\widetilde{O}(1/q^2)$ is already tight up to logarithmic factors.
%The proof is deferred to the appendix.
We prove an even stronger statement: any deterministic algorithm with $q$ adaptive queries to a $0$-accurate extreme halfplane oracle must have error $\Omega(\frac{1}{q^2})$.

For a set $P$ of points in $[0,1]^2$, denote by $\mathcal{O}_P$ the $0$-accurate extreme halfplane oracle on $P$ and denote by $\mathcal{O}_P(\vec{v})$ the halfplane returned by $\mathcal{O}_P$ when queried by $\vec{v} \in \mathbb{S}^1$.
Consider an algorithm $\mathbf{A}$ that approximates $\mathcal{CH}(P)$ for an unknown set $P$ of points in $[0,1]^2$ via applying $q$ queries to $\mathcal{O}_P$.
Let $D$ be the maximum disk inside $[0,1]^2$ and $Z$ be a set of $4q$ points uniformly distributed on $\partial D$.
%Define $A = \{a_1,\dots,a_{4q}\}$ and $Z = \mathcal{CH}(A)$, which is a regular $4q$-gon.
%and $Z' = \mathcal{CH}(\{a_1,a_3,\dots,a_{4q-1}\})$.
%Then $Z$ is a regular $4q$-gon and $Z'$ is a regular $2q$-gon.
%Suppose $a_1,\dots,a_{4q} \in \partial D$ are the vertices of $Z$, sorted in clockwise order along $\partial D$.
%Define $Z' = \mathcal{CH}(\{a_1,a_3,\dots,a_{4q-1}\})$, which is a regular $2q$-gon.
Suppose $\vec{v}_1,\dots,\vec{v}_q \in \mathbb{S}^1$ are the $q$ queries applied by $\mathbf{A}$ to the oracle $\mathcal{O}_Z$ when running on $P = Z$, and let $H_i = \mathcal{O}_Z(\vec{v}_i)$ for $i \in [q]$.
We say a point $b \in Z$ is \textit{bad} if $b \in \partial H_i$ for some $i \in [q]$.
Let $B \subseteq Z$ be the set of bad points.
Observe that $|B| \leq 2q$.
Indeed, no three points in $Z$ are collinear and thus each $\partial H_i$ intersects at most two points in $A$, which implies that the number of bad points is at most $2q$.
%Now define $Z' = \mathcal{CH}(B)$.
%We have the following observation.
We first show that $\mathbf{A}$ makes the same queries when run on $\mathcal{O}_B$.

\begin{lemma}
    $\mathcal{O}_B(\vec{v}_i) = H_i$ for all $i \in [q]$.
\end{lemma}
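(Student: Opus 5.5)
Fix $i \in [q]$. The plan is to show that the minimal halfplane with normal $\vec{v}_i$ containing $B$ coincides with the minimal halfplane with normal $\vec{v}_i$ containing $Z$, which is $H_i$ because the oracle is $0$-accurate. Throughout, we write $h_X(\vec{v}) = \max_{p \in X} \langle p, \vec{v} \rangle$ for the support function of a finite set $X$. The minimal halfplane with normal $\vec{v}$ containing $X$ is then $\{p : \langle p,\vec{v}\rangle \le h_X(\vec{v})\}$. So it suffices to prove $h_B(\vec{v}_i) = h_Z(\vec{v}_i)$.

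\textbf{Upper bound.} Since $B \subseteq Z$, we immediately get $h_B(\vec{v}_i) \le h_Z(\vec{v}_i)$.

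\textbf{Lower bound.} Because $Z$ is finite, the maximum defining $h_Z(\vec{v}_i)$ is attained by some point $z \in Z$. That point satisfies $\langle z, \vec{v}_i\rangle = h_Z(\vec{v}_i)$, so $z \in \partial H_i$. By the definition of bad points, $z$ is bad, i.e., $z \in B$. Hence $h_B(\vec{v}_i) \ge \langle z, \vec{v}_i\rangle = h_Z(\vec{v}_i)$.

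\textbf{Conclusion.} Combining the two bounds gives $h_B(\vec{v}_i) = h_Z(\vec{v}_i)$. Therefore $\mathcal{O}_B(\vec{v}_i)$, which has normal vector $\vec{v}_i$ and boundary at offset $h_B(\vec{v}_i)$, is exactly $H_i$. We also need $B \neq \emptyset$ so that $\mathcal{O}_B$ is well defined; this holds whenever $q \ge 1$, by the same argument applied to $\vec{v}_1$.

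There is a subtlety I would state explicitly, and it is the only real obstacle. The queries $\vec{v}_i$ are defined from the run of $\mathbf{A}$ on $\mathcal{O}_Z$, and the lemma is a statement about these fixed vectors. It is the subsequent step, an induction over $i$, that uses this lemma to argue that $\mathbf{A}$ run on $\mathcal{O}_B$ issues the same sequence of queries. So for this lemma no adaptivity argument is needed: it is a pointwise statement for each fixed $\vec{v}_i$, and the proof above suffices.
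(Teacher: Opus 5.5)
Your proof is correct and follows essentially the same route as the paper's. One inclusion comes from $B \subseteq Z$, and the other from the fact that the point of $Z$ extreme in direction $\vec{v}_i$ lies on $\partial H_i$ and is therefore bad, so it belongs to $B$; your support-function formulation is just a cleaner notation for the paper's two-halfplane inclusion argument, and your remark that the query-sequence induction belongs after the lemma matches how the paper uses it.
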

\begin{proof}
Since $B \subseteq Z$, we have $\mathcal{O}_B(\vec{v}_i) \subseteq \mathcal{O}_Z(\vec{v}_i) = H_i$.
As $\mathcal{O}_Z$ is $0$-accurate, $H_i$ is the minimal halfplane with normal vector $\vec{v}_i$ such that $Z \subseteq H_i$.
Thus, $Z \cap \partial H_i \neq \emptyset$, which implies $B \cap \partial H_i \neq \emptyset$.
So $\mathcal{O}_B(\vec{v}_i)$ contains a point on $\partial H_i$, which implies $H_i \subseteq \mathcal{O}_B(\vec{v}_i)$ because $\mathcal{O}_B(\vec{v}_i)$ and $H_i$ are parallel.
\end{proof}

The above lemma implies that when running on $P = B$, the algorithm $\mathbf{A}$ performs exactly the same as running on $P = Z$.
As such, $\mathbf{A}$ cannot distinguish $Z$ and $B$, so the convex polygons output by $\mathbf{A}$ for $P = Z$ and $P = B$ are the same.
%The following observation shows that $C$ differs a lot from either $\mathcal{CH}(Z)$ or $\mathcal{CH}(B)$.

\begin{lemma}
    $\lVert \mathcal{CH}(Z) \backslash \mathcal{CH}(B) \rVert = \Omega(1/q^2)$.
\end{lemma}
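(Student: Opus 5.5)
Since $|B| \leq 2q$ and $|Z| = 4q$, at least $2q$ points of $Z$ are not bad, i.e., lie in $Z \setminus B$. Each such point $z \in Z\setminus B$ is a vertex of the regular $4q$-gon $\mathcal{CH}(Z)$, and removing it from the vertex set cuts off a triangle. The plan is to find $\Omega(q)$ of these points whose cut-off regions are pairwise disjoint and each of area $\Omega(1/q^3)$, giving total area $\Omega(1/q^2)$.

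Concretely, label the points of $Z$ as $z_1,\dots,z_{4q}$ in clockwise order along $\partial D$ (indices mod $4q$). The radius of $D$ is $\frac12$, so consecutive points are at angular distance $\frac{2\pi}{4q} = \frac{\pi}{2q}$. For a non-bad point $z_j$, consider the triangle $T_j = \triangle z_{j-1} z_j z_{j+1}$. Its area is $\frac12 r^2 \cdot (2\sin\theta - \sin 2\theta)$ with $r = \frac12$ and $\theta = \frac{\pi}{2q}$. Since $2\sin\theta - \sin 2\theta = \Theta(\theta^3)$, this area is $\Theta(1/q^3)$.

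Next, argue that $T_j \cap \mathcal{CH}(B)$ has measure zero, i.e., $T_j \subseteq \mathcal{CH}(Z)\setminus \mathcal{CH}(B)$ up to boundary. Because $z_j \notin B$ and $B \subseteq Z$, $\mathcal{CH}(B) \subseteq \mathcal{CH}(Z \setminus \{z_j\})$. The latter is exactly $\mathcal{CH}(Z)$ with $T_j$ removed (except for the segment $z_{j-1}z_{j+1}$), since $z_j$ is a vertex of a convex polygon whose neighbours are $z_{j-1}, z_{j+1}$.

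To ensure disjointness, select the non-bad points with odd index or the non-bad points with even index, whichever class contains more of them. This gives at least $q$ points. Triangles $T_j$ and $T_{j'}$ with $j, j'$ of the same parity and $j \neq j'$ are interior-disjoint, because each $T_j$ lies within the circular-segment region cut off by the chord $z_{j-1}z_{j+1}$, and these regions only share boundary for indices differing by $2$. Summing over the selected points gives $\lVert \mathcal{CH}(Z)\setminus\mathcal{CH}(B)\rVert \geq q\cdot \Omega(1/q^3) = \Omega(1/q^2)$.

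The main obstacle is mild: one must check that the cut-off triangles really are disjoint. This is handled by the parity trick, and if needed one can instead use the chord-cap regions directly. The sine-difference estimate is routine, using $2\sin\theta-\sin2\theta = 2\sin\theta(1-\cos\theta)\geq c\theta^3$ for $\theta\leq\pi/2$ (valid once $q \geq 1$).
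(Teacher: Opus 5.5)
Your proof is correct and follows the paper's argument essentially step for step. Both use the triangles $\triangle z_{j-1}z_jz_{j+1}$ at non-bad vertices, which avoid the interior of $\mathcal{CH}(B)$; both select $q$ pairwise non-adjacent non-bad vertices so these triangles are interior-disjoint, and both sum areas of order $\Theta(1/q^3)$. Your parity trick just makes explicit the paper's existence claim for the non-adjacent subset. Your area formula $\frac12 r^2(2\sin\theta-\sin 2\theta)$, with $r=\frac12$ and $\theta=\frac{\pi}{2q}$, is actually more careful than the paper's constant bookkeeping.
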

\begin{proof}
Suppose $Z = \{a_1,\dots,a_{4q}\}$ where $a_1,\dots,a_{4q}$ are sorted along $\partial D$.
For $i \in [4q]$, write $\triangle_i = \triangle a_{i-1} a_i a_{i+1}$.
All $\triangle_i$'s have the same area, which we denote by $\eta$.
%We have $\triangle_i \subseteq Z$ for all $i \in [4q]$.
Observe that for each $i \in [4q]$, $\triangle_i$ is disjoint from the interior of $\mathcal{CH}(Z \backslash \{a_i\})$.
Thus, for every point $a_i \in Z \backslash B$, the interior of $\triangle_i$ is disjoint from the interior of $\mathcal{CH}(B)$.
Since $|B| \leq 2q$, we have $|Z \backslash B| \geq 2q$.
So there exists a subset $Z' \subseteq Z \backslash B$ with $|Z'| = q$ such that no two points in $Z'$ are adjacent in the circular sequence $(a_1,\dots,a_{4q})$.
Suppose $Z' = \{a_{i_1},\dots,a_{i_q}\}$.
Then the triangles $\triangle_{i_1},\dots,\triangle_{i_q}$ are pairwise interior-disjoint and all of them are disjoint from the interior of $\mathcal{CH}(B)$.
Since $\triangle_{i_1},\dots,\triangle_{i_q} \subseteq Z$, we have $\lVert \mathcal{CH}(Z) \backslash \mathcal{CH}(B) \rVert \geq \sum_{j=1}^q \lVert \triangle_{i_j} \rVert = q\eta$.
It suffices to show $\eta = \Omega(\frac{1}{q^3})$.
Consider a triangle $\triangle_i$.
The longest edge $\sigma$ of $\triangle_i$ is the segment connecting $a_{i-1}$ and $a_{i+1}$, whose length is $2\sin \frac{\pi}{2q}$.
The height of $\triangle_i$ with respect to $\sigma$ is $1-\cos \frac{\pi}{2q}$.
Thus, $\eta = \sin \frac{\pi}{2q} \cdot (1-\cos \frac{\pi}{2q})$.
By Taylor series of $\sin$ and $\cos$, we have $\sin \frac{\pi}{2q} = \frac{\pi}{2q} + o(\frac{1}{q})$ and $\cos \frac{\pi}{2q} = 1- (\frac{2\pi}{q})^2/2 + o(\frac{1}{q^2})$.
It follows that $\eta = \Omega(\frac{1}{q^3})$ and $\lVert \mathcal{CH}(Z) \backslash \mathcal{CH}(B) \rVert = \Omega(\frac{1}{q^2})$.
\end{proof}

Let $C$ be the output of $\mathbf{A}$ when running on $P = Z$, which is also the output when running on $P = B$.
We have $\lVert \mathcal{CH}(Z) \backslash \mathcal{CH}(B) \rVert \leq \lVert C \oplus \mathcal{CH}(Z) \rVert + \lVert C \oplus \mathcal{CH}(B) \rVert$.
Therefore, the above lemma implies that $\lVert C \oplus \mathcal{CH}(Z) \rVert = \Omega(\frac{1}{q^2})$ or $\lVert C \oplus \mathcal{CH}(B) \rVert = \Omega(\frac{1}{q^2})$.
It follows that with $q$ queries to a $0$-accurate extreme halfplane oracle, the minimum error one can achieve is $\Omega(\frac{1}{q^2})$.
As a $0$-accurate extreme halfplane oracle is stronger than a halfplane range-emptiness oracle, we have the following conclusion.

\begin{theorem} \label{thm-adapt_hplane_lb}
    Any deterministic algorithm for \textnormal{\sc Approximate Convex Hull} in $\mathbb{R}^2$ that performs $q$ adaptive halfplane emptiness queries has error $\Omega(1/q^2)$.
\end{theorem}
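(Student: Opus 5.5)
The plan is a reduction to the lower bound just established for the stronger oracle. A $0$-accurate extreme halfplane oracle can simulate any halfplane emptiness query with a single query, so an algorithm using $q$ adaptive emptiness queries becomes an algorithm using $q$ adaptive queries to $\mathcal{O}_P$ with exactly the same output. The preceding argument shows that any such algorithm has error $\Omega(1/q^2)$ on either $P=Z$ or $P=B$, and the theorem then follows.

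\textbf{Step 1: simulation.} Let $H$ be a halfplane query with outward normal $\vec{u}$, written $H=\{p:\langle p,\vec{u}\rangle \ge c\}$. The simulation queries $\mathcal{O}_P(\vec{u})$, which returns the minimal halfplane $\{p:\langle p,\vec{u}\rangle\le m\}$ containing $P$, where $m=\max_{p\in P}\langle p,\vec{u}\rangle$. Then $P\cap H=\emptyset$ holds if and only if $m<c$. For closed $H$, the boundary case $m=c$ is decided correctly because the maximum is attained. Open halfplanes are handled by the analogous strict comparison. Hence each emptiness answer is a deterministic function of one extreme-halfplane answer, adaptivity is preserved, and the simulated algorithm $\mathbf{A}'$ is deterministic, makes $q$ oracle queries, and returns the same convex body as $\mathbf{A}$ on every input $P$.

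\textbf{Step 2: apply the construction.} Run the previous argument on $\mathbf{A}'$. Take $Z$ to be $4q$ equally spaced points on $\partial D$, take $B$ to be the bad points of $\mathbf{A}'$'s queries, and use the two lemmas: $\mathcal{O}_B$ agrees with $\mathcal{O}_Z$ on every query, and $\lVert \mathcal{CH}(Z)\backslash\mathcal{CH}(B)\rVert=\Omega(1/q^2)$. Consequently $\mathbf{A}'$, and therefore $\mathbf{A}$, returns the same $C$ on $Z$ and on $B$. By the triangle inequality for symmetric differences, one of the two errors is $\Omega(1/q^2)$.

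\textbf{Main obstacle.} The only step that needs care is the exactness of the simulation. The emptiness answer has to be recovered exactly, including the case where a point lies on $\partial H$. This works only because the oracle is $0$-accurate, which is why the lower bound was proved for $0$-accurate oracles rather than $\delta$-accurate ones. The rest is a direct invocation of the preceding lemmas.
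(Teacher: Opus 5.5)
Your proposal is correct and follows the paper's own argument. The paper proves the $\Omega(1/q^2)$ bound for $q$ adaptive queries to a $0$-accurate extreme halfplane oracle, using the $4q$ points $Z$ on $\partial D$ and the bad set $B$, and then transfers the bound to emptiness queries by remarking that this oracle is stronger; your Step~1 just makes that transfer explicit, since a single query $\mathcal{O}_P(\vec{u})$ decides $P\cap H=\emptyset$ via the test $m<c$.
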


%\section{Conclusion}
%In this paper, we considered the \textsc{Approximate Convex Hull} problem:
%Given oracle access to an unknown point set $P$, output a convex body $C$ that closely resembles the convex hull $\mathcal{CH}(P)$.
%We presented tight and near-tight bounds on the tradeoffs between the number of queries $q$ made by the algorithm and the worst case error of the approximation, for both orthogonal range emptiness and halfplane emptiness queries in the adaptive and non-adaptive settings.
%A handful of questions remain.
%It would be interesting to see if one could resolve the $\log^{O(1)}q$ gap between the upper and lower bound for adaptive halfplane queries.
%One could also ask about the tradeoffs for other types of range emptiness queries.

\bibliographystyle{plainurl}
\bibliography{my_bib}

%\appendix
%\section{Missing proofs}
%\input{appendix}

\end{document}